\documentclass{amsart}[12pt]
\usepackage{latexsym}
\usepackage{amsmath}
\usepackage{amsfonts}
\usepackage{amssymb}
\usepackage{graphicx}
\usepackage{amsthm}
\usepackage{hyperref}
\usepackage[titletoc,toc]{appendix}

\theoremstyle{plain}
\newtheorem{theorem}{Theorem}[section]
\newtheorem{proposition}[theorem]{Proposition}
\newtheorem{lemma}[theorem]{Lemma}

\theoremstyle{definition}
\newtheorem{definition}[theorem]{Definition}

\newtheorem{remark}[theorem]{Remark}


\numberwithin{equation}{section}

\allowdisplaybreaks[1]

\swapnumbers
\pagestyle{headings}
\parindent = 0 pt

\begin{document}

\title[]{On the backward stability of the \\
Schwarzschild
black hole singularity}

\author{Grigorios Fournodavlos}

\thanks{Mathematics Department, University of Toronto. Email: grifour@math.toronto.edu}

\date{}

\begin{abstract}
We study the backwards-in-time stability of the Schwarzschild singularity from a dynamical PDE point of view. More precisely, considering a spacelike hypersurface $\Sigma_0$ in the interior of the black hole region, tangent to the singular hypersurface $\{r=0\}$ at a single sphere, we study the problem of perturbing the Schwarzschild data on $\Sigma_0$ and solving the Einstein vacuum equations backwards in time.
We obtain a local well-posedness result for small perturbations lying in certain weighted Sobolev spaces. No symmetry assumptions are imposed. The perturbed spacetimes all have a singularity at a ``collapsed'' sphere on $\Sigma_0$, where the leading asymptotics of the curvature and the metric match those of their Schwarzschild counterparts to a suitably high order.
As in the Schwarzschild backward evolution, the pinched initial hypersurface $\Sigma_0$ `opens up' instantly, becoming a smooth spacelike (cylindrical) hypersurface.
This result thus yields classes of examples of non-symmetric vacuum spacetimes, evolving forward-in-time from smooth initial data, which form a Schwarzschild type singularity at a collapsed sphere. We rely on a precise asymptotic analysis of the Schwarzschild geometry near the singularity which turns out to be at the threshold that our energy methods can handle.
\end{abstract}

\maketitle

\tableofcontents

\parskip = 8 pt

\section{Introduction}

It is well-known (cf. Birkhoff's theorem \cite{DafRod}) that the only spherically symmetric solution $(\mathcal{M}^{1+3},g)$ to the Einstein vacuum equations (EVE)
\begin{align}\label{EVE}
\text{Ric}_{ab}(g)=0,
\end{align}
is the celebrated Schwarzschild spacetime. It was in fact the first non-trivial solution to the Einstein field equations to be discovered \cite{DafRod}. In Kruskal (null) $u,v$ coordinates the maximally extended metric reads
\begin{align}\label{Schw}
{^Sg}=-\Omega^2dudv+r^2(d\theta^2+\sin^2\theta d\phi^2),
\end{align}
where $\Omega^2=\frac{32M^3}{r}e^{-\frac{r}{2M}}$, $M>0$, and the radius function $r$ is given implicitly by
\begin{align}\label{uv}
uv=(1-\frac{r}{2M})e^{\frac{r}{2M}}.
\end{align}
\begin{figure}[h!]
  \centering
    \includegraphics[width=0.6\textwidth]{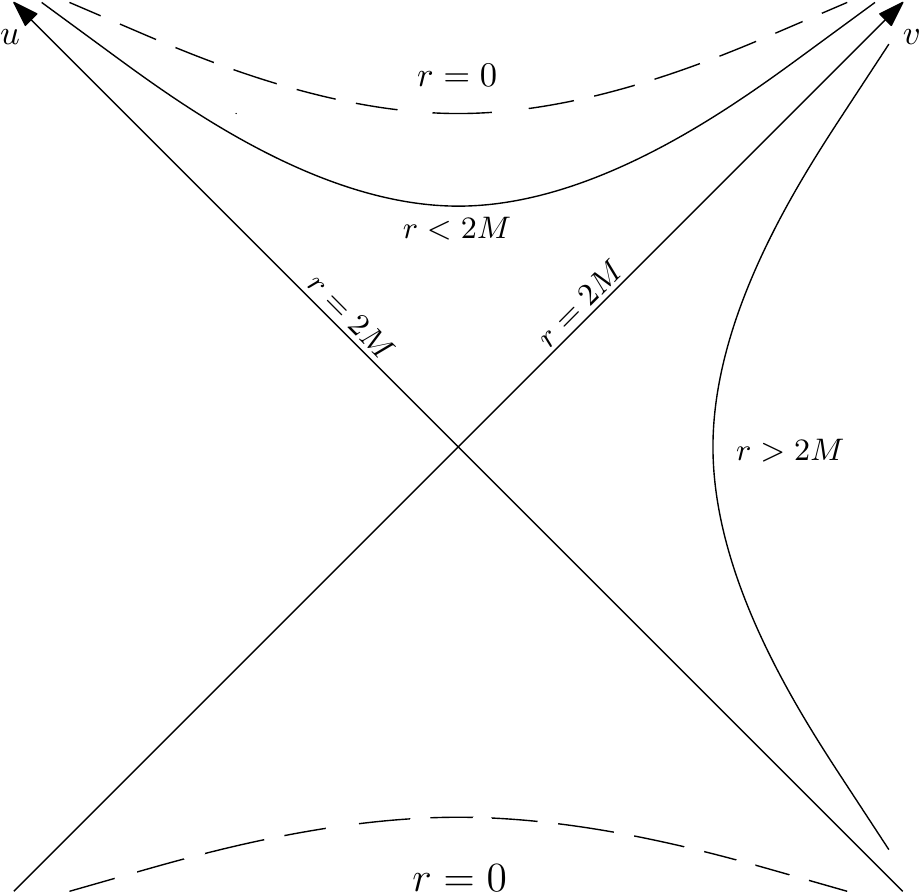}
      \caption{The Kruskal plane.}
\label{Krusk}
\end{figure}
Here the underlying manifold $^S\mathcal{M}^{1+3}$ is endowed with the differential structure of $\mathcal{U}\times\mathbb{S}^2$, where $\mathcal{U}$ is the open subset $\{uv<1\}$ in the $uv$ plane; see Figure \ref{Krusk}.
The spacetime has an essential curvature singularity at $r=0$, (the future component of) which is contained in the interior of the black hole region, the quadrant $u>0,v>0$.
In fact, a short computation shows that the Gauss curvature of the $uv$-plane equals
\begin{align}\label{K}
{^S\text{K}}=\frac{2M}{r^3}
\end{align}
and hence the manifold is $C^2$ inextendible past $r=0$. An interesting feature of this singularity is its spacelike character, that is, it can be viewed as a spacelike hypersurface.

Yet another interesting feature of the Schwarzschild singularity is its unstable nature from the evolutionary dynamical point of view. To illustrate this consider 
a global spacelike Cauchy hypersurface $\Sigma^3$ in Schwarzschild (Figure \ref{SigmaIVP}). An initial data set for the EVE consists of a Riemannian metric $\overline{g}$ on $\Sigma$ and a symmetric two tensor $K$ verifying the constraint equations 
\begin{align}\label{const}
\left\{
\begin{array}{ll}
\overline{\nabla}^jK_{ij}-\overline{\nabla}_i\text{tr}_{\overline{g}}K=0\\
\overline{\text{R}}-|K|^2+(\text{tr}_{\overline{g}}K)^2=0
\end{array},
\right.
\end{align}
where $\overline{\nabla},\overline{\text{R}}$ are the covariant derivative and scalar curvature intrinsic to $\overline{g}$.
\begin{figure}[h!]
  \centering
    \includegraphics[width=0.6\textwidth]{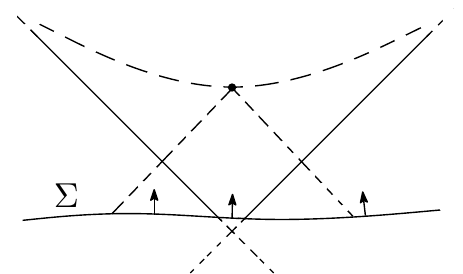}
      \caption{}
\label{SigmaIVP}
\end{figure}

The instability of the Schwarzschild singularity (w.r.t. the forward Cauchy problem) can already be seen by examining the maximal developments of initial data sets on $\Sigma$ arising from the celebrated Kerr \cite{Daf} (explicit) 2-parameter $\mathcal{K}(a,M)$ family of solutions -- of which Schwarzschild is a subfamily ($a=0$). For $a\neq0$ the singularity completely disappears and the corresponding (maximal) developments extend smoothly up to (and including) the Cauchy horizons (which lie far beyond the hypersurface $uv=1$). Moreover, taking $|a|\ll 1$, 
the `difference' of the corresponding initial data sets from the Schwarzschild one (with the same $M>0$), measured in standard Sobolev norms,\footnote{The difference can be defined, for example, component wise for the two pairs of  2-tensors with respect to a common coordinate system and measured in $W^{s,p}$ Sobolev spaces used in the literature \cite{Choq2}.} can be made arbitrarily small. 

In fact, the Schwarzschild singularity is conjecturally unstable under {\it generic} perturbations on $\Sigma$. According to a scenario proposed by Belinski$\check{\i}$-Khalatnikov-Lifshitz \cite{BKL} originally formulated for cosmological singularities, in general, one should expect solutions to exhibit oscillatory behaviour towards the singularity.\footnote{Homogeneous spacetimes with singularities that exhibit oscillatory behaviour have been rigorously studied by Ringstr$\ddot{o}$m \cite{Rin} for the Euler-Einstein system with Bianchi IX symmetry.} Although this work was not supported by much rigorous evidence, it has received a lot of attention over the years, see \cite{Rend} and the references therein (and \cite{Gar} for related numerics). On the other hand, there is a growing expectation that, at least in a neighbourhood of subextremal Kerr, the dominant scenario inside the black hole is the formation of Cauchy horizons and (weak) null singularities. This has been supported by rigorous studies on spherically symmetric charged matter models, see works by Poisson and Israel \cite{IP}, Ori \cite{Ori} and recently by Dafermos \cite{Daf}.

However, it is not inferred from the existing literature whether the non-oscillatory type of singularity observed in Scwarzschild is an isolated phenomenon for the EVE 
in some neighbourhood of the Schwarzschild initial data on 
$\Sigma$ or part of a larger family. A priori it is not clear what to expect, since one might argue that such a special singularity is a mathematical artefact due to spherical symmetry. Therefore, we pose the following question:
\begin{center}
{\it Is there a class of non-spherically symmetric Einstein vacuum spacetimes which develop a first singularity of Schwarzschild type?}
\end{center}
The goal of the present paper is to answer the preceding question in the affirmative. A Schwarzschild type singularity here has the meaning of a first singularity in the vacuum development which has the same geometric blow up profile with Schwarzschild and which can be seen by a foliation of uniformly spacelike hypersurfaces; hence, not contained in a Cauchy horizon.  We confine the question to the formation of one singular sphere in the vacuum development in the same manner as in Schwarzschild, where each point on the sphere can be understood as a distinct ideal singular point of the spacetime in the language of TIPs \cite{GKR}. Ideally one would like to study the forward problem and identify initial data for the EVE on $\Sigma$ (Figure \ref{SigmaIVP}) that lead to such singularities. Although this is a very interesting problem, we find it far beyond reach at the moment. Instead we study the existence problem backwards-in-time.
\begin{figure}[h!]
  \centering
    \includegraphics[width=0.7\textwidth]{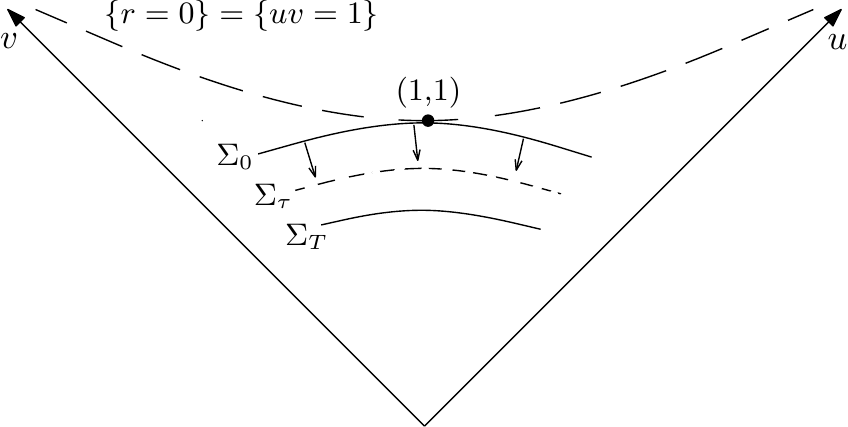}
      \caption{The black hole region in Kruskal's extension.}
\label{genS_fol}
\end{figure}

More precisely, we adopt the following plan: Let $\Sigma_0^3$ be a spacelike hypersurface in Schwarzschild, tangent\footnote{The tangency here should be understood with respect to the differential structure of the Kruskal maximal extension induced by the standard $u,v,\theta,\phi$ coordinates (\ref{Schw}).} at a single sphere of the singular
hyperbola $r=0$ inside the black hole; Figure \ref{genS_fol}. We assume, without loss of generality,\footnote{Recall that the vector field tangent to the $r=\text{const.}$ hypersurfaces (Figure \ref{Krusk}) is Killing and we may hence utilize it to shift $\Sigma_0$ and $(u=1,v=1)$ to whichever point on $\{uv=1\}$ we wish; Figure (\ref{genS_fol}).} that the tangent sphere is $(u=1,v=1$) in Kruskal coordinates (\ref{Schw}). Consider now initial data sets $(\overline{g},K)$ on $\Sigma_0$ for the EVE (\ref{EVE}), which have the same singular behaviour to leading order at $(u=1,v=1)$ with the induced Schwarzschild initial data set $({^S\overline{g}},{^SK})$ on $\Sigma_0$ and solve the EVE backwards, as depicted in the 2-dim Figure \ref{genS_fol}, without symmetry assumptions. 

Realizing the above plan we thus prove the existence of a class of non-spherically symmetric vacuum spacetimes for which $(1)$ the leading asymptotics of the blow up of curvature and in general of all the geometric quantities (metric, second fundamental form etc.) coincide with their Schwarzschild counterparts, as one approaches the singularity, and $(2)$  
the singularity is realized as the limit of uniformly spacelike hypersurfaces, which in the forward direction ``pinch off" in finite time at one sphere. Conversely, we visualise the backward evolution of $(\Sigma_0,\overline{g},K)$ in the following manner: At `time' $\tau=0$ the initial slice $\Sigma_0$ is a two ended spacelike (3-dim) hypersurface with a sphere singularity at
$(u=1,v=1)$. Once $\Sigma_0$ evolves through (\ref{EVE}),
it becomes instantaneously a smooth spacelike hypersurface $\Sigma_\tau,\tau>0$ and the
singular {\it pinch} opens up; Figure \ref{S_fol3D}.
\begin{figure}[h!]
  \centering
    \includegraphics[width=0.6\textwidth]{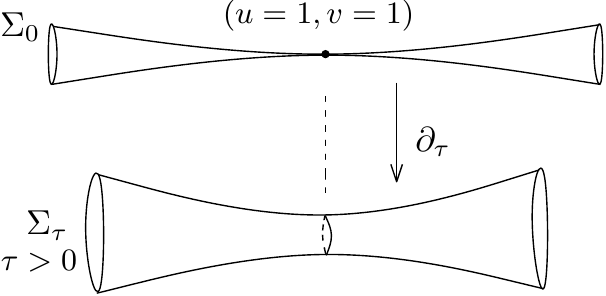}
      \caption{}
\label{S_fol3D}
\end{figure}

The main difficulty to overcome in the backward local existence problem is the essential singularity on $\Sigma_0$, which of course renders it beyond the scope of the classical local existence theorem for the Einstein equations \cite{Choq}, even its latest state of the art improvement by Klainerman-Rodnianski-Szeftel \cite{KRS}, which requires at the very least the curvature of the initial hypersurface to be in $L^2$. For the Schwarzschild initial data set $(^S\overline{g},{^SK})$ on $\Sigma_0$, and hence for perturbed initial data sets $(\overline{g},K)$ with the same leading order geometry at $(u=1,v=1)$, it is not hard to check (\S\ref{Schwarz}) that the initial curvature is at the singular level 
\begin{align}\label{RnotinL2}
\overline{\text{R}}\not\in L^p(\Sigma_0),\qquad\nabla K\not\in L^p(\Sigma_0)&&p\ge\frac{5}{4}.
\end{align}
To our knowledge, the only general local existence results, without symmetry assumptions, for the EVE (\ref{EVE}) with singular initial curvature not in $L^2$ have been achieved only fairly recently by Luk-Rodnianski \cite{LukRod1,LukRod2} and Luk \cite{Luk} for the characteristic initial value problem. 
However, the singular level of the Schwarzschild singularity  is much stronger than the ones they consider; delta curvature singularities \cite{LukRod1,LukRod2}
and weak null singularities \cite{Luk}. Also, since we are interested 
in the spacelike character of the singularity of the perturbed spacetimes we wish to construct, the non-characteristic initial value problem seems to be a more natural setting to consider.

We proceed now to formulate a first version of our main results; for more precise statements, in terms of weighted Sobolev spaces, see Theorems \ref{mainthm}, \ref{mainthm2}, \ref{conststab}.  
\begin{theorem}\label{thmA}
There exists $\alpha>0$ sufficiently large, such that for every triplet
$(\Sigma_0,\overline{g},K)$ verifying:\\
$(i)$ the constraints (\ref{const}),\\
$(ii)$ $\overline{g}={^S\overline{g}}+r^\alpha\mathcal{O}$, $K={^SK}+r^\alpha u$, where $\mathcal{O},u$ are 2-tensors on $\Sigma_0$ bounded in $H^4,H^3$ respectively,\\
$(iii)$ $\|\overline{g}-{^S\overline{g}}\|_{L^\infty(\Sigma_0)}\ll1$,\\
there exists a $H^4$ local solution $g$ to the Einstein vacuum equations (\ref{EVE}) with initial data $(\overline{g},K)$, unique up to isometry, in the backward region to $\Sigma_0$, foliated by $\{\Sigma_\tau\}_{\tau\in[0,T]}$ (Figure \ref{genS_fol}); the time of existence $T>0$ depends continuously on the norms of $\mathcal{O},u$ 
and the exponent $\alpha>0$.
\end{theorem}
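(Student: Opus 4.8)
The plan is to set up the backward evolution as a quasilinear hyperbolic system in a gauge adapted to the singular geometry, derive weighted energy estimates that close despite the curvature blow-up, and then run an iteration/continuity argument. First I would choose good coordinates: work in a synchronized (geodesic-normal / constant-mean-curvature-like) foliation $\{\Sigma_\tau\}$ emanating backward from $\Sigma_0$, writing the metric as ${}^{(4)}g = -N^2 d\tau^2 + g_{\tau}$ and reducing the Einstein vacuum equations to the first-order ADM/evolution system for $(g_\tau, K)$ together with lapse $N$, with the constraints (\ref{const}) propagated. Because the Schwarzschild singularity at the collapsed sphere $(u=1,v=1)$ behaves like $r\to 0$ with the explicit rates from (\ref{K}) and (\ref{uv}) (so $r\sim \tau^{2/3}$ along the geodesic foliation, curvature $\sim r^{-3}$), I would introduce a radial/temporal weight $r$ (equivalently a power of $\tau$) and measure perturbations in weighted Sobolev spaces $r^\alpha H^k$, exactly as in hypothesis $(ii)$. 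The unknown is the difference $h = g - {}^S g$, $\kappa = K - {}^S K$, which by assumption starts at size $r^\alpha$ in $H^4$, $H^3$.

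Next I would linearize the reduced system around the Schwarzschild background and compute the structure of the error/source terms: the key point, flagged in the abstract ("at the threshold that our energy methods can handle"), is that the background coefficients — built from $\Omega^2 = \tfrac{32M^3}{r}e^{-r/2M}$, the second fundamental form of $\Sigma_\tau$, and $r^{-1}$-type factors — are singular as $r\to 0$, so the naive energy identity produces a singular bulk term $\sim \int r^{-1}(\text{energy})\, d\tau$ (or a $\tau^{-1}$ Grönwall weight). I would perform the precise asymptotic analysis of the Schwarzschild metric near the singularity (Section \ref{Schwarz}) to pin down the exact exponents, then design the weighted energy $E_k(\tau) = \sum_{j\le k}\int_{\Sigma_\tau} r^{-2\alpha}\big(|r\,\partial\, \partial^j h|^2 + |r\,\partial^j\kappa|^2 + \dots\big)$ so that the weight $\alpha$, chosen sufficiently large, beats the singular coefficient: the differential inequality becomes $\frac{d}{d\tau}E_k \lesssim \big(\tfrac{C}{\tau} - \tfrac{2\alpha c}{\tau}\big)E_k + (\text{l.o.t.})$, and for $\alpha$ large the bad term has a favorable sign (or is absorbed), giving $E_k(\tau)\le E_k(0)\,\tau^{\mu}$ with $\mu>0$. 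This is the mechanism that simultaneously yields local existence \emph{and} the matching of leading asymptotics in conclusion $(1)$.

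With the a priori weighted estimates in hand, I would close the argument by a standard scheme: (a) solve a sequence of linear problems (or mollified/regularized versions of the reduced quasilinear system) whose coefficients are frozen at the previous iterate, using the weighted energy estimates uniformly in the iteration; (b) show the iterates are Cauchy in a lower-order weighted norm to extract a limit $h\in r^\alpha H^4$; (c) verify the constraints are propagated backward from $\Sigma_0$ — since $(\overline g, K)$ satisfy (\ref{const}) and the reduced equations are equivalent to (\ref{EVE}) once the gauge is consistent, the usual second Bianchi/divergence argument applies, with care that the constraint-propagation equations are themselves well-posed in the weighted spaces; (d) deduce geometric uniqueness up to diffeomorphism by comparing two developments in the chosen gauge. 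Finally, tracking the dependence of the energy constants on $\|\mathcal{O}\|_{H^4}, \|u\|_{H^3}$ and on $\alpha$ gives the claimed continuous dependence of $T>0$. The main obstacle, as indicated, is step (b)'s a priori estimate at the critical threshold: one must extract just enough decay from the weight to compensate a logarithmically-borderline singular coefficient in the top-order energy identity, which forces $\alpha$ large and requires the sharp Schwarzschild asymptotics rather than crude bounds; controlling the lapse $N$ (or the choice of time function) near the collapsed sphere so that the foliation stays uniformly spacelike — conclusion $(2)$ — is the secondary delicate point.
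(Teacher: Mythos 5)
You have correctly identified the engine of the proof — weighted norms in powers of $r$, with the exponent $\alpha$ chosen so large that the good term of size $\alpha$ produced by differentiating the weight in $\tau$ (via $\partial_\tau r\sim \Omega^2/4M$) beats the non-integrable $\|r^{-2}\|_{L^\infty(\Sigma_\tau)}\lesssim \tau^{-1}$ coefficients (cf.\ Remark \ref{Linfty1/r2}) — but the formulation in which you propose to run this estimate has a genuine gap. You reduce to ``the first-order ADM/evolution system for $(g_\tau,K)$ together with lapse $N$'' in a synchronized or CMC-like foliation. The paper rules out precisely these options: a CMC-type gauge is unavailable because the mean curvature of $\Sigma_0$ blows up (indeed $\mathrm{tr}_{\overline g}K\notin L^p$, $p\ge 5/3$, see \S\ref{Schwarz} and (\ref{notL2})), wave/coordinate gauges degenerate at the pinch, and the bare ADM evolution system is not strongly hyperbolic, so there is no principal structure on which your weighted energy identity can be based. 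Moreover, at the level of $(g,K)$ the Schwarzschild data themselves satisfy $K\notin L^2(\Sigma_0)$ (\ref{KnotinL2}), so the formulation must be chosen so that the \emph{unknowns} are regular while the singularity is pushed into explicit background coefficients. The paper does this by going one order higher: it writes the EVE as a quasilinear Yang--Mills-type wave system (\ref{boxA3}) for the connection $1$-forms of an orthonormal frame in a Lorentz gauge (\ref{gauge}), (\ref{gaugemain}), and then subtracts the explicit Schwarzschild connection coefficients (\ref{A_mu_ij}) to obtain the system (\ref{boxu}) for $u=A-{}^SA$ and the frame components $O-I$, whose coefficients $\Gamma_q$ have the sharp asymptotics (\ref{Gamma_q}). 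Your plan of linearizing the metric-level equations about Schwarzschild does not by itself produce such a hyperbolic system, and this is the main missing idea, not a routine reduction.

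A second, more technical gap is in your iteration step (a): you freeze \emph{all} coefficients at the previous iterate. In this problem that destroys the closing mechanism. The critical terms carry weights one power of $r$ worse than the energy being propagated (schematically $E_{3,\alpha+1}$ and $\|O-I\|^2_{H^{3,\alpha+5/2}}$ versus $E_{3,\alpha}$, $\|O-I\|^2_{H^{3,\alpha+3/2}}$); they can only be removed by absorption into the favourable $\alpha$-term, which requires them to involve the \emph{current} unknown. This is why in (\ref{modboxu}) the paper deliberately keeps the most singular linear terms ($\overline O\,\Gamma_{3/2}\partial u$, $\overline O\,\Gamma_3 u$, $\overline O\,\Gamma_{9/2}(O-I)$, $\overline O\,\Gamma_3\partial(O-I)$) in the new iterate and freezes only the rest; with your fully-frozen scheme the source would require control of the previous iterate in strictly stronger weighted norms than are propagated, and the estimates do not close uniformly. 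Relatedly, the contraction must be closed one derivative lower (in $H^{2,\alpha}$, \S\ref{Contr}) because of the quasilinear top-order difference term, and uniqueness up to isometry additionally uses the gauge-construction Lemma \ref{gaugemap} with one extra derivative on the frame — points your sketch passes over but which are where the actual work of Theorem \ref{thmA} (via Theorems \ref{mainthm}, \ref{mainthm2}) is done.
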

The fact that non-trivial initial data sets in compliance with Theorem \ref{thmA} exist is not at all obvious nor standard. It has to be shown essentially that for any large parameter $\alpha>0$, the set of non-spherically symmetric solutions to the constraint equations (\ref{const}), having the asymptotics $(ii)$, is non-empty. This is achieved in the following theorem.
\begin{theorem}\label{thmB}
Let $\alpha>0$ be sufficiently large, consistent with Theorem \ref{thmA}. There exists a small ball $\mathcal{B}\subset H^4(\partial\Sigma_0)\times H^3(\partial\Sigma_0)$, such that for every pair $(\mathcal{O},u)$ of 2-tensors defined on the boundary of $\Sigma_0$, which lie inside that ball, there exists at least one extension of $(\mathcal{O},u)$ as 2-tensors in $\Sigma_0$ satisfying the assumptions of Theorem \ref{thmA}. In particular, the pair $(\overline{g},K)$ defined via $(ii)$ in Theorem \ref{thmA} satisfies the constraints (\ref{const}).
\end{theorem}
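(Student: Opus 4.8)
The plan is to solve the constraint equations \eqref{const} on $\Sigma_0$ by a perturbative argument around the Schwarzschild data, treating the prescribed pair $(\mathcal{O},u)$ on the boundary sphere $\partial\Sigma_0$ (equivalently, the ``collapsed'' sphere $(u=1,v=1)$) as Cauchy/boundary data for an extension problem. Concretely, I would write the unknown as $\overline g = {}^S\overline g + r^\alpha\mathcal{O}$, $K = {}^SK + r^\alpha u$ with $\mathcal{O},u$ now tensors on all of $\Sigma_0$, plug into \eqref{const}, and subtract the fact that $({}^S\overline g,{}^SK)$ already solves the constraints. Since the constraint operator is quasilinear in $(\overline g,K)$, this produces a system of the schematic form $\mathcal{L}(\mathcal{O},u) = \mathcal{N}(\mathcal{O},u)$, where $\mathcal{L}$ is the linearized constraint operator about Schwarzschild (with $r^\alpha$-weights inserted) and $\mathcal{N}$ collects the quadratic-and-higher remainder together with the terms generated by commuting the weight $r^\alpha$ through the derivatives in $\mathcal{L}$. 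The small parameter controlling $\mathcal{N}$ is the $L^\infty$-smallness of $r^\alpha\mathcal{O}$, which is exactly hypothesis $(iii)$ of Theorem \ref{thmA}, together with a large choice of $\alpha$ that makes the weight help rather than hurt near $r=0$.

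The natural next step is to not solve the full elliptic system but to exploit the underdeterminacy of the constraints: there are four equations for the twelve components of $(\overline g, K)$, so one has a large gauge freedom. I would fix all but a controlled number of components by an ansatz — the cleanest is a conformal / York-type decomposition, writing $\overline g = \psi^4\, {}^S\!\overline g$ for a conformal factor $\psi = 1 + r^\alpha\phi$ and splitting $K$ into its Schwarzschild part plus a transverse-traceless piece plus a longitudinal ($\mathcal L_W$) piece built from a vector field $W$. The Hamiltonian constraint then becomes a semilinear elliptic (Lichnerowicz) equation for $\phi$, and the momentum constraint a linear elliptic system for $W$, both on $\Sigma_0$, which is topologically a cylinder $(0,1]\times\mathbb S^2$ (one end at $\partial\Sigma_0$ where $r\to0$, the other where $\Sigma_0$ leaves the black hole region). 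I would set up these equations in the $r^\alpha$-weighted Sobolev spaces dictated by Theorem \ref{thmA}, impose the given boundary values coming from $(\mathcal{O},u)\in\mathcal B$ at $\partial\Sigma_0$, and close a fixed-point argument (contraction mapping on a small ball in the weighted space): linear elliptic estimates with the $r^\alpha$ weight give a bounded right inverse for $\mathcal L$, and the large-$\alpha$-plus-small-$L^\infty$ structure makes $\mathcal N$ a contraction.

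The main obstacle I expect is the behavior of the elliptic theory at the singular end $r\to0$: the Schwarzschild metric ${}^S\!\overline g$ and its curvature degenerate there (cf. \eqref{RnotinL2}), so the linearized constraint operator has singular coefficients, and one must verify that (a) the weight exponent $\alpha$ can be chosen large enough that the singular coefficients are dominated — i.e. that $r^\alpha$ beats the negative powers of $r$ coming from ${}^S\text{K}\sim 2M/r^3$ and from $\nabla{}^SK$ — while simultaneously (b) the weighted space is not so restrictive that it kills the inhomogeneous boundary data, and (c) the relevant weighted Laplacian on the cylinder has no kernel/cokernel obstruction (indicial roots avoided), so that a bounded right inverse genuinely exists. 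This is a degenerate-elliptic boundary value problem, and matching the admissible weight range here with the weight range forced by the hyperbolic evolution in Theorem \ref{thmA} is the delicate point — it is presumably why both theorems require the \emph{same} sufficiently large $\alpha$. A secondary technical point is checking that the conformal ansatz is consistent with requiring $\mathcal{O}\in H^4$, $u\in H^3$ (a derivative of $\psi,W$ is lost relative to their sources), which forces one to run the fixed point at one higher regularity for the potentials; this is routine once the weighted elliptic estimates are in place.
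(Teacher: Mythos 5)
Your perturbative framework --- splitting off the boundary data, linearizing about Schwarzschild in $r^\alpha$-weighted spaces, and closing with a fixed point or implicit function theorem --- matches the paper's in spirit, and the conformal-type ansatz you propose even appears there, cf. (\ref{confdata}): $\overline g=\varphi^4\,{}^S\overline g$, $K=\varphi^2({}^SK+\mathcal L_X{}^S\overline g)$. But there is a genuine gap at the decisive step. You assume that weighted elliptic estimates furnish a bounded right inverse for the linearized Lichnerowicz-plus-vector system (your point (c)); that is precisely what is \emph{not} available by a direct argument. The coefficients of the linearization involve ${}^SK\sim r^{-3/2}$ and ${}^S\overline\nabla\,{}^SK\sim r^{-3}$, and the operator is not coercive in the weighted spaces: the paper obtains coercivity only after adding the large penalization $\frac{\sigma}{x^3}(I_\eta,-I_Y)$ in (\ref{S}), which makes the modified operator $S$ an isomorphism and hence yields only the \emph{Fredholm} property of the conformally reduced linearization $D\hat\Psi$ (Proposition \ref{Fred}) --- not surjectivity, not triviality of the cokernel, and not a right inverse. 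Moreover, solving the nonlinear problem inside the conformal ansatz is the route the paper explicitly declines to take, because the known far-from-CMC conformal-method results require $\mathrm{tr}_{\overline g}K\in L^\infty$, which fails here ($\mathrm{tr}_{\overline g}K\notin L^p$ for $p\ge 5/3$).

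The paper's actual mechanism for surjectivity is indirect and operates on the full constraint map $\Psi$, not on the conformal reduction: Fredholmness of $D\hat\Psi$ forces the range of $D\Psi$ to be closed with finite-dimensional cokernel (since $\mathrm{range}\,D\Psi\supset\mathrm{range}\,D\hat\Psi$); then $\mathrm{range}\,D\Psi=(\ker D\Psi^*)^\perp$, Moncrief's lemma (Lemma \ref{Monc}) identifies $\ker D\Psi^*$ with Killing fields of the vacuum development, and $\alpha$ is chosen large enough that no Schwarzschild Killing field lies in $H^{2,\alpha-\frac{3}{2}}(\Sigma_0)$, so the cokernel vanishes and the implicit function theorem applies to $\Psi$ with the boundary pair as parameter. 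Without some substitute for this step --- for instance an indicial-root and kernel/cokernel analysis of your conformally reduced operator on the singular cylinder, which you flag as ``to be verified'' but do not supply --- the contraction you describe cannot be set up, so the proposal as written does not close.
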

Finally, let us emphasize the fact that the above spacetimes are very special in that they agree with Schwarzschild at the singularity to a high (but finite) order -- this is captured by the large exponent $\alpha>0$ in Theorem \ref{thmA} -- and therefore are {\it non-generic}.
The need to choose $\alpha$ large may be seen however natural to some extend in view of the instability of the Schwarzschild singularity; from the point of view of the forwards-in-time problem. Indeed, the stable perturbations of the Schwarzschild singularity must form a strict subclass of all perturbed vacuum developments.

\subsection{Method of proof and outline}\label{Meth}

The largest part of the paper is concerned with the evolutionary part of the problem, i.e., proving Theorem \ref{thmA}. Due to the singular nature of backward existence problem described above, Figure \ref{genS_fol}, the choice of framework must be carefully considered. The standard wave coordinates approach \cite{Choq} does not seem to be feasible in our situation; one expects that coordinates would be highly degenerate at the singularity. Also, the widely used CMC gauge condition is not applicable, since the mean curvature of the initial hypersurface $\Sigma_0$ blows up (\S\ref{Schwarz}). Instead, we find it more suitable to use orthonormal frames and rewrite the EVE one order higher as a quasilinear Yang-Mills hyperbolic system of equations \cite{Mon2,KRS}, under a Lorentz gauge condition,\footnote{The analogue of a wave gauge for orthonormal frames.} for the corresponding connection 1-forms. We recall briefly this framework in \S\ref{DEE}.

However, even after expressing the EVE in the above framework, the singular level of initial configurations do not permit a direct energy estimate approach. In addition to (\ref{RnotinL2}), one can see (\S\ref{Schwarz}) that neither is the second fundamental form in $L^2$
\begin{align}\label{KnotinL2}
K\not\in L^2(\Sigma_0).
\end{align}
Note that the latter is at the level of one derivative in the metric. Hence, near the singularity the perturbed spacetimes we wish to construct do not even make sense as weak solutions of the EVE (\ref{EVE}).
Therefore, it is crucial that we use the background
Schwarzschild spacetime to recast the evolution equations in a new form having more regular initial data. We do this in \S\ref{stab} by considering a new system of equations for the `difference' between the putative perturbed spacetime and Schwarzschild. The resulting equations have now regular initial data and they are eligible for an energy method, but there is a price to pay. The coefficients of the new system will depend on the Schwarzschild geometry and will necessarily be highly singular at $r=0$. We compute in \S\ref{Schwarz} the precise blow up orders of the Schwarzschild connection coefficients, curvature etc. Nevertheless, the issue of evolving singular initial data has become the more tractable problem of finding appropriate weighted solution spaces for the final singular equations. 

In \S\ref{wHs} we introduce the weighted Sobolev spaces which yield the desired flexibility in proving energy estimates. The right weights are given naturally by the singularities in the coefficients of the resulting equations, namely, powers of the Schwarzschild radius function $r$ with a certain analogy corresponding to the order of each term. After stating the general local existence theorems in \S\ref{main} and a more precise version of Theorem \ref{thmA}, we proceed to its proof via a contraction mapping argument which occupies Section \ref{fixedpt}. Therein we derive the main weighted energy estimates by exploiting the asymptotic analysis at $r=0$ of the Schwarzschild components (\S\ref{Schwarz}).
It is necessary in our result that the power of $r$, $\alpha>0$, in the weighted norms is sufficiently large; cf. assumption $(ii)$ in Theorem \ref{thmA}. In the estimating process certain {\it critical} terms are inevitably generated, because of the singularities in the coefficients of the system we are working with; these terms are critical in that they appear with larger weights than the ones in the energy we are trying to control and thus prevent the estimates from closing. The exponent $\alpha>0$ is then picked sufficiently large such that these critical terms have an overall favourable sign; this allows us to drop the critical terms and close the estimates. 

The largeness of $\alpha$ forces the perturbed spacetime to agree asymptotically with Schwarzschild to a high order at the singularity. Although the latter may seem restrictive, it is quite surprising to us that there even exists a suitable choice of $\alpha$ which makes the argument work in the first place. A closer inspection of our method reveals that it is very sensitive with respect to certain asymptotics of the coefficients in the equations that happen to be just borderline to allow an energy-based argument to close. The most important of these are the blow up order of the sectional curvature (\ref{K}) and the rate of growth of the Schwarzschild radius function $r$ backwards in time. The latter corresponds to the `opening up' rate of the {\it neck pinch} of the singular initial hypersurface $\Sigma_0$, Figure \ref{S_fol3D}. In this sense the Schwarzschild singularity is exactly at the threshold that our energy-based method can tolerate. 

In the last section, \S\ref{singconst}, we study the constraint equations (\ref{const}) in a perturbative manner about the Schwarzschild singular initial data set $(^S\overline{g},{^SK})$ on $\Sigma_0$. First we prescribe boundary data\footnote{The boundary of $\Sigma_0$ consists of two components diffeomorphic to $\mathbb{S}^2$.} at $\partial\Sigma_0$ close to Schwarzschild (in usual Sobolev norms, see Theorem \ref{thmB}) and then look for solutions $(\overline{g},K)$ to constraints inside the class of singular initial data sets that are compatible with our assumptions on the evolutionary problem; cf. Theorem \ref{thmA}. It is very likely that the {\it generic}  behaviour at $(u=1,v=1)$ of solutions to (\ref{const}) with this perturbed data at the boundary of $\Sigma_0$ will not agree with Schwarzschild to a high enough order for Theorem \ref{thmA} to be applicable. However, we prove the existence of solutions for which the asymptotic behaviour required by Theorem \ref{thmA} indeed holds.
We obtain Theorem \ref{thmB} as an application of the implicit function theorem. The proof is based on properties of the constraint map \cite{Mon1}, defined in the weighted Sobolev spaces introduced above. In order to prove the surjectivity of the linearized constraint map around Schwarzschild, we use an indirect argument from \cite{CorSch} adapted to our context. One of the main ingredients in the proof involves the Fredholm property of a linearized operator, see Proposition \ref{Fred}, which we prove in \S\ref{confFred} by deriving certain weighted elliptic estimates.

\subsection{Final Comments; Possible applications}\label{appl}

To our surprise the present evolution bears some resemblance at an analytical level with a prior work on the stability of singular Ricci solitons \cite{ACF}. Although of different nature, hyperbolic/parabolic (respectively), they share a couple of key features such as the opening up rate of the singularity and the ``borderline'' singularities in the coefficients involved.  

The understanding of the question of stability of singularities in Einstein's equations and the behaviour of solutions near them is of great significance in the field. However, in general very little is known. In terms of rigorous results, substantial progress has been made under symmetry assumptions in the presence of matter \cite{Chr1} \cite{Chr2} \cite{Rin} \cite{Daf}. Moreover, certain matter models enjoy the presence of a monotonic quantity, which has been employed to study the stability of singularity formation in the general non-symmetric regime, cf. recent work of Rodnianski-Speck \cite{RodSp} on the FLRW big bang singularity. 
This is in contrast with the vacuum case and the unstable nature of the Schwarzschild singularity. We emphasize again the fact that the method developed herein does not impose any symmetry assumptions nor it relies on any monotonicity. It should be noted as well that it does not depend on whether the particular singularity type is {\it generic} or not. On the contrary, we hope that it can serve as a robust new method which can be generally employed to produce classes of examples of non-symmetric singular solutions to the Einstein field equations, which until now are only known to exist under special symmetry assumptions and for which the general stability question may be out of reach.

After our treatment of singular initial data containing a single sphere of $\{uv=1\}$, a reasonable next step would be to study whether the construction of non-spherically symmetric vacuum spacetimes containing an arc of the singular hyperbola (Figure \ref{genS_fol}) is possible or even the whole singularity $r=0$. Certainly this is a more restrictive question and at first glance not so obvious how to formulate it as a backward initial value problem problem for the EVE. However, we hope that the method developed herein could help approach this direction.

Lastly, one could try to perform a global instead of a local construction by considering a Cauchy hypersurface $\Sigma_0$ extending to spacelike infinity. We expect this follows readily from the work here, but we do not pursue it further. Perhaps a gluing construction could also be achieved.

\subsection{Acknowledgements}

I would like to thank my advisor Spyros Alexakis for his valuable suggestions and interest in this work. I am grateful to Mihalis Dafermos for helpful comments on a preliminary version of the present paper.
Thanks also go to Yannis Angelopoulos, Jesse Gell-Redman, Niky Kamran, Volker Schlue, Arick Shao for useful discussions. The author is supported by an Onassis Foundation Fellowship.

\section{The Einstein equations as a quasilinear Yang-Mills system}\label{DEE}

The Einstein vacuum equations (\ref{EVE}), by virtue of the second Bianchi identity, imply the vanishing of the divergence of the Riemann curvature tensor. 
Decomposing the latter with respect to an orthonormal frame, which satisfies a suitable gauge condition, it results to a quasilinear second order hyperbolic system of equations for the connection 1-forms corresponding to that frame, which bears resemblance to the semilinear Yang-Mills \cite{Mon2}. Recently this formulation of the EVE played a key role in the resolution of the bounded $L^2$ curvature conjecture \cite{KRS}. 
In this section we express the EVE (\ref{EVE}) in the above setting, which we are going to use to directly solve the Cauchy problem. This necessitates some technical details which are carried out in Appendix \ref{app1}. Also, to avoid additional computations we write all equations directly in scalar non-tensorial form.\footnote{It will be clear though which are the covariant expressions; see also \cite{KRS}.} 

\subsection{Cartan formalism}\label{Cart}

Let $(\mathcal{M}^{1+3},g)$ be a Lorenzian manifold and let $\{e_0,e_1,$ 
$e_2,e_3\}$
be an orthonormal frame; $g_{ab}:=m_{ab}=\text{diag}(-1,1,1,1)$. Assume also that $\mathcal{M}^{1+3}$ has the differential structure of $\Sigma\times[0,T]$, where each leaf $\Sigma\times\{\tau\}=:\Sigma_\tau$ is a 3-dim spacelike hypersurface. 
We denote the connection 1-forms associated to the preceding frame by
\begin{align}\label{A}
(A_X)_{ij}:=g(\nabla_Xe_i,e_j)=-(A_X)_{ji},
\end{align}
where $\nabla$ is the $g$-compatible connection of $\mathcal{M}^{1+3}$.
Recall the definition of the Riemann curvature tensor
\begin{align}\label{Riem}
R_{\mu\nu ij}:=g(\nabla_{e_\mu}\nabla_{e_\nu}e_i-\nabla_{e_\nu}\nabla_{e_\mu}e_i,e_j).
\end{align}
By the former definition of connection 1-forms, using $m_{ab}$ to raise and lower indices, we write
\begin{align*}
\nabla_{e_a}e_b={(A_a)_b}^ke_k.
\end{align*}
Hence, we have
\begin{align*}
\nabla_{e_\mu}\nabla_{e_\nu} e_i=&\;\nabla_{e_\mu}(\nabla_{e_\nu}e_i)-
\nabla_{\nabla_{e_\mu}e_\nu}e_i=\nabla_{e_\mu}\big({(A_\nu)_i}^ke_k\big)
-{(A_\mu)_\nu}^k{(A_k)_i}^ce_c\\
=&\;e_\mu{(A_\nu)_i}^ke_k+
{(A_\nu)_i}^k{(A_\mu)_k}^de_d-{(A_\mu)_\nu}^k{(A_k)_i}^ce_c
\end{align*}
Therefore, we get the following expression for the components of the Riemann curvature
\begin{align}\label{Riem2}
R_{\mu\nu ij}=&\;e_\mu(A_\nu)_{ij}
-e_\nu(A_\mu)_{ij}+
{(A_\nu)_i}^k(A_\mu)_{kj}-{(A_\mu)_i}^k(A_\nu)_{kj}\\
\notag&-{(A_\mu)_\nu}^k(A_k)_{ij}+{(A_\nu)_\mu}^k(A_k)_{ij}
\end{align}
or setting
\begin{align}\label{[Amu,Anu]}
([A_\mu,A_\nu])_{ij}=
{(A_\mu)_i}^k(A_\nu)_{kj}-{(A_\nu)_i}^k(A_\mu)_{kj}
\end{align}
we rewrite
\begin{align}\label{Fmunu}
(F_{\mu\nu})_{ij}:=R_{\mu\nu ij}=&\;e_\mu(A_\nu)_{ij}
-e_\nu(A_\mu)_{ij}-([A_\mu,A_\nu])_{ij}-{(A_{[\mu})_{\nu]}}^k(A_k)_{ij},
\end{align}
where by standard convention
\begin{align*}
{(A_{[\mu})_{\nu]}}^k(A_k)_{ij}:=
{(A_\mu)_\nu}^k(A_k)_{ij}-{(A_\nu)_\mu}^k(A_k)_{ij}.
\end{align*}
In the same manner we compute the covariant derivative of the Riemann tensor:
\begin{align}\label{DRiem}
\notag \nabla_\sigma R_{\mu\nu ij}=&\;e_\sigma(F_{\mu\nu})_{ij}
-{(A_\sigma)_\mu}^k(F_{k\nu})_{ij}
-{(A_\sigma)_\nu}^k(F_{\mu k})_{ij}\\
&\notag-{(A_\sigma)_i}^k(F_{\mu\nu})_{kj}
-{(A_\sigma)_j}^k(F_{\mu\nu})_{ik}\\
=&\;e_\sigma(F_{\mu\nu})_{ij}
-{(A_\sigma)^k}_{[\mu}(F_{\nu] k})_{ij}
-([A_\sigma,F_{\mu\nu}])_{ij}
\end{align}

Recall the transformation law of the above quantities under change of frames:
Let $\{\tilde{e}_i\}_0^3$ be an orthonormal frame on $\mathcal{M}^{1+3}$ such that
\begin{align}\label{O}
\tilde{e}_a=O^k_a e_k
\end{align}
and let $(\tilde{A}_X)_{ij}:=g(\nabla_{X}\tilde{e}_i,\tilde{e}_j)$ be the corresponding
connection 1-forms. Then
\begin{align}\label{trans}
(\tilde{A}_X)_{ij}=O^b_iO^c_j(A_X)_{bc}+X(O^b_i)O^c_jm_{bc}.
\end{align}
In addition, from (\ref{O}) we have
\begin{align*}
\nabla_X\tilde{e}_a=&\;X(O^k_a)e_k+O^k_a\nabla_Xe_k\\
{(\tilde{A}_X)_{a}}^d\tilde{e}_d=&\;X(O^k_a)e_k+O^k_a{(A_X)_k}^de_d
\end{align*}
or
\begin{align}\label{X(O)}
X(O^l_a)={(\tilde{A}_X)_{a}}^dO^l_d-O^k_a{(A_X)_k}^l.
\end{align}
\subsection{$\nabla\times\text{Ric}=0$}

Now we proceed by assuming that the {\it curl} of the Ricci tensor of the metric $g$ vanishes:
\begin{align}\label{curlRic}
\nabla_iR_{\nu j}- \nabla_jR_{\nu i}=0,
\end{align}
where $R_{ab}:={R_{\mu ab}}^\mu$.
A direct implication of the (contracted) second Bianchi identity is
that the divergence of the Riemann curvature tensor satisfies
\begin{align}\label{divRiem}
\nabla^\mu R_{ij\nu \mu}= \nabla_iR_{\nu j}- \nabla_jR_{\nu i}=0.
\end{align}
Thus, it follows from (\ref{DRiem}) that
\begin{align}\label{divRiem2}
e^\mu(F_{\mu\nu})_{ij}
-{(A^\mu)^k}_{[\mu}(F_{\nu] k})_{ij}
-([A^\mu,F_{\mu\nu}])_{ij}=0
\end{align}
or by (\ref{Fmunu})
\begin{align}\label{boxA}
&\boxdot(A_\nu)_{ij}
-e^\mu e_\nu(A_\mu)_{ij}-e^\mu([A_\mu,A_\nu])_{ij}-e^\mu\big({(A_{[\mu})_{\nu]}}^k(A_k)_{ij}\big)\\
\notag=&\;{(A^\mu)^k}_{[\mu}(F_{\nu] k})_{ij}
+([A^\mu,F_{\mu\nu}])_{ij},
\end{align}
where $\boxdot:=-e_0^2+e_1^2+e^2_2+e^2_3$ is the non-covariant box with respect to the frame $e_i$. Since
\begin{align*}
[e_\mu,e_\nu]= \nabla_{e_\mu} e_\nu- \nabla_{e_\nu} e_\mu={(A_{[\mu})_{\nu]}}^ke_k,
\end{align*}
(\ref{boxA}) takes the equivalent form
\begin{align}\label{boxA2}
\notag\boxdot(A_\nu)_{ij}
-e_\nu e^\mu(A_\mu)_{ij}=&\;{(A^{[\mu})_{\nu]}}^ke_k(A_\mu)_{ij}
+e^\mu([A_\mu,A_\nu])_{ij}+e^\mu\big({(A_{[\mu})_{\nu]}}^k(A_k)_{ij}\big)\\
&+{(A^\mu)^k}_{[\mu}(F_{\nu] k})_{ij}
+([A^\mu,F_{\mu\nu}])_{ij},
\end{align}
$\nu,i,j=0,1,2,3$. 
We remark that (\ref{boxA2}) is an equation of scalar functions.

\subsection{Choice of gauge}

Note that the preceding equation is not of hyperbolic type. We convert (\ref{boxA2}) into a quasilinear hyperbolic system of equations
by imposing a Lorentz gauge condition on the orthonormal frame $\{e_i\}^3_0$:\footnote{A wave type gauge essentially for $e_i$. The Coulomb gauge is another alternative which is used in \cite{KRS}. We do not employ it here.}
\begin{align}\label{gauge}
A^2=(\text{div}A_X)_{ij}:= \nabla^\mu(A_\mu)_{ij}-(A_{ \nabla_{e^\mu} e_\mu})_{ij}=
e^\mu(A_\mu)_{ij}-{(A^\mu)_\mu}^k(A_k)_{ij},
\end{align}
where by $A^2$ we denote some quadratic expression in the connection 
coefficients $(A_\nu)_{ij}$ varying in $ij$. This a freedom one has in choosing the frame $e_i$; see Lemma \ref{gaugemap}. Under (\ref{gauge}), the equation (\ref{boxA2}) becomes the quasilinear second order 
\begin{align}\label{boxA3}
\notag\boxdot(A_\nu)_{ij}=&\;{(A^{[\mu})_{\nu]}}^ke_k(A_\mu)_{ij}
+e^\mu([A_\mu,A_\nu])_{ij}
+e^\mu\big({(A_{[\mu})_{\nu]}}^k(A_k)_{ij}\big)\\
&+{(A^\mu)^k}_{[\mu}(F_{\nu] k})_{ij}
+([A^\mu,F_{\mu\nu}])_{ij}+e_\nu(A^2)+e_\nu\big({(A^\mu)_\mu}^k(A_k)_{ij}\big)
\end{align}
\subsection{The reduced equations; Initial data for EVE}\label{ID}

Following (\ref{divRiem})-(\ref{boxA3}) we actually see that the equation
\begin{align}\label{redeq}
 \nabla_i\text{R}_{\nu j}- \nabla_j\text{R}_{\nu i}+&e_\nu\big(\text{div}A_X- A^2\big)_{ij}\\
\notag=:&\;H_{\nu ij}=\text{(LHS of (\ref{boxA3}))}-\text{(RHS of (\ref{boxA3}))},
\end{align}
holds true for every Lorentzian metric $g$ and orthonormal frame $\{e_i\}^3_0$,
without any additional assumptions or gauge condition. We call
 $H_{\nu ij}=0$, i.e., the system (\ref{boxA3}), the reduced equations. 
We note that even after the gauge fixing, the reduced equations are not equivalent to the EVE (\ref{EVE}), but only imply the vanishing of the {\it curl} of the Ricci tensor (\ref{curlRic}). However, one may suitably prescribe initial data for (\ref{boxA3}) such that they lead to solutions of the EVE and which are consistent with the Lorentz gauge condition (\ref{gauge}). 

Now we address the initial value problem for the reduced equations $H_{\nu ij}=0$ aiming to the EVE. To solve the equation (\ref{boxA3}) one needs an equation relating the evolution of the orthormal frame $\{e_i\}^3_0$ to that of the connection 1-forms. Let $\partial_0,\partial_1,\partial_2,\partial_3$ be a reference frame\footnote{Not orthonormal or coordinates, simply a basis frame.} in $\Sigma\times[0,T]$ ($\partial_0$ transversal direction). We express $e_i$ in terms of $\partial_a$:
\begin{align}\label{epartial}
e_i=O^a_i\partial_a
\end{align}
By virtue of the diffeomorphism invariance of the EVE, we may assume that the timelike unit vector of the orthonormal frame $\{e_i\}^3_0$ of the spacetime we solve for is $e_0=\partial_0$. Doing so we deduce
\begin{align*}
\partial_0(O^a_i)=\mathcal{L}_{e_0}\big(\hat{\partial}_a(e_i)\big)=
\mathcal{L}_{\partial_0}(\hat{\partial}_a)e_i+\hat{\partial}_a([\partial_0,e_i])=
O^b_i\mathcal{L}_{\partial_0}(\hat{\partial}_a)\partial_b+\hat{\partial}_a([e_0,e_i]),
\end{align*}
where $\mathcal{L}$ denotes the Lie derivative and $\hat{\partial}_a$ is the 1-form dual to $\partial_a$. Setting $[\partial_0,\partial_b]=:\Gamma^c_{[0b]}\partial_c$ we rewrite
\begin{align}\label{partial_0frame}
\partial_0(O^a_i)=-O^b_i\Gamma^a_{[0b]}+{(A_{[0})_{i]}}^kO^a_k.
\end{align}
Now we proceed to formulate the necessary and sufficient conditions on the initial data set of the reduced equations (\ref{boxA3}), coupled to (\ref{partial_0frame}), such that the corresponding solution yields a solution to the EVE. The following proposition is proved in \S\ref{app1.1}. 
\begin{proposition}\label{IDprop}
Let $(A_\nu)_{ij},O^a_i$ be a solution of (\ref{boxA3}),(\ref{partial_0frame}), arising from initial configurations subject to
\begin{align}\label{initAO}
(A_\nu)_{ij}(\tau=0)=-(A_\nu)_{ji}(\tau=0)&&\partial_0(A_\nu)_{ij}(\tau=0)=-\partial_0(A_\nu)_{ji}(\tau=0)\\
\notag O^a_0(\tau=0)={I_0}^a
\end{align}
and 
\begin{align}\label{initA}
(\text{div}A_X)_{ij}-A^2=0\;\;\Longleftrightarrow\;\;e^\mu(A_\mu)_{ij}-{(A^\mu)_\mu}^k(A_k)_{ij}-A^2=0\\
\notag\text{Ric}_{ab}(g)=0\;\;\Longleftrightarrow\;\;e^\mu(A_\nu)_{i\mu}-e_\nu(A^\mu)_{i\mu}-([A^\mu,A_\nu])_{i\mu}-{(A^{[\mu})_{\nu]}}^k(A_k)_{i\mu}=0
\end{align}
on $\Sigma_0$. 
Then the latter solution corresponds to an Einstein vacuum spacetime $(\mathcal{M}^{1+3},g)$ and furthermore the frame $\{e_i\}^3_0$ (\ref{epartial}) is $g$-orthonormal, $e_0=\partial_0$, and satisfies the Lorentz gauge condition (\ref{gauge}).
\end{proposition}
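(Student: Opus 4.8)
The plan is to show that the reduced equations plus the two constraint conditions in (\ref{initA}), imposed on $\Sigma_0$, propagate in such a way that the full Einstein vacuum equations hold, the frame stays orthonormal, and the Lorentz gauge is preserved. The strategy is the standard ``propagation of constraints'' argument familiar from the wave-coordinate reduction of EVE, but carried out in the Yang--Mills/Cartan setting. First I would set up the correct bookkeeping: let $g$ be the Lorentzian metric for which $\{e_i\}_0^3$ is declared orthonormal (this is where $e_0=\partial_0$ and the initial condition $O^a_0(0)={I_0}^a$ together with (\ref{partial_0frame}) are used to define the evolving coframe, hence $g$), define the gauge defect $L_{ij}:=(\mathrm{div}A_X)_{ij}-A^2$ and the ``off-shell Ricci'' quantity $P_{\nu i}:=e^\mu(A_\nu)_{i\mu}-e_\nu(A^\mu)_{i\mu}-([A^\mu,A_\nu])_{i\mu}-{(A^{[\mu})_{\nu]}}^k(A_k)_{i\mu}$, which by the computations of \S\ref{Cart} is a contraction of the curvature $F$ built from $A$, so that $\mathrm{Ric}(g)=0 \iff P=0$ once antisymmetry of $A$ is known. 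The antisymmetry $(A_\nu)_{ij}=-(A_\nu)_{ji}$ is itself something to propagate: from (\ref{initAO}) it holds together with its time derivative on $\Sigma_0$, and the symmetric part of $A$ satisfies (by the structure of (\ref{boxA3})) a homogeneous linear wave equation in its own right, so it vanishes identically; this is the first, easiest propagation step and it is what legitimizes identifying $P=0$ with $\mathrm{Ric}=0$.

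Next comes the heart of the matter. By (\ref{redeq}) a solution of the reduced equations satisfies $H_{\nu ij}=0$, i.e. $\nabla_i\mathrm{R}_{\nu j}-\nabla_j\mathrm{R}_{\nu i}+e_\nu L_{ij}=0$ identically. Combining this with the twice-contracted second Bianchi identity $\nabla^\mu\mathrm{R}_{\mu\nu}=\tfrac12 e_\nu\mathrm{R}$ (valid for any metric, and re-derivable in the present frame formalism from (\ref{DRiem})), one obtains a closed system of wave-type equations for the pair $(L_{ij}, P_{\nu i})$ — schematically $\boxdot L = (\text{linear in } L, \nabla L, P, \nabla P \text{ with } A\text{-dependent coefficients})$ and similarly for a suitable antisymmetrized/traced combination of $P$ — which is homogeneous: if $L$ and $P$ and their first derivatives vanish on $\Sigma_0$, then they vanish on the whole development by uniqueness for linear hyperbolic systems. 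So I would (a) verify $L=0$ on $\Sigma_0$ directly from the first line of (\ref{initA}); (b) verify $\partial_0 L=0$ on $\Sigma_0$, which requires using the $\nu=0$ component of the reduced equations together with the constraints — this is the usual ``the gauge condition is initially stationary'' computation; (c) verify $P=0$ on $\Sigma_0$ from the second line of (\ref{initA}), and $\partial_0 P=0$ on $\Sigma_0$ by combining the Bianchi-type identity restricted to $\Sigma_0$ with $H_{\nu ij}=0$ and $L=\partial_0 L=0$. Then the homogeneous hyperbolic system forces $L\equiv0$ (Lorentz gauge holds everywhere) and $P\equiv0$, hence $\mathrm{Ric}(g)\equiv0$.

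The main obstacle I expect is step (b)/(d): showing that the \emph{first time derivatives} of the constraint quantities also vanish on $\Sigma_0$, and assembling the precise homogeneous wave system they satisfy. Unlike in wave coordinates, here one must keep careful track of the frame evolution equation (\ref{partial_0frame}) coupled to (\ref{boxA3}), of the non-covariant $\boxdot$ versus the covariant box, and of the commutator terms ${(A_{[\mu})_{\nu]}}^k$ that encode $[e_\mu,e_\nu]$; a sign error or a dropped term here is fatal. The key algebraic inputs are the second Bianchi identity in the form (\ref{DRiem}) and the fact that $\mathrm{div}\,\mathrm{Riem}=\mathrm{curl}\,\mathrm{Ric}$ from (\ref{divRiem}); the key analytic input is finite-speed-of-propagation / uniqueness for linear second-order hyperbolic systems with smooth coefficients, applied on the backward domain foliated by $\{\Sigma_\tau\}$. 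I would also need to confirm that $g$ so constructed is genuinely Lorentzian near $\Sigma_0$ (an open condition, guaranteed by $O^a_0(0)={I_0}^a$ and continuity) and that $\{e_i\}$ remains a frame, which is where assumption $(iii)$-type smallness of the excerpt's Theorem \ref{thmA} would ultimately enter, though for the bare statement of Proposition \ref{IDprop} it suffices that the solution exists on $\Sigma\times[0,T]$. All of these are standard once the bookkeeping is correct, so I would present (a)--(d) as a sequence of lemmas and relegate the longest identities to Appendix \ref{app1}, exactly as the authors signal they will.
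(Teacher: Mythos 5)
Your proposal is correct and follows essentially the same route as the paper's proof in Appendix \ref{app1.1}: propagate the antisymmetry of $A$ via the homogeneous wave system (\ref{sumA}), then use the reduced equation (\ref{redeq}) together with the contracted second Bianchi identity to show that the gauge defect and the Ricci tensor satisfy homogeneous wave equations (\ref{boxgauge}), (\ref{boxRic}) with vanishing Cauchy data on $\Sigma_0$, the vanishing of the time derivatives being exactly your steps (b) and (d). The only cosmetic difference is that the paper decouples the two propagations — first closing the gauge equation in $L$ alone (the divergence of the Ricci curl cancels identically), noting $\mathrm{R}\equiv 0$ by tracing (\ref{redeq}), and then treating $\mathrm{Ric}$ — rather than assembling one coupled system for $(L,P)$.
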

\begin{remark}
Note that the second part of (\ref{initA}) includes the constraints (\ref{const}); $\text{R}_{0b}=\text{R}_{00}-\frac{1}{2}\text{R}=0$, $b=1,2,3$, on $\Sigma_0$. 
Once we have chosen the orthonormal frame initially and the initial data components $(A_0)_{ij}(\tau=0)$, which correspond to the $\partial_0$ derivative of $\{e_i\}^3_0$, then the rest of the initial data set of (\ref{boxA3}) is fixed by the condition (\ref{initA}), i.e., the Lorentz gauge and the EVE on the initial hypersurface $\Sigma_0$; see Lemma \ref{gaugemap}, Remark \ref{initfixed}.
\end{remark}

\section{The Schwarzschild components}\label{Schwarz}

We fix an explicit Schwarzschild orthonormal reference frame and compute the corresponding connection coefficients, which we then use to find the leading asymptotics of the second fundamental form and curvature of the initial singular hypersurface $\Sigma_0$ in Schwarzschild. Knowing the precise leading blow up behaviour 
of these quantities is crucial for the study of local well-posedness in the next section. For distinction, we denote Schwarzschild components with an upper left script $^S$.

Let us consider a specific foliation of spacelike hypersurfaces $\Sigma_\tau$, $\tau\in[0,T]$, for the backward problem in a neighbourhood 
of $(u=1,v=1)$; Figure \ref{genS_fol}. 
For convenience\footnote{It is easy to see that the following leading asymptotics we derive are independent of the particular choice of foliation.} let
\begin{align}\label{fol}
\Sigma_\tau:\qquad-\frac{1}{2}(u+v)+1=\tau&&(u,v)\in(1-\epsilon,1+\epsilon)^2,\;\tau\in[0,T].
\end{align}
\begin{figure}[h!]
  \centering
    \includegraphics[width=0.7\textwidth]{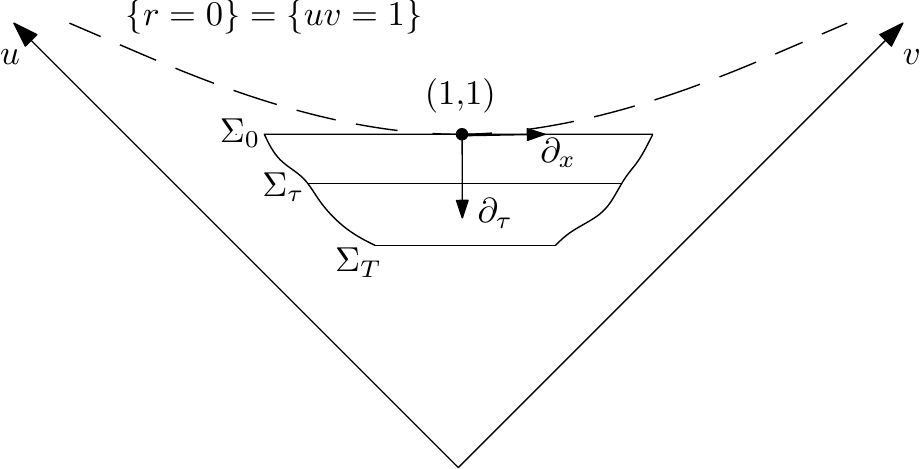}
      \caption{The foliation (\ref{fol}) in the interior of the black hole.}
\label{S_fol}
\end{figure}

In temporal and spatial coordinates $\tau,x$
\begin{align}\label{dtau,dx}
\partial_\tau:=-\partial_u-\partial_v,&&\partial_x:=\partial_u-\partial_v\\
\notag&&x=\frac{1}{2}(u-v),
\end{align}
the metric (\ref{Schw}) takes the form
\begin{align}\label{admetric}
{^Sg}=-\Omega^2d\tau^2+\Omega^2dx^2+r^2(d\theta^2+\sin^2\theta d\phi^2),&&\Omega^2=\frac{32M^3}{r}e^{-\frac{r}{2M}}.
\end{align}
By (\ref{uv}),(\ref{dtau,dx}) $r$ is related to $\tau,x$ via
\begin{align}
(1-\tau)^2-x^2=(1-\frac{r}{2M})e^{\frac{r}{2M}},
\end{align}
from which one can derive the following formulas:
\begin{align}\label{dr/dtau}
\partial_\tau r=\frac{\Omega^2}{4M}(1-\tau),
\qquad\partial_xr=\frac{\Omega^2}{4M}x\\
\notag\partial_\tau\Omega^2=-\frac{\Omega^4}{4M}(\frac{1}{r}+\frac{1}{2M})(1-\tau),&\qquad\partial_x\Omega^2=
-\frac{\Omega^4}{4M}(\frac{1}{r}+\frac{1}{2M})x
\end{align}
\begin{remark}\label{r2}
The above first two identities yield the leading asymptotics:
\begin{align}\label{rasym}
r^2\sim 16M^2(\frac{x^2}{2}+\tau),&&\text{as $\tau,x\rightarrow0$}.
\end{align}
\end{remark}
Directly from the form of the induced metric on $\Sigma_\tau$,
\begin{align}\label{gbar}
{^S\overline{g}}=\Omega^2dx^2+r^2(d\theta^2+\sin^2\theta d\phi^2),
\end{align}
we compute the corresponding induced volume form 
\begin{align}\label{dmu}
d\mu_{{^S\overline{g}}}=\Omega r^2\sin\theta dxd\theta d\phi=\big[4\sqrt{2}M^{\frac{3}{2}}r^\frac{3}{2}+O(r^2)\big]\sin\theta dxd\theta d\phi
\end{align}
and its rate of change along $\partial_\tau$ using (\ref{dr/dtau}):
\begin{align}\label{dmu/dtau}
\partial_\tau d\mu_{{^S\overline{g}}}=\big[\frac{12M^2}{r^2}(1-\tau)+O(\frac{1}{r})\big]d\mu_{^S\overline{g}}.
\end{align}
Normalizing, we define the Schwarzschild orthonormal frame
\begin{align}\label{Sframe}
\partial_0=\frac{1}{\Omega}\frac{\partial}{\partial \tau}&&
\partial_1=\frac{1}{\Omega}\frac{\partial}{\partial x}&&
\partial_2=\frac{1}{r}\frac{\partial}{\partial\theta}&&
\partial_3=\frac{1}{r\sin\theta}\frac{\partial}{\partial\phi}
\end{align}
and the relative connection coefficients  ${^S(A_\mu)_{ij}}={^Sg}(^S\nabla_{\partial_\mu}\partial_i,\partial_j)$ associated to it.
A tedious computation\footnote{One may calculate the connection coefficients using the Koszul formula
\begin{align*}
{^S(A_\mu)_{ij}}=\frac{1}{2}\bigg[{^Sg}([\partial_\mu,\partial_i],\partial_j)-{^Sg}([\partial_i,\partial_j],\partial_\mu)+{^Sg}([\partial_j,\partial_\mu],\partial_i)
\bigg]
\end{align*}
} shows that the non-zero components read
\begin{align}\label{A_mu_ij}
\notag&{^S(A_0)_{01}}=-\frac{\Omega}{8M}(\frac{1}{r}+\frac{1}{2M})x\\
\notag&{^S(A_1)_{01}}=-\frac{\Omega}{8M}(\frac{1}{r}+\frac{1}{2M})(1-\tau)\\
&{^S(A_2)_{02}}={^S(A_3)_{03}}=\frac{\Omega}{4M}\frac{1-\tau}{r}\\
\notag&{^S(A_2)_{12}}={^S(A_3)_{13}}=\frac{\Omega}{4M}\frac{x}{r}\\
\notag&{^S(A_3)_{23}}=\frac{\cot\theta}{r}
\end{align}
Recall the (spacetime) divergence formula of the connection 1-forms $X\to {^S(A_X)_{ij}}$
\begin{align}\label{divA}
{^S(\text{div}A_X)_{ij}}:=\partial^\mu{^S(A_\mu)_{ij}}-{^S(A_{ \nabla_{\partial^\mu}\partial_\mu})_{ij}}=
\partial^\mu{^S(A_\mu)_{ij}}-{^S{(A^\mu)_\mu}^b}{^S(A_b)_{ij}}
\end{align}
Utilizing (\ref{dr/dtau}) and (\ref{A_mu_ij}), we check that the first order term in the RHS of (\ref{divA}) vanishes
\begin{align}\label{divdot}
\partial^\mu{^S(A_\mu)_{ij}}=0,
\end{align}
leaving
\begin{align}\label{divA2}
{^S(\text{div}A_X)_{ij}}={^S(A_3)_{23}}{^S(A_2)_{ij}}.
\end{align}
\begin{remark}
Thus, the orthonormal frame (\ref{Sframe}) satisfies a Lorentz gauge type condition (\ref{gauge}).
\end{remark}
\begin{remark}\label{overasym}
Summarizing the above identities and formulas we obtain the following leading asymptotics at $r=0$:
\begin{align}\label{overasym1}
\notag\partial_0\sim\frac{1}{4\sqrt{2}M^\frac{3}{2}}r^\frac{1}{2}\partial_\tau&&\partial_1\sim\frac{1}{4\sqrt{2}M^\frac{3}{2}}r^\frac{1}{2}\partial_x\\
|{^SA}|\leq\frac{C}{r^{\frac{3}{2}}}&&|\partial^{(k)}{^SA}|\leq\frac{C}{r^{(k+1)\frac{3}{2}}},
\end{align}
where $C$ depends on $M>0$ and $k$. Notice that the latter asymptotics are sharp for $k=0$ and when $\partial^{(k)}=\partial_0^{(k)}$. In fact, the components of the second fundamental form of the slices ${^SK}_{ii}={^S(A_i)_{0i}}$, $i=1,2,3$, are exactly at this level. In more geometric terms we have (up to constants)
\begin{align}\label{overasym2}
|{^SK}|\sim \frac{1}{r^\frac{3}{2}}&&|\text{tr}_{^S\overline{g}}{^SK}|\sim\frac{1}{r^\frac{3}{2}}&&|^S\overline{\text{R}}|\sim\frac{1}{r^2}.
\end{align}
Thus, employing (\ref{dmu}),(\ref{rasym}) for $\tau=0$, we see that both the scalar curvature and the second fundamental of the initial singular hypersurface $\Sigma_0$ are far from being square integrable
\begin{align}\label{notL2}
\notag\int_{\Sigma_0}|{^SK}|^2d\mu_{^S\overline{g}}\sim\int^\epsilon_0\frac{1}{x^3}x^\frac{3}{2}dx=\int^\varepsilon_0\frac{1}{x^\frac{3}{2}}dx=+\infty\\
\int_{\Sigma_0}|{^S\overline{\text{R}}}|^2d\mu_{^S\overline{g}}\sim\int^\epsilon_0\frac{1}{x^4}x^{\frac{3}{2}}dx=\int^\epsilon_0\frac{1}{x^\frac{5}{2}}dx=+\infty
\end{align}
The same holds for the mean curvature of $\Sigma_0$. In fact, a similar calculation shows $\text{tr}_{^S\overline{g}}{^SK}\not\in L^p$, $p\ge\frac{5}{3}$.
\end{remark}
\section{The local-in time well-posedness}\label{stab}
\subsection{Perturbed spacetime; A transformed system}\label{pertsp}

Let $(\overline{g},K)$ be a perturbation of the Schwarzschild initial data set $(^S\overline{g},{^SK})$ on $\Sigma_0$, verifying the constraints (\ref{const}), and let $\{e_i\}^3_1$ be an orthonormal frame of $(\Sigma_0,\overline{g})$. We fix a reference frame $\{\partial_i\}^3_0$ in $\mathcal{M}^{1+3}=\{\Sigma_\tau\}_{\tau\in[0,T]}$, namely, the Schwarzschild orthonormal frame (\ref{Sframe}); Figure \ref{S_fol}. Let $\{e_i\}^3_0$, $e_0=\partial_0$, be a frame extension in $\mathcal{M}^{1+3}$ expressed in terms of $\partial_d$ via 
\begin{align}\label{Omain}
e_c=O^d_c\partial_d.
\end{align}
Consider now the (unique) metric $g$ for which $e_i$ is orthonormal, $g_{ab}:=m_{ab}=\text{diag}(-1,1,1,1)$, and the corresponding connection coefficients $(A_\nu)_{ij}=g(\nabla_{e_\mu}e_i,e_j)$. Then Proposition \ref{IDprop} asserts that the EVE (\ref{EVE}) for $g$, under the Lorentz gauge condition\footnote{We choose now a specific type based on the one satisfied by the Schwarzschild reference frame (\ref{divA2}).}
\begin{align}\label{gaugemain}
(\text{div}A_X)=(A_3)_{23}(A_2)_{ij},
\end{align}
reduce to the system of scalar equations
\begin{align}\label{system1}
\notag\boxdot(A_\nu)_{ij}=&\;{(A^{[\mu})_{\nu]}}^ke_k(A_\mu)_{ij}
+e^\mu([A_\mu,A_\nu])_{ij}
+e^\mu\big({(A_{[\mu})_{\nu]}}^k(A_k)_{ij}\big)\\
&+{(A^\mu)^k}_{[\mu}(F_{\nu] k})_{ij}
+([A^\mu,F_{\mu\nu}])_{ij}+e_\nu\big((A_3)_{23}(A_2)_{ij}\big)+e_\nu\big({(A^\mu)_\mu}^k(A_k)_{ij}\big)\\
\notag \partial_0(O^d_c)=&-O^b_c{^S{(A_{[0})_{b]}}^d}+{(A_{[0})_{c]}}^kO^d_k,\qquad \nu,i,j,c,d\in\{0,1,2,3\}
\end{align}
where $\boxdot:=-e^2_0+e^2_1+e^2_2+e^2_3$ and ${^S{(A_{[0})_{b]}}^d}=[\partial_0,\partial_b]^d$.

However, the system (\ref{system1}) has singular initial data in the Schwarzschild background which do not permit an energy approach directly. For this reason we recast the equations in a way that captures the closeness to the Schwarzschild spacetime.  
Let
\begin{align}\label{u}
(u_\nu)_{ij}:=(A_\nu)_{ij}-{^S(A_\nu)_{ij}}:\{\Sigma_\tau\}_{\tau\in[0,T]}\to\mathbb{R}
&&\nu,i,j\in\{0,1,2,3\},
\end{align}
where the components $^S(A_\nu)_{ij}$ are the Schwarzschild connection coefficients corresponding to the frame $\{\partial_i\}^3_0$ (\ref{Sframe}) and they are given by (\ref{A_mu_ij}). 
We are going to use these new functions to control the evolution of the
perturbed spacetime. 

Consider now the analogous system to (\ref{system1}) satisfied by the Schwarzschild components $^S(A_\nu)_{ij},\partial_c$. In view of the asymptotics (\ref{overasym1}), we define $\Gamma_q$ to be a smooth function satisfying the bound
\begin{align}\label{Gamma_q}
|\Gamma_q|\leq\frac{C_q}{r^q}&&|\partial^{(k)}\Gamma_q|\leq\frac{C_{q,k}}{r^{q+\frac{3}{2}k}},
\end{align}
for constants $C_q,C_{q,k}$ depending on $M>0$. Taking the difference of the two analogous systems we obtain a new system for the functions $(u_\nu)_{ij}, O^d_c-{I_c}^d$ written schematically in the form:
\begin{align}\label{boxu}
\notag h^{ab}\partial_a\partial_b(u_\nu)_{ij}
=&\;
O\Gamma_{\frac{3}{2}}\partial u+O\Gamma_3 u
+O\Gamma_{\frac{9}{2}}(O-I)
+O\Gamma_3\partial (O-I)\\
&+\Gamma_3u^2
+Ou\partial u
+u^3
+O\partial (O-I)\partial u\\
\partial_0(O^d_c-{I_c}^d)=&\;\Gamma_{\frac{3}{2}}(O-I)+(O-I)u
\notag+u,
\end{align}
where 
\begin{align}\label{h}
h^{ab}:=m^{cd}O^a_cO^b_d=g^{ab}
\end{align}
and each term in the RHS denotes some algebraic combination of finite number of terms of the depicted type (varying in $\nu,i,j$) where the particular indices do not matter.
\begin{remark}\label{stabred}
Evidently, the systems (\ref{system1}) and (\ref{boxu}) are equivalent. The benefit is that the  assumption on the perturbed spacetime, being close to Schwarzschild, implies that the functions $(u_\nu)_{ij}, O^d_c-{I_c}^d$ are now small and regular. 
Thus, we have reduced the evolutionary problem to solving the PDE-ODE system of
equations (\ref{boxu}).
However, the issue of singular initial data in (\ref{system1}) has become an issue of singularities in the coefficients of the resulting equations (\ref{boxu}), at $\tau=x=0$, which do not make it possible to apply the energy procedure in standard spaces; see also (\ref{notL2}). These singularities, in large part, are due to the intrinsic curvature blow up and cannot be gauged away; in particular the coefficients $\Gamma_3$ of the potential terms in (\ref{boxu}) correspond to the Schwarzschild curvature (\ref{K}). Some of the functions $\Gamma_q$ that appear in (\ref{boxu}), expressed in terms of Schwarzschild connection coefficients (\ref{A_mu_ij}) and their derivatives, are less singular than (\ref{Gamma_q}), but representatives of the exact bound do appear in all the terms. 
\end{remark}
\begin{remark}\label{Linfty1/r2}
Another crucial asymptotic behaviour that our method heavily depends on is that of the radius function $r$. According to (\ref{rasym}), we observe that the best
$L^\infty_{\Sigma_\tau}$ bound one could hope for the ratio $1/r^2$ is of the form
\begin{align}\label{1/r2}
\|\frac{1}{r^2}\|_{L^\infty(\Sigma_\tau)}\leq\frac{C}{\tau},
\end{align}
which obviously fails to be integrable in time $\tau\in[0,T]$, for any $T>0$. This fact lies at the heart of the difficulty of closing a Gronwall type estimate.
\end{remark}

\subsection{The weighted $H^s$ spaces}\label{wHs}

In order to study the well-posedness of (\ref{boxu})
we introduce certain weighted norms.
It turns out that the weights which yield
the desired flexibility in obtaining energy estimates are the following.
\begin{definition}\label{wsp}
Given $\alpha>0$ and $\tau\in[0,T]$, we define the (time dependent) weighted
Sobolev space $H^{s,\alpha}[\tau]$, as a subspace of the standard $H^s$ space on $\Sigma_\tau$ with the Schwarzschild induced volume form satisfying:
\begin{align}\label{Hsa}
H^{s,\alpha}[\tau]:\;\;\; u\in H^s(\Sigma_\tau),\;\;\;\|u\|_{H^{s,\alpha}[\tau]}^2:=\sum_{k\leq s}
\int_{\Sigma_\tau}\frac{[\partial^{(k)}u]^2}{r^{2\alpha-3(k-1)}}d\mu_{^S\overline{g}}<+\infty,
\end{align}
where by $\partial^{(k)}$ we denote any order $k$ combination of directional derivatives with respect
to the components $\partial_1,\partial_2,\partial_3$ of the
Schwarzschild frame (\ref{Sframe}). For convenience, we drop $\tau$ from the notation whenever the context is clear. 
\end{definition}
\begin{remark}
Observe that the weights in the norm $\|\cdot\|_{H^{s,\alpha}}$ in (\ref{Hsa}) blow up only at $\tau=0,x=0$.
For $\tau>0$ fixed, the weights are uniformly
bounded above by some positive constant $C_\tau$, which becomes infinite as $\tau\rightarrow0^+$. The dependence of the power $2\alpha-3(k-1)$ on the number $k$ of derivatives 
corresponds to the singularities in the coefficients
of the equation (\ref{boxu}).
\end{remark}
\begin{lemma}\label{Sobprop}
The weighted $H^{s,\alpha}$ spaces satisfy the properties:
\begin{align}\label{Sobprop1}
\notag H^{s_1,\alpha}\subset H^{s_2,\alpha}&&s_1<s_2\\
r^{-\frac{3}{2}l}u\in H^{s,\alpha-\frac{3}{2}l},&&\text{whenever $u\in H^{s,\alpha}$}\\
\notag \partial^{(k)}u\in H^{s-k,\alpha-\frac{3}{2}k}&&k\leq s,\; u\in H^{s,\alpha}
\end{align}
\end{lemma}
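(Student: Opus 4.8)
The plan is to verify the three inclusions in Lemma \ref{Sobprop} directly from Definition \ref{wsp}, each being essentially a bookkeeping exercise with the weights $r^{2\alpha-3(k-1)}$ and the observation from Remark \ref{r2} that on any fixed slice $\Sigma_\tau$ with $\tau>0$ the function $r$ is bounded and bounded away from $0$, so all three weighted spaces agree as sets with the ordinary $H^s(\Sigma_\tau)$ and the content is purely a statement about which weighted norm is finite.

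For the first inclusion $H^{s_1,\alpha}\subset H^{s_2,\alpha}$ with $s_1<s_2$: wait — this is stated with the containment going the ``wrong'' way from the usual Sobolev convention, so I should read it as: if $u\in H^{s_1,\alpha}$ then $u\in H^{s_2,\alpha}$ is false in general; rather the intended reading (consistent with (\ref{Sobprop1}) being used later) must be $H^{s_2,\alpha}\subset H^{s_1,\alpha}$, i.e. more derivatives controlled implies fewer controlled. I would simply note that $\|u\|_{H^{s_1,\alpha}}^2$ is a partial sum of the terms in $\|u\|_{H^{s_2,\alpha}}^2$ (the weight $r^{2\alpha-3(k-1)}$ attached to the $k$-th term is the \emph{same} in both norms), hence $\|u\|_{H^{s_1,\alpha}}\le\|u\|_{H^{s_2,\alpha}}$ and the inclusion follows. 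For the second statement, $r^{-\frac{3}{2}l}u\in H^{s,\alpha-\frac{3}{2}l}$: for each $k\le s$ I would expand $\partial^{(k)}(r^{-\frac{3}{2}l}u)$ by the Leibniz rule, using the bound $|\partial^{(j)}r^{-\frac{3}{2}l}|\le C\, r^{-\frac{3}{2}l-\frac{3}{2}j}$ which follows from $|\partial r|\lesssim r^{\,?}$ — here I must use the asymptotics for $r$ implicit in (\ref{dr/dtau}) together with the renormalization of the frame $\partial_1=\Omega^{-1}\partial_x$, giving $|\partial_1 r|=\Omega^{-1}|\partial_x r|=\frac{\Omega}{4M}|x|\lesssim r^{1/2}$, and more generally $|\partial^{(j)} r|\lesssim r^{1-\frac32(j-1)}$, whence $|\partial^{(j)} r^{-\frac32 l}|\lesssim r^{-\frac32 l-\frac32 j}$ — and then check term by term that $\int_{\Sigma_\tau} [\partial^{(k-j)}u]^2 r^{-3 l-3 j}/ r^{2(\alpha-\frac32 l)-3(k-1)}\,d\mu$ is dominated by $\int[\partial^{(k-j)}u]^2/r^{2\alpha-3((k-j)-1)}\,d\mu$, i.e. the exponent of $r$ in the denominator of the former is $\le$ that of the latter; this is a linear inequality in the integers $l,j,k$ that holds with room to spare. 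The third statement $\partial^{(k)}u\in H^{s-k,\alpha-\frac32 k}$ is then the special case of the same computation with no $r$-power in front: $\|\partial^{(k)}u\|_{H^{s-k,\alpha-\frac32 k}}^2=\sum_{m\le s-k}\int [\partial^{(m)}\partial^{(k)}u]^2/r^{2(\alpha-\frac32 k)-3(m-1)}\,d\mu=\sum_{m\le s-k}\int[\partial^{(m+k)}u]^2/r^{2\alpha-3((m+k)-1)}\,d\mu\le\|u\|_{H^{s,\alpha}}^2$, where the identity of exponents $2(\alpha-\frac32 k)-3(m-1)=2\alpha-3(m+k-1)$ is an arithmetic check.

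The only mild subtlety — and the step I would flag as the main obstacle — is the commutator/Leibniz bookkeeping in the second statement: the directional derivatives $\partial_1,\partial_2,\partial_3$ of the Schwarzschild frame do not commute (their commutators are governed by the connection coefficients (\ref{A_mu_ij}), hence by $\Gamma_{3/2}$-type factors), so writing ``$\partial^{(m)}\partial^{(k)}u=\partial^{(m+k)}u$'' is not literally true but holds modulo lower-order terms with extra factors $\Gamma_{3/2}$, each of which \emph{improves} the weight count (a factor bounded by $r^{-3/2}$ is exactly accounted for by dropping one derivative, by the scaling built into the definition). I would handle this by an induction on $s$: assuming the three properties for all orders $<s$, the commutator terms produced at order $s$ involve at most $s-1$ derivatives of $u$ times a $\Gamma_q$ with $q$ matching the number of derivatives removed, and the inductive hypothesis (specifically the second property, applied with the appropriate $l$) absorbs them. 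Since the weight exponents line up with equality only in the ``top'' term and strictly to spare in all commutator corrections, every integral on the right is finite whenever $\|u\|_{H^{s,\alpha}}<\infty$, and the stated inclusions — with explicit constants depending only on $M$, $s$, and the order of differentiation — follow. I expect the whole proof to be under a page once the Leibniz/commutator estimate for $\partial^{(j)} r^{-\frac32 l}$ is recorded as a preliminary bound.
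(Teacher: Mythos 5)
Your proposal is correct and follows essentially the same route as the paper, whose proof is a one-liner combining Definition \ref{wsp} with the bound $|\partial_1(r^{-\frac{3}{2}l})|\leq Cl\,r^{-\frac{3}{2}l-\frac{1}{2}}$ (and $\partial_2(r^{-\frac{3}{2}l})=\partial_3(r^{-\frac{3}{2}l})=0$): your Leibniz/exponent bookkeeping is exactly this, your cruder bound $|\partial^{(j)}(r^{-\frac{3}{2}l})|\lesssim r^{-\frac{3}{2}l-\frac{3}{2}j}$ still closes the exponent count (indeed with equality), and your reading of the first inclusion as $H^{s_2,\alpha}\subset H^{s_1,\alpha}$ is the provable one. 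Two harmless quibbles: the intermediate claim $|\partial^{(j)}r|\lesssim r^{1-\frac{3}{2}(j-1)}$ fails at $j=1$ (your own computation gives $|\partial_1 r|\lesssim r^{1/2}$, and $r^{1/2}\not\lesssim r$ near $r=0$), though the bound you actually use survives since $|\partial_1 r|\lesssim r^{1/2}$ and $r$ is bounded above; and the commutator induction is unnecessary, because $\partial^{(m)}\partial^{(k)}u$ is already an admissible order-$(m+k)$ combination of directional derivatives in the sense of Definition \ref{wsp}, so no commuting is needed for the second or third property.
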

\begin{proof}
They are immediate consequences of Definition \ref{wsp} and and the fact that
\begin{align*}
|\partial_1(r^{-\frac{3}{2}l})|\leq Clr^{-\frac{3}{2}l-\frac{1}{2}}&&|\partial_2(r^{-\frac{3}{2}l})|=|\partial_3(r^{-\frac{3}{2}l})|=0,
\end{align*}
cf. (\ref{dr/dtau}), (\ref{Sframe}).
\end{proof}
\subsection{Local existence theorems}\label{main}

Let
\begin{align}\label{wen}
\notag\mathcal{E}(u,O;\alpha,T):=&
\sum_{\nu,i,j=0}^3\bigg[\sup_{\tau\in[0,T]}\big(\|(u_\nu)_{ij}\|^2_{H^{3,\alpha}}
+\|\partial_0(u_\nu)_{ij}\|^2_{H^{2,\alpha-\frac{3}{2}}}\big)\\
&+\int^T_0\big(\|(u_\nu)_{ij}\|^2_{H^{3,\alpha+1}}+\|\partial_0(u_\nu)_{ij}\|^2_{H^{2,\alpha-\frac{1}{2}}}
\big)d\tau\bigg]\\
&+\notag\sum_{c,d=0}^3\bigg[\sup_{\tau\in[0,T]}\|O^d_c-{I_c}^d\|^2_{H^{3,\alpha+\frac{3}{2}}}
+\int^T_0\|O^d_c-{I_c}^d\|^2_{H^{3,\alpha+\frac{5}{2}}}d\tau
\bigg]
\end{align}
be the total weighted energy of the functions $(u_\nu)_{ij},O^d_c-{I_c}^d$ defined in $\{\Sigma_\tau\}_{\tau\in[0,T]}$ (\ref{fol}), Figure \ref{fol}, the backward domain of dependence  of $\Sigma_0$ with respect to the metric $g$ we are solving for. Since the actual domain depends on the unknown solution, it will be fully determined in the end; see Section \ref{fixedpt}. For brevity we denote by
\begin{align}\label{initwen}
\mathcal{E}_0:=&
\sum_{\nu,i,j\in\{0,1,2,3\}}\bigg[\|(u_\nu)_{ij}(\tau=0)\|^2_{H^{3,\alpha}}+
\|\partial_0(u_\nu)_{ij}(\tau=0)\|^2_{H^{2,\alpha-\frac{3}{2}}}\bigg]\\
\notag&+\sum_{c,d\in\{0,1,2,3\}}\|O^d_c-{I_c}^d\|^2_{H^{3,\alpha+\frac{3}{2}}(\Sigma_0)}
\end{align}
the energy at the initial singular slice $\Sigma_0$. 
\begin{remark}\label{ufanal}
It would seem more natural if the components $(u_\nu)_{ij}$ were lying in $H^{2,\alpha}$, that is, one derivative less than $O^d_c-{I_c}^d$. However, this would create additional technical difficulties, which we choose to avoid by treating both $(u_\nu)_{ij}$, $O^d_c-{I_c}^d$ at the same footing (number of derivatives), since it is an issue of the structure of the system (\ref{boxu}) and not a singularity issue.
\end{remark}
The following theorem is our first main local well-posedness result for the system (\ref{boxu}), whose proof occupies Section \S\ref{fixedpt}.
\begin{theorem}\label{mainthm}
There exist $\alpha>0$ sufficiently large and $\varepsilon>0$ small such that if
\begin{align}\label{initf}
\mathcal{E}_0<+\infty&&\|O^d_c-{I_c}^d\|_{L^\infty(\Sigma_0)}<\varepsilon,&&\;c,d=0,1,2,3,
\end{align}
then the system
(\ref{boxu}) admits a unique solution, up to some small time $T=T(\mathcal{E}_0,\alpha)>0$, in the spaces
\begin{align}\label{solsp}
\notag(u_\nu)_{ij}\in&\;C([0,T];H^{3,\alpha})\cap L^2([0,T];H^{3,\alpha+1})\qquad\qquad\nu,i,j\in\{0,1,2,3\}\\
\partial_0(u_\nu)_{ij}\in&\;C([0,T];H^{2,\alpha-\frac{3}{2}})\cap L^2([0,T];H^{2,\alpha-\frac{1}{2}})\\
\notag O^d_c-{I_c}^d\in&\;C([0,T];H^{3,\alpha+\frac{3}{2}})\cap
L^2([0,T];H^{3,\alpha+\frac{5}{2}})\qquad\qquad c,d\in\{0,1,2,3\}
\end{align}
\end{theorem}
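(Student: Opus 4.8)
\textbf{Proof proposal for Theorem \ref{mainthm}.}

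The plan is to solve the system (\ref{boxu}) by a contraction mapping argument in the spaces (\ref{solsp}), using the total weighted energy $\mathcal{E}(u,O;\alpha,T)$ as the controlling quantity and choosing $\alpha$ large and $T$ small. First I would set up the iteration map: given functions $(\tilde u,\tilde O)$ in the closed ball $\{\mathcal{E}(\tilde u,\tilde O;\alpha,T)\le A\mathcal{E}_0\}$ of the solution space (with $A$ a universal constant to be fixed), freeze them in the coefficients and in the quadratic/cubic right-hand sides of (\ref{boxu}), and solve the resulting linear wave equation for $(u_\nu)_{ij}$ coupled to the linear transport equation (the last line of (\ref{boxu})) for $O^d_c-I_c{}^d$. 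The metric coefficients $h^{ab}=m^{cd}O^a_cO^b_d$ remain uniformly Lorentzian provided $\|O-I\|_{L^\infty}$ stays small, which is propagated from (\ref{initf}) by the transport equation together with a Sobolev embedding controlling $\|O-I\|_{L^\infty}$ by $\|O-I\|_{H^{3,\alpha+3/2}}$ (for this one uses the weighted Sobolev inequality on $\Sigma_\tau$, noting the weights degenerate only at the collapsed sphere where $r\to 0$); this also fixes the backward domain of dependence $\{\Sigma_\tau\}_{\tau\in[0,T]}$ referenced after (\ref{wen}). The ODE part (\ref{partial_0frame})-type equation is integrated along $\partial_0$ and contributes the $H^{3,\alpha+3/2}$ sup-norm and the $H^{3,\alpha+5/2}$ spacetime norm in $\mathcal{E}$, the extra weight $r$ in the integrated-in-time piece coming from the gain $\partial_\tau r\sim r^{-2}$ in (\ref{dr/dtau}) integrated against $dt$.

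The core of the argument is the weighted energy estimate for the linear wave equation $h^{ab}\partial_a\partial_b(u_\nu)_{ij}=(\text{RHS})$. For each fixed number of derivatives $k\le 3$ I would multiply $\partial^{(k)}$ of the equation by $r^{-(2\alpha-3(k-1))}\partial_0\partial^{(k)}(u_\nu)_{ij}$ and integrate over $\Sigma_\tau$ against $d\mu_{{}^S\overline g}$, exactly matching the weights in (\ref{Hsa}). Integration by parts in the spatial directions $\partial_1,\partial_2,\partial_3$ produces the coercive energy $\|(u_\nu)_{ij}\|_{H^{3,\alpha}}^2+\|\partial_0(u_\nu)_{ij}\|_{H^{2,\alpha-3/2}}^2$; the $\partial_0$-derivative hitting the weight $r^{-(2\alpha-3(k-1))}$ (via $\partial_\tau r>0$, equivalently $\partial_0 r\sim r^{-3/2}\cdot r^{-2}\cdot\ldots$) generates, with a definite sign because $\partial_\tau r>0$ in the backward direction, a positive spacetime bulk term comparable to $\|(u_\nu)_{ij}\|^2_{H^{3,\alpha+1}}+\|\partial_0(u_\nu)_{ij}\|^2_{H^{2,\alpha-1/2}}$, which is precisely the $\int_0^T(\cdots)d\tau$ piece appearing in $\mathcal{E}$; this is the mechanism that supplies the extra weight-$1$ of coercivity recorded in Lemma \ref{Sobprop}. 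The right-hand side terms $O\Gamma_{3/2}\partial u$, $O\Gamma_3 u$, $O\Gamma_{9/2}(O-I)$, $O\Gamma_3\partial(O-I)$ are estimated by Cauchy-Schwarz, absorbing the singular coefficients $\Gamma_q$ (bounded by (\ref{Gamma_q})) into the weights: e.g. $\Gamma_3 u$ paired with weight $r^{-(2\alpha-3(k-1))}$ lands in $H^{s,\alpha-3}$-type norms, which by the second line of Lemma \ref{Sobprop} is controlled — but only after a derivative count — and crucially some of these pairings land at a \emph{larger} weight than the energy controls. These are the critical terms flagged in \S\ref{Meth}. The genuinely nonlinear terms $\Gamma_3 u^2$, $Ou\partial u$, $u^3$, $O\partial(O-I)\partial u$ are handled by weighted product/Moser estimates using $\|O-I\|_{L^\infty},\|u\|_{L^\infty}$ small, and are lower order compared to the critical linear terms.

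The main obstacle, which is exactly where largeness of $\alpha$ enters, is closing the Gronwall argument against the critical terms. After integration by parts, the worst terms carry the non-integrable-in-time factor $\|r^{-2}\|_{L^\infty(\Sigma_\tau)}\le C/\tau$ (Remark \ref{Linfty1/r2}); one cannot simply Gronwall them away. The resolution I would carry out: track these critical contributions exactly rather than bounding them crudely, observe that the dangerous boundary/bulk term produced by integrating $\partial_0$ by parts against the $r$-weight comes with a coefficient of the schematic form $\big(c_1\alpha - c_2\big)\cdot(\text{positive density})$ where $c_1,c_2>0$ are absolute constants coming from the sectional curvature rate $2M/r^3$ in (\ref{K}) and the opening rate $r^2\sim 16M^2(x^2/2+\tau)$ in (\ref{rasym}); choosing $\alpha$ large enough that $c_1\alpha-c_2>0$ makes this term have a favourable sign, so it can be \emph{dropped} rather than estimated, and the remaining genuine error terms are then controlled by $\int_0^T\mathcal{E}(u,O;\alpha,\tau)\,d\tau$ plus $\mathcal{E}_0$, closing Gronwall on $[0,T]$ for $T$ small depending on $\mathcal{E}_0,\alpha$. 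Uniqueness and the contraction property follow by running the same weighted estimate on the difference of two solutions/iterates, where the quadratic and cubic structure of (\ref{boxu}) together with the smallness from (\ref{initf}) gives a contraction factor $<1$ after shrinking $T$; continuity in time with values in the weighted spaces is obtained in the standard way from the uniform bounds and the equation. Passing from the solution of (\ref{boxu}) back to an Einstein vacuum metric is then immediate from Proposition \ref{IDprop}, since the initial data are arranged to satisfy (\ref{initAO})--(\ref{initA}).
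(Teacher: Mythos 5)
Your proposal follows essentially the paper's own route: an iteration in which the coefficients $h^{ab}$ and the genuinely nonlinear terms are frozen at the previous iterate while the unknowns stay attached to the most singular coefficients, a weighted multiplier of the form $r^{-(2\alpha-3(k-1))}\partial_0\partial^{(k)}(u_\nu)_{ij}$, the $\alpha$-proportional coercive bulk term produced when $\partial_\tau$ hits the weight (which both absorbs the critical higher-weight terms and supplies the $L^2_\tau$ components of $\mathcal{E}$), and Gronwall for $T$ small depending on $\mathcal{E}_0,\alpha$. This is the mechanism of Propositions \ref{mainenest} and \ref{maincontrest}, so the core of your argument is sound and matches the paper.

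There is, however, one step that would fail as you state it: closing the contraction by ``running the same weighted estimate on the difference of two iterates.'' Differencing the quasilinear principal parts produces the term $(\overline{O}+\tilde{\overline{O}})\,d\overline{O}\,\partial^2(\tilde{u}_\nu)_{ij}$ (cf. (\ref{boxdu})); at the $H^{3,\alpha}$ level this requires three derivatives of $\tilde{u}$ against $d\overline{O}$, i.e.\ one more derivative than the difference norm controls, and no weighted product estimate repairs this. The paper therefore closes the contraction one derivative lower, in $E_{2,\alpha}[du]+\|dO\|^2_{H^{2,\alpha+\frac{3}{2}}}$ (Remark \ref{contrH2}), and recovers the solution in the spaces (\ref{solsp}) by combining this with the uniform $H^{3,\alpha}$ energy bound; you need this downgrade, standard for second-order quasilinear hyperbolic systems, to make the fixed-point argument legitimate. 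Two smaller omissions: the energy identity is performed on the truncated leaves $\Sigma_\tau^{\overline{u}}$, and the lateral flux terms produced by the coarea formula and the spatial integrations by parts must be shown to have a favourable sign, which the paper gets from the boundary being $g_{\overline{u}}$-null via the eikonal relation (\ref{rhoeik}) — your write-up never addresses these boundary terms. Finally, $\partial_\tau r\sim 8M^2/r$ rather than $r^{-2}$; the correct statement is $\partial_\tau\big(r^{-2\beta}\big)\sim -16M^2\beta\,r^{-2\beta-2}$, which is exactly the gain of one unit in the weight parameter, so your conclusion about the coercive bulk term survives, but the intermediate asymptotic should be fixed.
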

\begin{remark}\label{h-m}
$(i)$ The second part of condition (\ref{initf}), $\varepsilon>0$ small, is necessary for the equation (\ref{boxu}) to be hyperbolic,
yielding sufficient pointwise control on the $h^{ab}$'s (\ref{h})
\begin{align}\label{hyp}
|h^{bb}-m^{bb}|<\frac{1}{2}&&|h^{bc}|\leq C\varepsilon^2,\qquad\text{$b,c=0,1,2,3$, $b\neq c$}.
\end{align}
It could be obviously replaced by the stronger assumption that $\mathcal{E}_0<\varepsilon$, since the energy $\mathcal{E}(u,O;\alpha,T)$ controls the $L^\infty$ norm of $u,O$ by standard Sobolev embedding.\\
$(ii)$ How large the exponent $\alpha$ has to be depends on the coefficients of the system (\ref{boxu}).
In the final inequalities in \S\ref{fixedpt} $\alpha>0$ is picked large enough so that certain `critical' terms can be absorbed in the LHS and the estimates can close.
\end{remark}
The above theorem is a local well-posedness result for the system (\ref{boxu}). Imposing now the proper conditions on the initial data set of (\ref{boxu}), the solution (\ref{solsp}) yields
a solution of (\ref{system1}) which in turn corresponds to an Einstein vacuum spacetime (\ref{EVE}).
\begin{theorem}\label{mainthm2}
Let $\alpha,\varepsilon$ be such as in Theorem \S\ref{mainthm}
and let $(\Sigma_0,\overline{g},K)$
be an initial data set for the Einstein vacuum equations (\ref{EVE}) satisfying the constraints (\ref{const}),
such that the components
\begin{align}\label{initcond2}
(u_\nu)_{ij}\in H^{3,\alpha}(\Sigma_0)&&\nu,i,j=1,2,3,
\end{align}
\begin{align}\label{initf2}
O^d_c-{I_c}^d\in H^{3,\alpha+\frac{3}{2}}(\Sigma_0)
&&\|O^d_c-{I_c}^d\|_{L^\infty(\Sigma_0)}<\varepsilon&&c,d=1,2,3,
\end{align}
computed with respect to an orthonormal frame $\{e_i\}^3_1$
on $(\Sigma_0,\overline{g})$, and
\begin{align}\label{diffk}
(u_i)_{0j}(\tau=0):=K_{ij}-{^SK_{ij}}\in H^{3,\alpha}(\Sigma_0)&&i,j=1,2,3.
\end{align}
Then, there exists a solution $g$ to the EVE (\ref{EVE}) in the backward region to $\Sigma_0$, foliated by $\{\Sigma_\tau\}_{t\in[0,T]}$,
with induced initial data set $(\overline{g},K)$ on $\Sigma_0$
and an orthonormal frame extension $\{e_i\}^3_0$ for which
the corresponding (spacetime) functions $(u_\nu)_{ij},O^d_c-I^d_c$ (\ref{u}),(\ref{Omain}) lie in the spaces (\ref{solsp}).\\
If in addition $O^d_c-{I_c}^d\in C([0,T];H^{4,\alpha+\frac{3}{2}})$, $c,d=1,2,3$, then
the Einsteinian vacuum development is unique up to isometry.
\end{theorem}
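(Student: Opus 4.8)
The plan is to derive Theorem \ref{mainthm2} as a consequence of Theorem \ref{mainthm} together with Proposition \ref{IDprop}, by carefully checking that the hypotheses of those two results are met under the stated conditions on $(\Sigma_0,\overline{g},K)$. First I would set up the initial data for the system (\ref{boxu}) (equivalently (\ref{system1})) from the geometric initial data set $(\overline{g},K)$. Fix on $(\Sigma_0,\overline{g})$ an orthonormal frame $\{e_i\}_1^3$ and extend it to a spacetime frame by declaring $e_0=\partial_0$; this determines $O^d_c(\tau=0)$, hence the components $O^d_c-{I_c}^d$ on $\Sigma_0$, which by hypothesis (\ref{initf2}) lie in $H^{3,\alpha+\frac{3}{2}}$ with small $L^\infty$ norm. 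The spatial connection coefficients $(A_\nu)_{ij}(\tau=0)$ for $\nu,i,j\in\{1,2,3\}$ are the intrinsic Levi-Civita connection coefficients of $(\overline{g},\{e_i\}_1^3)$, and the mixed components $(A_i)_{0j}(\tau=0)=K_{ij}$ are read off from the second fundamental form; thus (\ref{initcond2}) and (\ref{diffk}) guarantee exactly that the corresponding differences $(u_\nu)_{ij}(\tau=0)$ lie in $H^{3,\alpha}(\Sigma_0)$. The remaining components $(A_0)_{ij}(\tau=0)$ and the time-derivatives $\partial_0(A_\nu)_{ij}(\tau=0)$ are then \emph{fixed} by imposing the constraint-type conditions (\ref{initA}) — i.e. the Lorentz gauge and $\text{Ric}=0$ on $\Sigma_0$ — exactly as explained in the Remark following Proposition \ref{IDprop}; here I would invoke Lemma \ref{gaugemap} and the associated Remark \ref{initfixed} to see that this prescription is consistent, that it determines a unique admissible initial data triple for (\ref{boxu}), and that the resulting $\partial_0(u_\nu)_{ij}(\tau=0)$ lies in the weighted space $H^{2,\alpha-\frac{3}{2}}(\Sigma_0)$ (one derivative of the $H^{3,\alpha}$ data, with the weight shift dictated by Lemma \ref{Sobprop}). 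Consequently $\mathcal{E}_0<+\infty$ and the hypothesis (\ref{initf}) of Theorem \ref{mainthm} holds.

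Next I would apply Theorem \ref{mainthm} to obtain a unique solution $(u_\nu)_{ij},O^d_c-{I_c}^d$ of (\ref{boxu}) on a slab $\{\Sigma_\tau\}_{\tau\in[0,T]}$, lying in the spaces (\ref{solsp}), for $T=T(\mathcal{E}_0,\alpha)>0$ small. Reconstituting $(A_\nu)_{ij}={^S(A_\nu)_{ij}}+(u_\nu)_{ij}$ and $O^d_c$, this is a solution of (\ref{system1}), hence of the reduced equations (\ref{boxA3}) coupled to the frame-evolution equation (\ref{partial_0frame}). Because the initial data were arranged to satisfy (\ref{initAO}) (antisymmetry of $(A_\nu)_{ij}$ and of $\partial_0(A_\nu)_{ij}$ at $\tau=0$, and $O^a_0(\tau=0)={I_0}^a$ since $e_0=\partial_0$) and (\ref{initA}) (Lorentz gauge and EVE on $\Sigma_0$), Proposition \ref{IDprop} applies verbatim: the solution yields a Lorentzian metric $g$ with $\text{Ric}(g)=0$, for which $\{e_i\}_0^3$ is $g$-orthonormal with $e_0=\partial_0$, satisfying the gauge (\ref{gaugemain}). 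One then checks that the induced data on $\Sigma_0$ is precisely $(\overline{g},K)$: indeed $\overline{g}$ is recovered from the frame $\{e_i\}_1^3$ (equivalently from $O^d_c(\tau=0)$) by orthonormality, and $K_{ij}={^SK}_{ij}+(u_i)_{0j}(\tau=0)$ by (\ref{diffk}). This establishes the existence half of the theorem, with the domain $\{\Sigma_\tau\}_{\tau\in[0,T]}$ being the backward domain of dependence of $\Sigma_0$ with respect to $g$ (which, as noted after (\ref{wen}), is determined a posteriori once the solution — hence $h^{ab}=g^{ab}$ — is known).

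For uniqueness up to isometry under the extra assumption $O^d_c-{I_c}^d\in C([0,T];H^{4,\alpha+\frac{3}{2}})$, the argument is the standard reduction-to-gauge one, adapted to this setting. Given two Einstein vacuum developments of $(\overline{g},K)$ foliated by uniformly spacelike slices with the assumed regularity, I would on each of them construct an orthonormal frame satisfying the Lorentz gauge (\ref{gaugemain}) with prescribed initial frame $\{e_i\}_1^3$ — this is the content of Lemma \ref{gaugemap}, which solves a hyperbolic (wave-type) equation for the frame-change map $O$, and the $H^4$-in-$O$ hypothesis is exactly what is needed to place this gauge-fixing in the solution spaces (\ref{solsp}) (loss of one derivative in passing from the frame to the connection coefficients). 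In this gauge both developments solve the \emph{same} quasilinear system (\ref{boxu}) with the \emph{same} initial data (the initial data being geometrically determined by $(\overline{g},K)$ via the prescription above, \emph{including} the constraint-fixed components, by the Remark after Proposition \ref{IDprop}); the uniqueness clause of Theorem \ref{mainthm} then forces the two solutions to coincide on the common slab, and the two spacetimes are therefore isometric via the diffeomorphism identifying the two gauge-fixed frames. The main obstacle I expect is precisely this last step: verifying that the gauge-normalization of an \emph{arbitrary} vacuum development (of unknown regularity profile at the singularity) can be carried out within the weighted spaces (\ref{solsp}) — in particular that the frame-change equation is solvable backward from $\Sigma_0$ without losing the weighted control near $\{u=v=1\}$, and that the initial data one reads off genuinely matches (including the gauge-dependent components $(A_0)_{ij}$ and time derivatives, which must agree by the uniqueness of the construction in Lemma \ref{gaugemap}). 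This is why the hypothesis is stated at the level of $H^{4,\alpha+\frac{3}{2}}$ rather than $H^{3}$, and why the cleanest route is to invoke Lemma \ref{gaugemap} and Remark \ref{initfixed} rather than to argue by a direct comparison estimate.
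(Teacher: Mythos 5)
Your route is the paper's: prescribe initial data for (\ref{boxu}) from $(\overline{g},K)$, check $\mathcal{E}_0<\infty$ so Theorem \ref{mainthm} applies, pass back through Proposition \ref{IDprop} to get a vacuum metric inducing $(\overline{g},K)$, and prove uniqueness by gauge-normalizing a second development via Lemma \ref{gaugemap} and invoking the uniqueness clause of Theorem \ref{mainthm}. Two points need correcting or completing. First, the components $(A_0)_{ij}(\tau=0)$ are \emph{not} fixed by (\ref{initA}): the Lorentz gauge relation on $\Sigma_0$ determines $e_0(A_0)_{ij}$, not $(A_0)_{ij}$ itself (see (\ref{initfixed1}) and the footnote in Remark \ref{initfixed}). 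In the paper these components are \emph{freely assigned} in $H^{3,\alpha}(\Sigma_0)$ (step $(iii)$ of the proof, e.g.\ one may take $(u_0)_{ab}(\tau=0)=0$); only the time derivatives $\partial_0(u_\nu)_{ij}(\tau=0)$ are then fixed by (\ref{initA}), via (\ref{e_0(u)init}), and land in $H^{2,\alpha-\frac{3}{2}}$. This matters in the uniqueness step: precisely because $(A_0)_{ij}$ is a free gauge datum, one must \emph{use} the freedom in Lemma \ref{gaugemap} to build the frame $\{\tilde e_i\}$ for $\tilde g$ so that $(\tilde u_0)_{ab}(\tau=0)=(u_0)_{ab}(\tau=0)$; it is not automatic from "geometric determination" or from uniqueness of the gauge construction.

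Second, the obstacle you flag at the end — solvability of the gauge-fixing equation for $\tilde O$ in the weighted spaces near the singular sphere — is exactly the step the paper closes, and it is closed by a concrete observation you should supply rather than leave open: after subtracting the equation satisfied by the frame $\{e_i\}^3_0$ (equivalently the Schwarzschild background pieces), the system (\ref{boxO2}) for the difference becomes a \emph{semilinear} system of the same schematic form as (\ref{boxu}), so the weighted energy/contraction machinery of Theorem \ref{mainthm} applies verbatim to it; the extra hypothesis $\tilde O^d_c-{I_c}^d\in C([0,T];H^{4,\alpha+\frac{3}{2}})$ is needed because the $\mathrm{div}A$ term on the right of (\ref{boxO2}) sits one derivative higher, and this is what the $H^{4,\alpha+\frac{3}{2}}$ assumption compensates. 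With those two adjustments your argument coincides with the paper's proof.
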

The fact that such (non-spherically symmetric) initial data sets $(\Sigma_0,\overline{g},K)$ exist,
in compliance with Theorem \ref{mainthm2}, is shown in \S\ref{singconst}.
\begin{proof}[Proof of Theorem \ref{mainthm2}]
We want to invoke Theorem \ref{mainthm}. For this purpose, we prescribe
initial data for the system (\ref{boxu}):\\
$(i)$ The components (\ref{initcond2}), (\ref{initf2}), (\ref{diffk})
are given.\\
$(ii)$ Since in the beginning of \S\ref{pertsp} we assumed $e_0=\partial_0$ and since $\{e_i\}^3_1$ is initially tangent to
$\Sigma_0$, we set
\begin{align}\label{restinitf}
O^b_0(\tau=0)={I_0}^b&&O^0_a(\tau=0)={I_a}^0&&a,b=0,1,2,3.
\end{align}
$(iii)$ We (freely) assign\footnote{The functions $(u_0)_{ab}(\tau=0)$ or equivalently $(A_0)_{ab}(\tau=0)$ fix the $\partial_0$ derivative of the frame $\{e_i\}_0^3$ on $\Sigma_0$; see Lemma \ref{gaugemap} and Remark \ref{initfixed}.}
\begin{align}\label{restdir}
(u_0)_{ab}(\tau=0):=(A_0)_{ab}-{^S(A_0)_{ab}}\in H^{3,\alpha}(\Sigma_0),\qquad\qquad a,b=0,1,2,3.
\end{align}
Once we have prescribed the above, the components $\partial_0(u_\nu)_{ij}(\tau=0)$ are fixed by the assumption (\ref{initA}) on the initial data of the original system (\ref{system1}); see Remark \ref{initfixed}. Indeed, subtracting the corresponding Schwarzschild components from (\ref{initfixed1}),(\ref{initfixed2}), which obvisouly satisfy the same initial relations, cf. (\ref{divA2}), we obtain schematically:
\begin{align}\label{e_0(u)init}
\partial_0(u_\nu)_{ij}=O\partial_au+\Gamma_\frac{3}{2}u
+\Gamma_3(O-I)+u^2&&\text{on $\Sigma_0$,\; $a=1,2,3.$}
\end{align}
By (\ref{Sobprop1}) and standard Sobolev embedding we conclude that
\begin{align}\label{restneum}
\partial_0(u_\nu)_{ij}(\tau=0)\in H^{2,\alpha-\frac{3}{2}}&&\nu,i,j=0,1,2,3.
\end{align}
Thus, the assumption (\ref{initf}) is verified and Theorem \ref{mainthm} can be invoked. From Proposition \ref{IDprop} it follows that the solution (\ref{solsp}) of (\ref{boxu}) and hence of (\ref{system1}) yields indeed an Einstein vacuum spacetime $(\{\Sigma_\tau\}_{\in[0,T]},g)$.

To prove uniqueness (up to isometry) we rely on the uniqueness statement in Theorem \ref{mainthm}. Suppose there is another Einsteinian vacuum development $(\tilde{\mathcal{M}}^{1+3},\tilde{g})$ of the initial data set $(\Sigma_0,\overline{g},K)$, diffeomorphic to $\{\Sigma_\tau\}_{\tau\in[0,T]}$, satisfying the hypothesis (\ref{initcond2}), (\ref{initf2}), (\ref{diffk}); defined by pulling back the relevant quantities through the preceding diffeomorphism, taking differences etc. 
In order to use the uniqueness statement in Theorem \ref{mainthm}, we need the two spacetimes to have the same initial data for the system (\ref{boxu}).
The part of the initial data set given by the assumptions in the statement of Theorem \ref{mainthm2} is of course identical for both spacetimes.
The remaining components that we want to agree, other than the $(\tilde{u}_0)_{ab}(\tau=0)$'s, as noted in the previous paragraph, can be fixed by condition (\ref{initA}). Therefore, we get identical initial data components for the system (\ref{boxu}) by constructing 
a Lorentz gauge frame (\ref{gaugemain}) $\{\tilde{e}_i\}^3_0$ for $\tilde{g}$, which is initially equal to $\{e_i\}^3_0$ on $\Sigma_0$ and such that $(\tilde{u}_0)_{ab}(\tau=0)=(u_0)_{ab}(\tau=0)$ as well; see Lemma \ref{gaugemap}. The only assumption to be verified is the well-posedness of the system (\ref{boxO2}) for functions in the solution spaces (\ref{solsp}), after taking differences with the equation for the frame $\{e_i\}^3_0$. However, this falls in the category of the system (\ref{boxu}) [in fact simpler, being semilinear] to which Theorem \ref{mainthm} can be applied. The extra derivative that we have to assume in order to close, $\tilde{O}^d_c-{I_c}^d\in H^{4,\alpha+\frac{3}{2}}$, is due to the $\text{div}A$ term in the RHS of (\ref{boxO2}). 
\end{proof}

\section{Proof of Theorem \ref{mainthm}}\label{fixedpt}

Throughout this section we will use the notation $X\lesssim Y$ to denote an inequality between the
quantities $X,Y$ of the form
$X\leq CY$, where $C$ is an absolute positive constant depending only on the Schwarzschild mass $M>0$. The same
for the standard notation $O(X)$, for a quantity bounded by $|O(X)|\leq CX$, $X>0$. Furthermore, all the
estimates regard only the Schwarzschild region foliated by $\{\Sigma_\tau\}_{\tau\in[0,T]}$; Figure \ref{genS_fol}.

\subsection{Proof outline}\label{outlpf}

We prove Theorem \ref{mainthm} via a contraction mapping argument. First we establish an energy estimate in the relevant weighted $H^3$ spaces in \S\ref{Enest}. Then we obtain a contraction, in \S\ref{Contr}, in the corresponding spaces of one derivative less, see (\ref{contr}), which together with the energy estimate yield the desired solution (\ref{solsp}).\\
To derive these estimates we have to eliminate some {\it critical} terms which are generated due to the singularities in the coefficients of the equations, having larger weights than the ones in the norm (\ref{Hsa}), and which prevent us from closing (see Propositions \ref{mainenest},\ref{maincontrest}). This is where the role of the weights (\ref{Hsa}) comes in. The parameter $\alpha>0$ helps generate critical terms with a favourable sign. Being large enough, but finite, $\alpha$ provides an overall negative sign for the critical terms, hence, rendering them removable from the RHS of the final inequalities. This enables us to close the estimates and complete the proof. The precise asymptotics of the singularities in the coefficients of the equations (\ref{boxu}), at $\tau=x=0$, and the opening up rate of the radius function $r$ in $\tau>0$ play a crucial role here.\footnote{If we were to tweak the leading orders just by $\epsilon>0$, the previous procedure would fail no matter how large $\alpha>0$ is to begin with.}

\subsection{Basic estimates}

Let $v$
be a scalar function defined on $\Sigma_\tau$,
represented by
\begin{align}\label{scal}
v\circ\psi_\tau:U_\tau\to\mathbb{R},
\end{align}
where $\psi_\tau:U_\tau\to\Sigma_\tau$ is the $(x,\theta,\phi)$ coordinate chart.
We recall some standard inequalities: the classical Sobolev embedding of $H^2(U)$
in $L^\infty(U)$
and the interpolation inequality
\begin{align}\label{G-N}
\|v\|_{L^4(U)}\leq C\|v\|^\frac{1}{4}_{L^2(U)}\|\nabla v\|^\frac{3}{4}_{L^2(U)},
\end{align}
for a bounded domain $U\subset\mathbb{R}^3$ with (piecewise) $C^2$ boundary. In the following proposition $v$ is assumed to be regular enough such that the RHSs make sense. 
\begin{proposition}\label{adaptineq}
For a general function $v:\Sigma_\tau\to\mathbb{R}$, $\tau\in[0,T]$,
with the appropriate regularity, the following inequalities hold:\\ 
The $L^\infty$ bound
\begin{align}\label{Linfty}
\|\frac{v}{r^k}\|_{L^\infty(\Sigma_\tau)}\lesssim(k+1)^2\|v\|_{H^{2,k+3+\frac{1}{4}}(\Sigma_\tau)}
\end{align}
and the $L^4$ estimate
\begin{align}\label{L4}
\|\frac{v}{r^k}\|_{L^4(\Sigma_\tau)}
\lesssim (k+1)^\frac{3}{4}\|v\|_{H^{1,k+1+\frac{1}{4}}(\Sigma_\tau)}.
\end{align}
\end{proposition}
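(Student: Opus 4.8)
The two inequalities will both follow by unwinding the definition of the weighted norms (\ref{Hsa}) and feeding the resulting Euclidean-looking quantities into the standard Sobolev embedding $H^2(U)\hookrightarrow L^\infty(U)$ and the Gagliardo--Nirenberg inequality (\ref{G-N}), after correcting for the discrepancy between the Schwarzschild volume form $d\mu_{^S\overline{g}}$ and the flat coordinate measure $dx\,d\theta\,d\phi$, and between the orthonormal derivatives $\partial_1,\partial_2,\partial_3$ of (\ref{Sframe}) and the coordinate derivatives $\partial_x,\partial_\theta,\partial_\phi$. The plan is as follows. First I would fix $\tau\in[0,T]$ and pass to the coordinate chart $\psi_\tau$ of (\ref{scal}); by (\ref{dmu}) we have $d\mu_{^S\overline{g}}=\Omega r^2\sin\theta\,dx\,d\theta\,d\phi$ with $\Omega r^2 \sim 4\sqrt2 M^{3/2} r^{3/2}$, so the weighted $H^{s,\alpha}$ norm is comparable to a sum of flat $L^2$ norms of $r^{-\alpha+\frac32(k-1)}\,\partial^{(k)}v$ against $r^{3/2}\,dx\,d\theta\,d\phi$; equivalently, of $r^{-\alpha+\frac32 k-\frac34}\partial^{(k)}v$ against the flat measure. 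The key algebraic bookkeeping is that each orthonormal derivative $\partial_i$ ($i=1,2,3$) equals $r^{1/2}$ (up to the bounded factor $\Omega^{-1}\sim r^{1/2}$, cf. (\ref{overasym1})) or $r^{-1}$ times a coordinate derivative, but crucially $\partial_1(r^{-q})=O(q\,r^{-q-1/2})$ and $\partial_2(r^{-q})=\partial_3(r^{-q})=0$ (as in the proof of Lemma \ref{Sobprop}); so when a power of $r^{-1}$ is commuted through a derivative, the price is only a gain of $r^{-1/2}$ and a linear factor of the exponent, which accounts for the polynomial prefactors $(k+1)^2$ and $(k+1)^{3/4}$.

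For the $L^\infty$ bound (\ref{Linfty}): I would write $w:=v/r^k$ and estimate $\|w\|_{L^\infty}$ by $\|w\|_{L^\infty(U_\tau)}$ in the chart, then apply $H^2(U_\tau)\hookrightarrow L^\infty(U_\tau)$, so that $\|w\|_{L^\infty}\lesssim \sum_{m\le 2}\|\partial^{(m)}_{\mathrm{coord}} w\|_{L^2(U_\tau,\,dx\,d\theta\,d\phi)}$. Expanding $\partial^{(m)}_{\mathrm{coord}}(v\,r^{-k})$ by Leibniz produces terms $r^{-k-\frac12 j}\,(\text{lower coordinate derivatives of }v)$ with combinatorial coefficients bounded by $C(k+1)^j$, $j\le m\le 2$; rewriting coordinate derivatives in terms of the orthonormal $\partial_i$ costs additional bounded powers of $r^{1/2}$ or $r^{-1}$, and one then recognizes each flat $L^2$ term as (a piece of) $\|v\|_{H^{2,k+3+\frac14}}$ after matching the weight exponent: the $k$-th orthonormal derivative of $v$ should be weighted by $r^{-\alpha+\frac32(k-1)}$ with $\alpha = k+3+\frac14$, and one checks the exponents line up precisely because the flat measure carries an extra $r^{-3/2}$ relative to $d\mu_{^S\overline{g}}$ and the embedding eats two derivatives. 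The $L^4$ estimate (\ref{L4}) is entirely analogous but uses (\ref{G-N}) in place of the $L^\infty$ embedding: $\|w\|_{L^4}\lesssim \|w\|_{L^2}^{1/4}\|\nabla w\|_{L^2}^{3/4}$, which only sees one coordinate derivative and hence matches $\|v\|_{H^{1,k+1+\frac14}}$, with the prefactor $(k+1)^{3/4}$ coming from the single Leibniz hit on $r^{-k}$ raised to the power $3/4$ by the interpolation exponent.

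I expect the main obstacle to be purely the weight-exponent bookkeeping: making sure that after (i) converting $d\mu_{^S\overline{g}}$ to the flat measure, (ii) converting orthonormal derivatives to coordinate ones, and (iii) distributing coordinate derivatives across the factor $r^{-k}$ via Leibniz, every resulting term has its power of $r$ exactly equal to the one dictated by the target norm $H^{2,k+3+\frac14}$ (resp. $H^{1,k+1+\frac14}$) — with no term being more singular, which is what forces the specific shifts $+3+\frac14$ and $+1+\frac14$ (the $3$ and $1$ from the number of derivatives in the embedding, the $\frac14$ from the $L^4$-to-$L^2$ interpolation loss in (\ref{G-N}) and the $H^2$-to-$L^\infty$ margin). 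The polynomial-in-$k$ dependence of the constants requires only that the Leibniz combinatorial factors and the factors from differentiating powers of $r$ are tracked explicitly rather than absorbed into $C$; everything else is the standard Sobolev/Gagliardo--Nirenberg machinery on the bounded chart $U_\tau$, uniformly in $\tau$ since the $\tau$-dependence sits entirely inside the (bounded, $\tau$-monotone) weights and the fixed coordinate domain.
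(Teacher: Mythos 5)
Your proposal is correct and is essentially the paper's own argument: the paper proves (\ref{Linfty}) precisely by passing to the chart $\psi_\tau$, applying $H^2(U_\tau)\hookrightarrow L^\infty(U_\tau)$, and then substituting the volume form (\ref{dmu}) and the frame (\ref{Sframe}) to convert the flat norm of $v/r^k$ into $\|v\|_{H^{2,k+3+\frac14}}$ (the $(k+1)^2$ coming from the derivatives falling on $r^{-k}$, with $|\partial_1 r^{-q}|\lesssim q\,r^{-q-\frac12}$ and $\partial_2,\partial_3$ annihilating $r$), and it obtains (\ref{L4}) ``similarly'' via (\ref{G-N}), exactly as you do. The only difference is that you spell out the Leibniz/weight bookkeeping that the paper leaves implicit.
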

\begin{proof}
From the embedding $H^2(U_\tau)\hookrightarrow L^\infty(U_\tau)$ we have 
\begin{align*}
\|\frac{v}{r^k}\|_{L^\infty(\Sigma_\tau)}\overset{(\ref{scal})}{=}\|\frac{v}{r^k}\circ\psi_\tau\|_{L^\infty(U_\tau)}\lesssim&\;
\|\frac{v}{r^k}\circ\psi_\tau\|_{H^2(U_\tau)}\\
\tag{substituting (\ref{dmu}) and the frame (\ref{Sframe})}
\lesssim&\;(k+1)^2\|v\|_{H^{2,k+3+\frac{1}{4}}(\Sigma_\tau)}.
\end{align*}
We argue similarly in the case of (\ref{L4}).
\end{proof}
\subsection{Energy estimate in $H^{3,\alpha}$}\label{Enest}

We define now the mapping, which really corresponds to an iterative process. 
Let $\{\overline{u},\overline{O}\}:=\{(\overline{u}_\nu)_{ij},\overline{O}^d_c:\nu,i,j,c,d=0,1,2,3\}$ be a set of functions in the solution spaces (\ref{solsp}), verifying $|\overline{O}^d_c-{I_c}^d|<\varepsilon$ initially on $\Sigma_0$. Let $T>0$ be sufficiently small such that
\begin{align}\label{venest}
\mathcal{E}(\overline{u},\overline{O};\alpha,T)\leq2\mathcal{E}_0.
\end{align}
We also assume\footnote{Any assumptions that we make on the functions 
$ \overline{u},\overline{O}$, we must derive for the next set of functions $u,f$ below.}
\begin{align}\label{norme_0barf}
\|\partial_0(\overline{O}^d_c)\|^2_{H^{2,\alpha}[\tau]}\lesssim\mathcal{E}^2_0+\mathcal{E}_0&&\forall\tau\in[0,T],\;\;c,d=0,1,2,3.
\end{align}

{\it Iteration step}: Consider the following linear version of the system
(\ref{boxu}), where we replace the functions $u,O$ in the following specific terms by the corresponding ones from the set $\{\overline{u},\overline{O}\}$:
\begin{align}\label{modboxu}
\notag \overline{h}^{ab}\partial_a\partial_b(u_\nu)_{ij}
=&\;
\overline{O}\Gamma_{\frac{3}{2}}\partial u+\overline{O}\Gamma_3 u
+\overline{O}\Gamma_{\frac{9}{2}}(O-I)
+\overline{O}\Gamma_3\partial (O-I)\\
&+\Gamma_3\overline{u}^2
+\overline{O}\overline{u}\partial \overline{u}
+\overline{u}^3
+\overline{O}\partial (\overline{O}-I)\partial \overline{u}\\
\partial_0(O^d_c-{I_c}^d)=&\;\Gamma_{\frac{3}{2}}(O-I)+(\overline{O}-I)\overline{u}
\notag+u,
\end{align}
where $\overline{h}^{ab}=m^{cd}\overline{O}^a_c\overline{O}^b_d$. Observe that we kept in the RHS of (\ref{modboxu}) the functions $u,O$ attached to the most singular coefficients of the system. This is actually very important to our strategy. Had we replaced them with the corresponding functions $\overline{u},\overline{O}$ as well, it would not be plausible to derive a weighted energy estimate.

We assume now there exists a solution $(u_\nu)_{ij},O^d_c-{I_c}^d$ of (\ref{modboxu}) lying in the solution space (\ref{solsp}). The existence of such a solution is based mainly on the energy estimate we will derive below and a standard duality argument which we omit. \\
{\it Claim}: For a chosen large enough $\alpha>0$ and $T>0$ sufficiently small (depending on $\mathcal{E}_0,\alpha$) the following estimate holds
\begin{align}\label{uenest}
\mathcal{E}(u,O;\alpha,T)\leq2\mathcal{E}_0.
\end{align}
The preceding $H^3$-weighted energy estimate, cf. (\ref{wen}), will be used in the next subsection to close the contraction argument that yields the existence and uniqueness of the solution (\ref{solsp}) to (\ref{boxu}). Now we begin the proof of (\ref{uenest}):

First note that by the fundamental theorem of calculus, following a $\partial_0$ integral curve and employing (\ref{Linfty}), we readily obtain from our initial assumptions and (\ref{norme_0barf}) the pointwise bound
\begin{align}\label{Linftybarf}
\sup_{\tau\in [0,T]}\|\overline{O}- I|\|_{L^\infty(\Sigma_\tau)}\leq \varepsilon+CT\mathcal{E}_0<2\varepsilon,
\end{align}
provided $\alpha\geq\frac{1}{2}+3+\frac{1}{4}$ and $T<\frac{\varepsilon}{C\mathcal{E}_0}$. 

All the more, directly from the ODE in (\ref{modboxu}) we deduce the estimate: [applying the bounds (\ref{Linfty}), (\ref{venest}) to $(\overline{O}- I)\overline{u}$ and employing the asymptotics (\ref{Gamma_q})] 
\begin{align}\label{norme_0f}
\|\partial_0(O^d_c)\|^2_{H^{2,\alpha}[\tau]}\lesssim \mathcal{E}_0^2+\|O-I\|^2_{H^{2,\alpha+\frac{3}{2}}[\tau]}+\|u\|^2_{H^{2,\alpha}[\tau]},
\end{align}
for all $\tau\in[0,T]$, $c,d=0,1,2,3$.\footnote{This estimate, together with (\ref{uenest}) in the end, imply the analogue of (\ref{norme_0barf}) for the functions $\partial_0(O^d_c)$.}

We derive (\ref{uenest}) in the backward domain of dependence of $\Sigma_0$ w.r.t. the metric $(g_{ab})_{\overline{u}}:=g_{\overline{u}}(\partial_a,\partial_b)$, $a,b=0,1,2,3$, whose inverse is given by $g^{ab}_{\overline{u}}:=\overline{h}^{ab}$; compare to (\ref{h}). The boundary of the domain is the backward incoming
$g_{\overline{u}}$-null hypersurface $\mathcal{N}^{\overline{u}}$ emanating from $\partial\Sigma_0$ (Figure \ref{fol_v}). We foliate the domain by the $\tau=\text{const.}$ hypersurfaces $\Sigma^{\overline{u}}_\tau$ inside $\mathcal{N}^{\overline{u}}$. Let $\rho$ be the scalar function defined near $\mathcal{N}^{\overline{u}}$ via
\begin{align}\label{rhoconst}
\rho(\mathcal{C}^{\overline{u}}_\tau):=T-\tau,
\end{align}  
where $\mathcal{C}^{\overline{u}}_\tau$ is the cylinder obtained from the flow of $\partial\Sigma_\tau^{\overline{u}}$ backwards along the integral curves of $\partial_0$. Using $\rho$ we may write each leaf of the foliation as 
\begin{align}\label{folSigma}
\Sigma_\tau^{\overline{u}}=\bigcup_{t^*\in[\tau,T]}\{\rho_\tau=T-t^*\}\bigcup B_\tau&&
\tau\in[0,T],
\end{align}
where $\rho_\tau:=\rho\big|_{\Sigma_\tau^{\overline{u}}}$ and $B_\tau$ 
is simply the projection of $\Sigma_T^{\overline{u}}$ onto $\Sigma_\tau^{\overline{u}}$ through 
the integral curves of $\partial_0$. 
\begin{figure}[h!]
  \centering
    \includegraphics[width=0.7\textwidth]{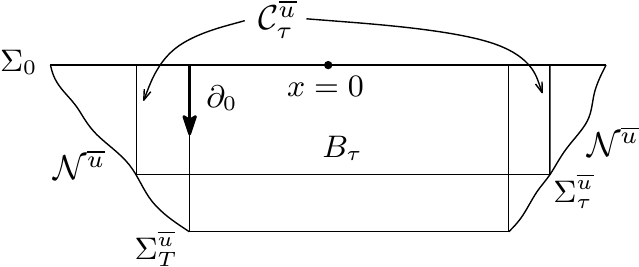}
      \caption{}
\label{fol_v}
\end{figure}
Since by definition $\rho+\tau-T$ is zero on $\mathcal{N}^{\overline{u}}$, 
it follows that the $g_{\bar{u}}$-gradient of $\rho+\tau-T$, on $\mathcal{N}^{\overline{u}}$, lies on the hypersurface itself and furthermore it is $g_{\bar{u}}$-null, i.e., $\rho$ satisfies the eikonal equation
\begin{align}\label{rhoeik}
\big|\nabla_{g_{\overline{u}}}(\rho+\tau-T)\big|^2_{g_{\overline{u}}}=&\;\overline{h}^{AB}\partial_A(\rho)\partial_B(\rho)
+\Omega^{-2}\overline{h}^{00}+2\Omega^{-1}\overline{h}^{A0}\partial_A(\rho)\\
\notag=&\;0\qquad\qquad\text{on $\mathcal{N}^{\overline{u}}$},
\end{align}
where $A,B=1,2,3$.

In this regard, we define the following adapted energy, which controls the part of the total energy (\ref{wen}) that refers to $u$:
\begin{align}\label{E_s,alpha}
E_{s+1,\alpha}[u](\tau):=&\frac{1}{2}\sum_{\nu,i,j}\sum_{|J|\leq s}\int_{\Sigma_\tau^{\overline{u}}}\bigg[-\overline{h}^{00}\frac{\big[\partial_0(u_\nu)_{ij,J}\big]^2}{r^{2\alpha-3|J|}}\\
&+\overline{h}^{AB}\frac{\partial_A(u_\nu)_{ij,J}}{r^{\alpha-\frac{3}{2}|J|}}
\frac{\partial_B(u_\nu)_{ij,J}}{r^{\alpha-\frac{3}{2}|J|}}
\notag+\frac{(u_\nu)_{ij,J}^2}{r^{2\alpha-3(|J|-1)}}\bigg]d\mu_{^S\overline{g}},
\end{align}
where $(u_\nu)_{ij,J}:=\partial^{(J)}(u_\nu)_{ij}$ and $J$ is a spatial multi-index (containing only directions $\partial_1,\partial_2,\partial_3$).
It is evident from (\ref{Linftybarf}), $\overline{h}^{ab}=m^{cd}\overline{O}^a_c\overline{O}^b_d$, that 
$E_{3,\alpha}$ is equivalent to the weighted 
$H^{3,\alpha}\times H^{2,\alpha-\frac{3}{2}}$ norm of $u$ on $\Sigma_\tau^{\overline{u}}$. 

We summarize in the following proposition the main energy estimates derived below.
\begin{proposition}\label{mainenest}
The following two energy estimates hold:
\begin{align}\label{E_sest}
\notag&\partial_\tau E_{3,\alpha}[{ u}]+8M^2e^{-1}(1-\tau)\alpha E_{3,\alpha+1}[{ u}]\\
\lesssim&\; (\mathcal{E}_0^\frac{1}{2}+\mathcal{E}_0+\alpha^2+\alpha^3\mathcal{E}_0)E_{3,\alpha}[{ u}]+E_{3,\alpha+1}[{ u}]+\mathcal{E}_0\|{ O-I}\|^2_{H^{3,\alpha+\frac{3}{2}}}\\
&\notag
+\|{ O-I}\|^2_{H^{3,\alpha+\frac{5}{2}}}
+\alpha^3\mathcal{E}_0^2+\mathcal{E}_0^3
\end{align}
\begin{align}\label{Oest}
&\frac{1}{2}\partial_\tau\sum_{c,d}\|O^d_c-{I_c}^d\|^2_{H^{3,\alpha+\frac{3}{2}}}
+4M^2e^{-1}(1-\tau)\alpha\sum_{c,d}\|O^d_c-{I_c}^d\|^2_{H^{3,\alpha+\frac{5}{2}}}\\
\notag\lesssim&\;\|{ O-I}\|^2_{H^{3,\alpha+\frac{5}{2}}}
+E_{3,\alpha+1}[{ u}]+\mathcal{E}_0^2,
\end{align}
for all $\tau\in(0,T)$.
\end{proposition}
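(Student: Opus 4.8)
\textbf{Proof plan for Proposition \ref{mainenest}.}

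The plan is to run a standard energy identity for the wave operator $\overline h^{ab}\partial_a\partial_b$, applied to the differentiated quantities $(u_\nu)_{ij,J}=\partial^{(J)}(u_\nu)_{ij}$ with $|J|\le 3$, but carried out against the \emph{weighted} multiplier $\partial_0(u_\nu)_{ij,J}/r^{2\alpha-3|J|}$ rather than the usual $\partial_0(u_\nu)_{ij,J}$. First I would commute $\partial^{(J)}$ through the equation (\ref{modboxu}): each commutator $[\partial^{(J)},\overline h^{ab}\partial_a\partial_b]$ produces lower-order terms with at most $|J|$ derivatives on $u$ and derivatives falling on $\overline h$ (hence on $\overline O$, controlled by $\mathcal{E}(\overline u,\overline O;\alpha,T)\le 2\mathcal{E}_0$ via (\ref{venest}) and the $L^\infty$ bound (\ref{Linftybarf})), and commutators of $\partial^{(J)}$ with the RHS source terms, estimated using the asymptotics (\ref{Gamma_q}) for the $\Gamma_q$ coefficients together with the Sobolev-type inequalities (\ref{Linfty}), (\ref{L4}) of Proposition \ref{adaptineq} and the mapping properties of the weighted spaces in Lemma \ref{Sobprop}. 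The key point is to keep careful track of which power of $r$ each term carries, so that it can be matched against $E_{3,\alpha}$ or the ``gain'' term $E_{3,\alpha+1}$.

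Next I would integrate the weighted energy identity over the domain bounded by $\Sigma_\tau^{\overline u}$ and the null hypersurface $\mathcal{N}^{\overline u}$, using that $\overline h^{ab}$ is Lorentzian with $e_0=\partial_0$ timelike (so the boundary term on $\mathcal{N}^{\overline u}$ has a sign, by the eikonal relation (\ref{rhoeik}) and the hyperbolicity bounds (\ref{hyp})) and can be discarded. Differentiating in $\tau$ gives $\partial_\tau E_{3,\alpha}[u]$ plus a bulk term coming from two sources: (a) the $\tau$-derivative of the volume form and of the weights $r^{-2\alpha+3|J|}$, and (b) the ``$\partial_0$ hitting a weight'' term in the integration by parts that produces the divergence structure. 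Using (\ref{dmu/dtau}) and $\partial_\tau r\sim \tfrac{\Omega^2}{4M}(1-\tau)$ from (\ref{dr/dtau}), one finds that differentiating the weight $r^{-(2\alpha-3|J|)}$ along $\partial_0$ contributes a factor $\sim (2\alpha-3|J|)\,\Omega^2 r^{-1}(1-\tau)/(4M)\cdot r^{-1}$ times the already-weighted quadratic, i.e. precisely a positive multiple of the higher-weight energy $E_{3,\alpha+1}[u]$; with $\Omega^2\sim 32M^3/r$ and $r^2\sim 16M^2(\tfrac{x^2}2+\tau)$ one extracts the stated constant $8M^2 e^{-1}(1-\tau)\alpha$ to leading order as $r\to 0$. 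This is the mechanism by which $\alpha$ enters as a large coercive parameter: the good term on the left grows linearly in $\alpha$, while the genuinely dangerous error terms (the ``critical'' ones, carrying weight $\alpha+1$ but not absorbable into $E_{3,\alpha+1}$ because of an unfavorable constant) can be beaten once $\alpha$ is chosen large. The remaining errors are sorted into: terms $\lesssim E_{3,\alpha}[u]$ (with the displayed polynomial-in-$\alpha$, polynomial-in-$\mathcal{E}_0$ coefficient, from low-order commutators and cubic/quadratic nonlinearities estimated via (\ref{Linfty})–(\ref{L4})); a harmless $E_{3,\alpha+1}[u]$ with a small absolute constant, absorbed by choosing $\alpha$ large relative to it; the $\overline O$-sourced terms giving $\mathcal{E}_0\|O-I\|^2_{H^{3,\alpha+3/2}}+\|O-I\|^2_{H^{3,\alpha+5/2}}$ (the second with an extra half-power of $r$ gained from the coefficient $\Gamma_3$ versus $\Gamma_{9/2}$ balancing against the weight shift); and the fully-$\overline{\phantom u}$ inhomogeneities $\overline u^2,\overline u^3,\dots$ giving the constant terms $\alpha^3\mathcal{E}_0^2+\mathcal{E}_0^3$ after using (\ref{venest}). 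Summing over $\nu,i,j$ and $|J|\le 3$ yields (\ref{E_sest}).

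The estimate (\ref{Oest}) for $O-I$ is considerably softer: the ODE $\partial_0(O^d_c-{I_c}^d)=\Gamma_{3/2}(O-I)+(\overline O-I)\overline u+u$ is differentiated $\le 3$ times in the spatial directions, multiplied by $\partial^{(J)}(O^d_c-{I_c}^d)/r^{2\alpha+3-3|J|}$ (the weight dictated by the $H^{3,\alpha+3/2}$ norm in (\ref{wen})), and integrated over $\Sigma_\tau^{\overline u}$; again $\partial_0$ on the weight produces a positive multiple of $\|O-I\|^2_{H^{3,\alpha+5/2}}$ with leading constant $4M^2 e^{-1}(1-\tau)\alpha$, the $\Gamma_{3/2}(O-I)$ term is exactly at the threshold to be absorbed into that good term (so it contributes only $\lesssim\|O-I\|^2_{H^{3,\alpha+5/2}}$ after Cauchy–Schwarz), the $u$ term is placed in $E_{3,\alpha+1}[u]$ (noting $H^{3,\alpha+1}$ is one half-power less singular than $H^{3,\alpha+5/2}$ but the $r^{3/2}$ in $d\mu_{{^S}\overline g}$ from (\ref{dmu}) does not help — rather one uses that $u$ appears undifferentiated relative to the top order on $O$, and (\ref{ufanal})'s convention of matching derivative counts makes this a fair trade), and $(\overline O-I)\overline u$ contributes $\mathcal{E}_0^2$ via (\ref{venest}) and (\ref{Linftybarf}).

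\textbf{Main obstacle.} The delicate point is the bookkeeping of the critical terms in step two: one must verify that \emph{every} term generated with weight strictly worse than $E_{3,\alpha}$ is either (i) of the form $+c\,E_{3,\alpha+1}[u]$ with $c$ an absolute constant independent of $\alpha$, so that it is dominated by the coercive $8M^2 e^{-1}(1-\tau)\alpha E_{3,\alpha+1}[u]$ once $\alpha$ is large, or (ii) carries an explicit power of $r$ that upgrades it back to an $E_{3,\alpha}$-level term. This is exactly the ``threshold'' phenomenon flagged in Remarks \ref{stabred}, \ref{Linfty1/r2} and in the proof outline \S\ref{outlpf}: the blow-up rate of the curvature coefficient $\Gamma_3\sim 2M/r^3$ (cf. (\ref{K})) and the opening rate $r^2\sim 16M^2\tau$ (cf. (\ref{rasym})) conspire so that the coercive gain is linear in $\alpha$ while the worst error constants are $\alpha$-independent — but only just. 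If any such constant secretly grew in $\alpha$, or if the power of $r$ in a coefficient were off by $\epsilon>0$, no choice of $\alpha$ would close the estimate. So the heart of the proof is a case-by-case check that the specific $\Gamma_q$-weights appearing in (\ref{boxu}) are all \emph{no worse} than the borderline values, combined with the observation that the one genuinely borderline term (the potential term with coefficient $\Gamma_3$, i.e. the curvature) enters the energy identity with a sign that, after the weighted integration by parts, is controlled by the $\alpha$-linear good term.
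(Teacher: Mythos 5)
Your plan is essentially the paper's own argument: a weighted energy identity with multiplier $\partial_0(u_\nu)_{ij,J}/r^{2\alpha-3|J|}$, the coercive $\alpha$-term extracted from differentiating the weights via (\ref{dr/dtau})--(\ref{dmu/dtau}), discarding the signed boundary terms on the $g_{\overline u}$-null hypersurface via the eikonal relation, commutator and source estimates through (\ref{Gamma_q}), Proposition \ref{adaptineq} and Lemma \ref{Sobprop}, and the softer weighted ODE estimate for $O-I$; the only differences are bookkeeping details (the wave part uses $|J|\le 2$, not $|J|\le 3$, since $E_{3,\alpha}$ already carries one derivative in its density, the paper needs the auxiliary Lemma \ref{e_0e_0u} to control $\partial_0^2 u$ appearing in the $|J|=2$ commutators, and the borderline $\Gamma_3$ potential term is not handled by a sign but simply Cauchy--Schwarzed into $E_{3,\alpha+1}[u]$ with an $\alpha$-independent constant and absorbed). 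So the proposal is correct in approach and matches the paper.
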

The overall energy estimate (\ref{uenest}) follows from Proposition \ref{mainenest}:
Adding (\ref{E_sest}), (\ref{Oest}) we wish to close the estimate by employing the standard Gronwall lemma.
However, this is not possible in general, because of 
the critical energies in the RHS, having larger weights than the ones differentiated in the LHS, namely, $E_{3,\alpha+1}[{ u}]$, $\|{ O-I}\|^2_{H^{3,\alpha+\frac{5}{2}}}$ instead of $E_{3,\alpha}[{ u}],\|{ O-I}\|^2_{H^{3,\alpha+\frac{3}{2}}}$. It is precisely at this point that the role of the weights we introduced is revealed.
Choosing $\alpha>0$ large enough to begin with, how large depending on the 
constants in the above inequalities, we absorb the critical terms 
\begin{align*}
E_{3,\alpha+1}[{ u}],\|{ O-I}\|^2_{H^{3,\alpha+\frac{5}{2}}}
\end{align*}
in the LHS and then the standard Gronwall lemma applies to give (\ref{uenest}).
\begin{proof}[Proof of (\ref{E_sest})]
Let
\begin{align}\label{topE}
P_{J,\alpha}:=\frac{1}{2}\bigg[-\overline{h}^{00}\frac{\big[\partial_0(u_\nu)_{ij,J}\big]^2}{r^{2\alpha-3|J|}}
+\overline{h}^{AB}\frac{\partial_A(u_\nu)_{ij,J}}{r^{\alpha-\frac{3}{2}|J|}}
\frac{\partial_B(u_\nu)_{ij,J}}{r^{\alpha-\frac{3}{2}|J|}}
+\frac{(u_\nu)_{ij,J}^2}{r^{2\alpha+3-3|J|}}\bigg],
\end{align}
for any spatial multi-index $J$ with $|J|\leq2$; recall $(u_\nu)_{ij,J}:=\partial^{(J)}(u_\nu)_{ij}$.
It follows from (\ref{folSigma}) and the coarea formula that
\begin{align}\label{E/dtau}
\partial_\tau\int_{\Sigma_\tau^{\overline{u}}} P_{J,\alpha}d\mu_{^S\overline{g}}=
&-\int_{\partial\Sigma_\tau^{\overline{u}}} \frac{P_{J,\alpha}}
{|{^S\overline{\nabla}}\rho|}dS
+\int_{\Sigma_\tau^{\overline{u}}} \partial_\tau P_{J,\alpha}d\mu_{^S\overline{g}}\\
\notag&+\int_{\Sigma_\tau^{\overline{u}}} P_{J,\alpha}\partial_\tau d\mu_{^S\overline{g}},
\end{align}
where $^S\overline{\nabla}\rho$ stands for the gradient of $\rho$ with respect to the intrinsic connection on $(\Sigma_\tau,{^S\overline{g}})$ and $dS$ is the Schwarzschild induced volume form on $\partial\Sigma_\tau^{\overline{u}}$. Note that the boundary term in (\ref{E/dtau}) has a favourable sign. Since $\mathcal{N}^{\overline{u}}$ is $g_{\overline{u}}$\,-incoming null, we show that the sum of all arising boundary terms has a good sign and therefore can be dropped in the end. To analyse the last two terms in (\ref{E/dtau}), we recall the $\partial_\tau$ differentiation
formulas of the radius function $r$ (\ref{dr/dtau}), the estimate on volume form 
$d\mu_{^S\overline{g}}$ (\ref{dmu/dtau}) and the commutator relation $[\partial_0,\partial_B]={^S(A_{[0})_{B]}}^c\partial_c\overset{(\ref{overasym1})}{=}\Gamma_{\frac{3}{2}}\partial$:
\begin{align}\label{intP/dtau}
\notag&\int_{\Sigma_\tau^{\overline{u}}} \partial_\tau P_{J,\alpha}d\mu_{^S\overline{g}}
+\int_{\Sigma_\tau^{\overline{u}}} P_{J,\alpha}\partial_\tau d\mu_{^S\overline{g}}\\
\notag=& -8M^2(1-\tau)\alpha\int_{\Sigma_\tau^{\overline{u}}} e^{-\frac{r}{2M}}P_{J,\alpha+1}d\mu_{^S\overline{g}}
+\int_{\Sigma_\tau^{\overline{u}}} P_{J,\alpha}O(\frac{1}{r^2})d\mu_{^S\overline{g}}\\
&+\frac{1}{2}\int_{\Sigma_\tau^{\overline{u}}}\Omega\bigg[
-\partial_0(\overline{h}^{00})\frac{\big[\partial_0(u_\nu)_{ij,J}\big]^2}{r^{2\alpha-3|J|}}
+\partial_0(\overline{h}^{AB})\frac{\partial_A(u_\nu)_{ij,J}}{r^{\alpha-\frac{3}{2}|J|}}
\frac{\partial_B(u_\nu)_{ij,J}}{r^{\alpha-\frac{3}{2}|J|}}\bigg]d\mu_{^S\overline{g}}\\
\notag&+\int_{\Sigma_\tau^{\overline{u}}}\Omega\bigg[
-\overline{h}^{00}\frac{\partial_0(u_\nu)_{ij,J}\partial_0^2(u_\nu)_{ij,J}}{r^{2\alpha-3|J|}}
+\overline{h}^{AB}\frac{\partial_A(u_\nu)_{ij,J}}{r^{\alpha-\frac{3}{2}|J|}}
\frac{\partial_B\partial_0(u_\nu)_{ij,J}}{r^{\alpha-\frac{3}{2}|J|}}
\bigg]d\mu_{^S\overline{g}}\\
\notag&
+\int_{\Sigma_\tau^{\overline{u}}}\Omega
\overline{h}^{AB}\frac{\partial_A(u_\nu)_{ij,J}}{r^{\alpha-\frac{3}{2}|J|}}
\frac{\Gamma_{\frac{3}{2}}\partial(u_\nu)_{ij,J}}{r^{\alpha-\frac{3}{2}|J|}}d\mu_{^S\overline{g}}
+\int_{\Sigma_\tau^{\overline{u}}} \Omega\frac{(u_\nu)_{ij,J}\partial_0(u_\nu)_{ij,J}}{r^{2\alpha+3-|J|}}d\mu_{^S\overline{g}}
\end{align}
The first term on the LHS of (\ref{intP/dtau}) is critical having a favourable sign of magnitude $\alpha$. We use this term alone to absorb all arising critical terms in the process. 
Recall $|\overline{h}|=|\overline{O}^2|\leq1$, cf. (\ref{Linftybarf}), and the asymptotics (\ref{Gamma_q}). Also, applying (\ref{Linfty}) to $\partial_0\overline{ h}$ and (\ref{norme_0barf}) we derive
\begin{align*}
|\Omega \partial_0({ h})|\lesssim \mathcal{E}({ \overline{u},\overline{O}};\alpha,T)^\frac{1}{2},&&
\Omega\lesssim\frac{1}{r^{\frac{1}{2}}}, &&|\Gamma_\frac{3}{2}|\lesssim\frac{1}{r^\frac{3}{2}}.
\end{align*}
Hence, by Cauchy's inequality and (\ref{venest}) we have
\begin{align}\label{Pest1}
\notag&\frac{1}{2}\int_{\Sigma_\tau^{\overline{u}}}\Omega\bigg[
-\partial_0(\overline{h}^{00})\frac{\big[\partial_0(u_\nu)_{ij,J}\big]^2}{r^{2\alpha-3|J|}}
+\partial_0(\overline{h}^{AB})\frac{\partial_A(u_\nu)_{ij,J}}{r^{\alpha-\frac{3}{2}|J|}}
\frac{\partial_B(u_\nu)_{ij,J}}{r^{\alpha-\frac{3}{2}|J|}}\bigg]d\mu_{^S\overline{g}}\\
\notag&
+\int_{\Sigma_\tau^{\overline{u}}}\Omega
\overline{h}^{AB}\frac{\partial_A(u_\nu)_{ij,J}}{r^{\alpha-\frac{3}{2}|J|}}
\frac{\Gamma_\frac{3}{2}\partial(u_\nu)_{ij,J}}{r^{\alpha-\frac{3}{2}|J|}}d\mu_{^S\overline{g}}
+\int_{\Sigma_\tau^{\overline{u}}} \Omega\frac{(u_\nu)_{ij,J}\partial_0(u_\nu)_{ij,J}}{r^{2\alpha+3-|J|}}d\mu_{^S\overline{g}}\\
\lesssim&\;\mathcal{E}({ \overline{u},\overline{O}};\alpha,T)^\frac{1}{2}
E_{3,\alpha}[{ u}]
+E_{3,\alpha+1}[{ u}]\\
\notag\lesssim&\;\mathcal{E}_0^\frac{1}{2}
E_{3,\alpha}[{ u}]+E_{3,\alpha+1}[{ u}]
\end{align}
For the next term we proceed by integrating by parts\footnote{We integrate by parts using the spatial part of the Schwarzschild frame $\partial_1,\partial_2,\partial_3$. Doing so we pick up connection coefficients, since it is not covariant IBP.} (IBP), 
denoting by $N:=\overline{g}^{BB}N_B\partial_B$ 
the outward unit normal on $\partial\Sigma_\tau^{\overline{u}}$ w.r.t. Schwarzschild metric $\overline{g}$ on $\Sigma_\tau^{\overline{u}}$:
\begin{align}\label{Pest2}
\notag&\int_{\Sigma_\tau^{\overline{u}}}\Omega\bigg[
-\overline{h}^{00}\frac{\partial_0(u_\nu)_{ij,J}\partial_0^2(u_\nu)_{ij,J}}{r^{2\alpha-3|J|}}
+\overline{h}^{AB}\frac{\partial_A(u_\nu)_{ij,J}}{r^{\alpha-\frac{3}{2}|J|}}
\frac{\partial_B\partial_0(u_\nu)_{ij,J}}{r^{\alpha-\frac{3}{2}|J|}}
\bigg]d\mu_{^S\overline{g}}\\
=&-\int_{\Sigma_\tau^{\overline{u}}}\Omega\frac{\partial_0(u_\nu)_{ij,J}}{r^{\alpha-\frac{3}{2}|J|}}\bigg[
\overline{h}^{00}\frac{\partial_0^2(u_\nu)_{ij,J}}{r^{\alpha-\frac{3}{2}|J|}}
+\overline{h}^{AB}\frac{\partial_B\partial_A(u_\nu)_{ij,J}}{r^{\alpha-\frac{3}{2}|J|}}
\bigg]d\mu_{^S\overline{g}}\\
\notag&+\int_{\partial\Sigma_\tau^{\overline{u}}}\Omega
\overline{h}^{AB}\frac{\partial_A(u_\nu)_{ij,J}}{r^{\alpha-\frac{3}{2}|J|}}
\frac{\partial_0(u_\nu)_{ij,J}}{r^{\alpha-\frac{3}{2}|J|}}
N_B dS\\
&\notag
-\int_{\Sigma_\tau^{\overline{u}}}\bigg[\partial_B\big(\frac{\Omega \overline{h}^{AB}}{r^{2\alpha-3|J|}}\big)\partial_A(u_\nu)_{ij,J}+\Omega\overline{ h}\frac{\Gamma_\frac{3}{2}\partial(u_\nu)_{ij,J}}{r^{2\alpha-3|J|}}\bigg]\partial_0(u_\nu)_{ij,J}d\mu_{^S\overline{g}}
\end{align}
It is immediate from the definition of the frame (\ref{Sframe}) and (\ref{dr/dtau}) that
\begin{align*}
\big|\partial_1(\frac{\Omega}{r^{2\alpha-3}})\big|\lesssim
\frac{\alpha}{r^{2\alpha-2}}&&
\partial_2(\frac{\Omega}{r^{\alpha-\frac{3}{2}}})
=\partial_3(\frac{\Omega}{r^{\alpha-\frac{3}{2}}})=0.
\end{align*}
Hence, similarly to (\ref{Pest1}) 
\begin{align}\label{Pest3}
&-\int_{\Sigma_\tau^{\overline{u}}}\bigg[\partial_B\big(\frac{\Omega \overline{h}^{AB}}{r^{2\alpha-3|J|}}\big)\partial_A(u_\nu)_{ij,J}+\Omega\overline{ h}\frac{\Gamma_\frac{3}{2}\partial(u_\nu)_{ij,J}}{r^{2\alpha-3|J|}}\bigg]\partial_0(u_\nu)_{ij,J}d\mu_{^S\overline{g}}\\
\tag{$|J|\leq2$}&\lesssim\;(\mathcal{E}_0^\frac{1}{2}+\alpha^2)
E_{3,\alpha}[{ u}]+E_{3,\alpha+1}[{ u}].
\end{align}
{\it Remark}: The term in the RHS of the preceding estimate with coefficient $\alpha^2$ is not critical. This is very important otherwise the overall estimates would not close, since the critical term with favourable sign in (\ref{intP/dtau}) is only of magnitude $\alpha$. 

We proceed to the boundary term in the RHS of (\ref{Pest2}). Recall that $\rho$ is constant on $\partial\Sigma_\tau^{\overline{u}}$ (\ref{rhoconst}), and decreasing in the interior direction of $\Sigma_\tau^{\overline{u}}$. Hence, the outward unit normal $N$ is the Schwarzschild normalized gradient of $\rho$ on $\Sigma^{\overline{u}}_\tau$, $N=\frac{^S\overline{\nabla}\rho}{|^S\overline{\nabla}\rho|}$.
Since $(\overline{h}^{AB})_{A,B=1,2,3}$ is a symmetric positive definite matrix,
the following standard inequality holds:
\begin{align*}
\bigg|\overline{h}^{AB}\frac{\partial_A(u_\nu)_{ij,J}}{r^{\alpha-\frac{3}{2}|J|}}
\Omega N_B\bigg|^2\leq&\;
\bigg(\overline{h}^{AB}\frac{\partial_A(u_\nu)_{ij,J}}{r^{\alpha-\frac{3}{2}|J|}}
\frac{\partial_B(u_\nu)_{ij,J}}{r^{\alpha-\frac{3}{2}|J|}}\bigg)
\big(\Omega^2\overline{h}^{AB}N_AN_B\big)\\
=&\;\bigg(\overline{h}^{AB}\frac{\partial_A(u_\nu)_{ij,J}}{r^{\alpha-\frac{3}{2}|J|}}
\frac{\partial_B(u_\nu)_{ij,J}}{r^{\alpha-\frac{3}{2}|J|}}\bigg)
\frac{\Omega^2\overline{h}^{AB}\partial_A(\rho)\partial_B(\rho)}{|^S\overline{\nabla}\rho|^2}\\
\tag{by (\ref{rhoeik})}
=&\;\bigg(\overline{h}^{AB}\frac{\partial_A(u_\nu)_{ij,J}}{r^{\alpha-\frac{3}{2}|J|}}
\frac{\partial_B(u_\nu)_{ij,J}}{r^{\alpha-\frac{3}{2}|J|}}\bigg)
\frac{-\overline{h}^{00}-2\Omega \overline{h}^{A0}\partial_A(\rho)}
{|^S\overline{\nabla}\rho|^2}
\end{align*}
Therefore, we have the bound
\begin{align}\label{Pest4}
&\int_{\partial\Sigma_\tau^{\overline{u}}}\Omega
\overline{h}^{AB}\frac{\partial_A(u_\nu)_{ij,J}}{r^{\alpha-\frac{3}{2}|J|}}
\frac{\partial_0(u_\nu)_{ij,J}}{r^{\alpha-\frac{3}{2}|J|}}
N_B dS\\
\notag\leq&\;\int_{\partial\Sigma_\tau^{\overline{u}}}\big|
\frac{\partial_0(u_\nu)_{ij,J}}{r^{\alpha-\frac{3}{2}|J|}}\big|
\sqrt{\frac{-\overline{h}^{00}-2\Omega \overline{h}^{A0}\partial_A(\rho)}
{|^S\overline{\nabla}\rho|}}
\sqrt{\frac{\overline{h}^{AB}}{|^S\overline{\nabla}\rho|}\frac{\partial_A(u_\nu)_{ij,J}}{r^{\alpha-\frac{3}{2}|J|}}
\frac{\partial_B(u_\nu)_{ij,J}}{r^{\alpha-\frac{3}{2}|J|}}}dS\\
\notag\leq&\;\frac{1}{2}\int_{\partial\Sigma_\tau^{\overline{u}}}
-\frac{\overline{h}^{00}}{|^S\overline{\nabla}\rho_\tau|}\frac{\big[\partial_0(u_\nu)_{ij,J}\big]^2}{r^{2\alpha-3|J|}}
-\frac{2\Omega \overline{h}^{A0}\partial_A(\rho)}{|^S\overline{\nabla}\rho_\tau|}
\frac{\big[\partial_0(u_\nu)_{ij,J}\big]^2}{r^{2\alpha-3|J|}}dS\\
&\notag+\frac{1}{2}\int_{\partial\Sigma_\tau^{\overline{u}}}
\frac{\overline{h}^{AB}}{|^S\overline{\nabla}\rho|}\frac{\partial_A(u_\nu)_{ij,J}}{r^{\alpha-\frac{3}{2}|J|}}
\frac{\partial_B(u_\nu)_{ij,J}}{r^{\alpha-\frac{3}{2}|J|}}dS
\end{align}
The remaining term to be estimated is the one on first line in the RHS of (\ref{Pest2}), which we rewrite 
\begin{align}\label{Pest5}
\notag&-\int_{\Sigma_\tau^{\overline{u}}}\Omega\bigg[
\overline{h}^{00}\frac{\partial_0^2(u_\nu)_{ij,J}}{r^{\alpha-\frac{3}{2}|J|}}
+\overline{h}^{AB}\frac{\partial_B\partial_A(u_\nu)_{ij,J}}{r^{\alpha-\frac{3}{2}|J|}}
\bigg]\frac{\partial_0(u_\nu)_{ij,J}}{r^{\alpha-\frac{3}{2}|J|}}d\mu_{^S\overline{g}}\\
=&-\int_{\Sigma_\tau^{\overline{u}}}\big(\overline{h}^{ab}\partial_a\partial_b(u_\nu)_{ij,J}\big)
\Omega\frac{\partial_0(u_\nu)_{ij,J}}{r^{2\alpha-3|J|}}d\mu_{^S\overline{g}}\\
\notag&+\int_{\Sigma_\tau^{\overline{u}}}2\Omega \overline{h}^{A0}
\frac{\partial_A\partial_0(u_\nu)_{ij,J}\partial_0(u_\nu)_{ij,J}}{r^{2\alpha-3|J|}}
+\Omega { h}
\frac{\Gamma_\frac{3}{2}\partial(u_\nu)_{ij,J}\partial_0(u_\nu)_{ij,J}}{r^{2\alpha-3|J|}}d\mu_{^S\overline{g}}
\end{align}
By taking the $\partial^{(J)}$ derivative ($J$ spatial multi-index $|J|\leq2$) of the first equation in (\ref{modboxu}) and commuting the differentiation in the LHS we obtain the equation
\begin{align}\label{modboxu,J}
 \notag&\overline{h}^{ab}\partial_a\partial_b(u_\nu)_{ij,J}\\
=&\;\partial^{(J)}\bigg[
\overline{O}\Gamma_{\frac{3}{2}}\partial u+\overline{O}\Gamma_3 u
+\overline{O}\Gamma_{\frac{9}{2}}(O-I)
+\overline{O}\Gamma_3\partial (O-I)\\
\notag&+\Gamma_3\overline{u}^2
+\overline{O}\overline{u}\partial \overline{u}
+\overline{u}^3
+\overline{O}\partial (\overline{O}-I)\partial \overline{u}\bigg]
+[\overline{h}^{ab}\partial_a\partial_b,\partial^{(J)}](u_\nu)_{ij},
\end{align}
where the commutator can in turn be written schematically as: [recall (\ref{overasym1}),(\ref{Gamma_q})]
\begin{align}\label{comm}
[\overline{h}^{ab}\partial_a\partial_b,\partial^{(J)}](u_\nu)_{ij}=&\;{ \partial}^2({ \overline{h}}){ \partial}^2(u_\nu)_{ij}
+\big[\Gamma_\frac{3}{2}\partial(\overline{h})+\Gamma_3\overline{h}\big]{ \partial}^2(u_\nu)_{ij}\\
&\notag+{ \partial(\overline{h})}{ \partial}^3(u_\nu)_{ij}
+{ \overline{h}\Gamma_\frac{3}{2}}{ \partial}^3(u_\nu)_{ij}\\
\notag&+\big[{ \partial(\overline{h})\Gamma_3}+\overline{h}\Gamma_\frac{9}{2}\big]{ \partial}(u_\nu)_{ij}\qquad\text{ if $|J|=2$}\\
\notag[\overline{h}^{ab}\partial_a\partial_b,\partial^{(J)}](u_\nu)_{ij}=&\; \partial({ \overline{h}}){ \partial}^2(u_\nu)_{ij}
+{ \overline{h}\Gamma_\frac{3}{2}}{ \partial}^2(u_\nu)_{ij}\qquad\qquad\text{if $|J|=1$}\\
\notag&+{ \overline{h}\Gamma_3}{ \partial}(u_\nu)_{ij}
\end{align}
We integrate by parts in the second term on the RHS of (\ref{Pest5}) and argue 
similarly to (\ref{Pest3}) to get 
\begin{align}\label{Pest6}
\notag&\int_{\Sigma_\tau^{\overline{u}}}2\Omega \overline{h}^{A0}
\frac{\partial_A\partial_0(u_\nu)_{ij,J}\partial_0(u_\nu)_{ij,J}}{r^{2\alpha-3|J|}}
+\Omega\overline{h}
\frac{\Gamma_\frac{3}{2}\partial(u_\nu)_{ij,J}\partial_0(u_\nu)_{ij,J}}{r^{2\alpha-3|J|}}d\mu_{^S\overline{g}}\\
=&\int_{\partial\Sigma_\tau^{\overline{u}}}\Omega \overline{h}^{A0}
\frac{\big[\partial_0(u_\nu)_{ij,J}\big]^2}{r^{2\alpha-3|J|}}N_a dS
-\int_{\Sigma_\tau^{\overline{u}}}\Omega \partial_A(\overline{h}^{A0})
\frac{\big[\partial_0(u_\nu)_{ij,J}\big]^2}{r^{2\alpha-3|J|}}d\mu_{^S\overline{g}}\\
&\notag-\int_{\Sigma_\tau^{\overline{u}}}\overline{h}^{A0}
\partial_A(\frac{\Omega}{r^{2\alpha-3|J|}})\big[\partial_0(u_\nu)_{ij,J}\big]^2d\mu_{^S\overline{g}}
+\int_{\Sigma_\tau^{\overline{u}}}\Omega \overline{h}\Gamma_\frac{3}{2}
\frac{\big[\partial_0(u_\nu)_{ij,J}\big]^2}{r^{2\alpha-3|J|}}d\mu_{^S\overline{g}}\\
&\notag+\int_{\Sigma_\tau^{\overline{u}}}\Omega \overline{h}
\frac{\Gamma_\frac{3}{2}\partial(u_\nu)_{ij,J}\partial_0(u_\nu)_{ij,J}}{r^{2\alpha-3|J|}}d\mu_{^S\overline{g}}\\
\notag\leq&\int_{\partial\Sigma_\tau^{\overline{u}}}\Omega \overline{h}^{A0}
\frac{\big[\partial_0(u_\nu)_{ij,J}\big]^2}{r^{2\alpha-3|J|}}N_A dS
+C(\mathcal{E}_0^{\frac{1}{2}}+\alpha^2)E_{3,\alpha}[{ u}]
+CE_{3,\alpha+1}[{ u}]
\end{align}
Finally, for the last and main term in the first line of the RHS of (\ref{Pest5}) we recall that $|\Omega|\lesssim\frac{1}{r^\frac{1}{2}}$ to obtain directly from Cauchy's inequality 
\begin{align}\label{Pest7}
\notag&-\int_{\Sigma_\tau^{\overline{u}}}\big(\overline{h}^{ab}\partial_a\partial_b(u_\nu)_{ij,J}\big)\Omega\frac{\partial_0(u_\nu)_{ij,J}}{r^{2\alpha-3|J|}}d\mu_{^S\overline{g}}\\
\lesssim&\;
\big\|\frac{\overline{h}^{ab}\partial_a\partial_b(u_\nu)_{ij,J}}{r^{\alpha-\frac{3}{2}|J|-\frac{1}{2}}}\big\|^2_{L^2}
+\big\|\frac{\partial_0(u_\nu)_{ij,J}}{r^{\alpha-\frac{3}{2}|J|+1}}\big\|^2_{L^2}\\
\notag\lesssim&\;\big\|\frac{\overline{h}^{ab}\partial_a\partial_b(u_\nu)_{ij,J}}{r^{\alpha-\frac{3}{2}|J|-\frac{1}{2}}}\big\|^2_{L^2}
+\|\partial_0(u_\nu)_{ij}\|^2_{H^{2,\alpha-\frac{1}{2}}}
\end{align}
We proceed by plugging the RHS of (\ref{modboxu,J}) into the first term in the last inequality (\ref{Pest7}) above and treat each arising group of terms separately. Employing the basic inequalities in Proposition \ref{adaptineq} along with the bounds of ${ \overline{O}},{ \partial(\overline{O})},{ \overline{u}}$ (\ref{venest}), (\ref{norme_0barf}) and (\ref{Linftybarf}) we derive:
\begin{align}\label{Pest8}
&\big\|\frac{\partial^{(J)}\big[\overline{O}\Gamma_{\frac{3}{2}}\partial u+\overline{O}\Gamma_3 u\big]}{r^{\alpha-\frac{3}{2}|J|-\frac{1}{2}}}\big\|^2_{L^2}\\
\notag\lesssim&\;\|{ \overline{O}}\|^2_{L^\infty}E_{3,\alpha+1}[{ u}]
+\|{ \partial(\overline{O})}\|^2_{L^\infty}E_{2,\alpha}[{ u}]+\|{ u}\|^2_{H^{2,\alpha+1}}\\
&+\|\frac{\partial^{(J)}({ \overline{O}}) \Gamma_\frac{3}{2}\partial u}{r^{\alpha-\frac{3}{2}|J|-\frac{1}{2}}}\|^2_{L^2}
\notag+\|\frac{\partial^{(J)}({ \overline{O}}){ \Gamma_3 u}}{r^{\alpha-\frac{3}{2}|J|-\frac{1}{2}}}\|^2_{L^2}\\
\tag{the last two terms appear only in the case $|J|=2$}\lesssim&\;\mathcal{E}({ \overline{u},\overline{O}};\alpha,T)E_{3,\alpha}[{ u}]+E_{3,\alpha+1}[{ u}]
+\big(\|{ u}\|^2_{L^\infty}+\|\partial u\|^2_{L^\infty}\big)\|\frac{\partial^{(J)}({ \overline{O}})}{r^{\alpha-\frac{1}{2}}}\|_{L^2}^2\\
\lesssim&\;\notag\mathcal{E}_0E_{3,\alpha}[{ u}]+E_{3,\alpha+1}[{ u}]
\end{align}
\begin{align}\label{Pest9}
&\big\|\frac{\partial^{(J)}\big[\overline{O}\Gamma_{\frac{9}{2}}(O-I)
+\overline{O}\Gamma_3\partial (O-I)\big]}{r^{\alpha-\frac{3}{2}|J|-\frac{1}{2}}}\big\|^2_{L^2}\\
\notag\lesssim&\;\|{ O-I}\|^2_{H^{3,\alpha+\frac{5}{2}}}
+\|{ \partial\overline{O}}\|^2_{L^\infty}\|{ O-I}\|^2_{H^{2,\alpha+\frac{3}{2}}}\\
&\notag+\big(\|\frac{{ O-I}}{r^\frac{3}{2}}\|^2_{L^\infty}+\|{ \partial(O-I)}\|_{L^\infty}^2\big)\|\frac{\partial^{(J)}({ \overline{O}})}{r^{\alpha-\frac{1}{2}}}\|^2_{L^2}\\
\tag{we include the last term only when $|J|=2$ and utilize (\ref{norme_0barf}),(\ref{norme_0f})}
\lesssim&\;\|{ O-I}\|^2_{H^{3,\alpha+\frac{5}{2}}}
+(\mathcal{E}_0^2+\mathcal{E}_0)\|{ O-I}\|^2_{H^{2,\alpha+\frac{3}{2}}}
+\mathcal{E}_0\|O-I\|^2_{H^{3,\alpha+\frac{3}{2}}}\\
\notag&+\mathcal{E}_0^3+\mathcal{E}_0\|u\|^2_{H^{2,\alpha}}
\end{align}
\begin{align}\label{Pest10}
&\big\|\frac{\partial^{(J)}\big[\Gamma_3\overline{u}^2
+\overline{O}\overline{u}\partial \overline{u}
+\overline{u}^3
+\overline{O}\partial (\overline{O}-I)\partial\overline{u}\big]}{r^{\alpha-\frac{3}{2}|J|-\frac{1}{2}}}\big\|^2_{L^2}\\
\notag\lesssim&\;\bigg[\|\frac{{ \overline{u}}}{r^\frac{3}{2}}\|^2_{L^\infty}+\|\frac{{ \partial\overline{u}}}{r^\frac{3}{2}}\|^2_{L^\infty}
+\|{ \partial\overline{O}}\|^2_{L^\infty}\big(\|{ \overline{u}}\|^2_{L^\infty}+\|\partial\overline{u}\|^2_{L^\infty}\big)\bigg]E_{3,\alpha}[{ \overline{u}}]\\
&\notag+\|{ \overline{u}}\|^2_{L^\infty}\|\partial\overline{u}\|^2_{L^\infty}\|{ \overline{O}-I}\|^2_{H^{2,\alpha+\frac{3}{2}}}
+\big(\|{ \overline{u}}\|^4_{L^\infty}
+\|{ \overline{u}}\|^2_{L^\infty}\|\partial\overline{u}\|^2_{L^\infty}\big)
\|{ \overline{u}}\|^2_{H^{2,\alpha}}\\
&\notag+\|\partial\overline{O}\|^2_{L^\infty}E_{3,\alpha}[{ \overline{u}}]
+\|\frac{\partial^2( \overline{O}-I)\partial^2 \overline{u}}{r^{\alpha-3-\frac{1}{2}}}\|^2_{L^2}+\|\partial\overline{u}\|^2_{L^\infty}
\|{ \overline{O}-I}\|^2_{H^{3,\alpha+\frac{3}{2}}}\\
\notag&+\big(\|{ \partial\overline{O}}\|^4_{L^\infty}+\|{ \partial\overline{O}}\|^2_{L^\infty}\|{ \partial\overline{u}}\|^2_{L^\infty}\big)\big(\|{ \overline{O}-I}\|^2_{H^{2,\alpha+\frac{3}{2}}}+E_{2,\alpha}[{ \overline{u}}]\big)\\
\lesssim&\;\notag\mathcal{E}({ \overline{u},\overline{O}};\alpha,T)^2+\mathcal{E}({ \overline{u},\overline{O}};\alpha,T)^3
+\|\frac{\partial^2({ \overline{O}})}{r^{\frac{\alpha}{2}-\frac{3}{2}-\frac{1}{4}}}\|^2_{L^4}
\|\frac{\partial^2({ \overline{u}})}{r^{\frac{\alpha}{2}-\frac{3}{2}-\frac{1}{4}}}\|^2_{L^4}\\
\tag{employing the $L^4$ estimate (\ref{L4})}\lesssim&\;\mathcal{E}_0^2+\mathcal{E}_0^3+\alpha^3\mathcal{E}_0^2
\end{align}
By (\ref{Pest8})-(\ref{Pest10}) we have the following lemma.
\begin{lemma}\label{e_0e_0u}
$\partial_0^2(u_\nu)_{ij}\in C([0,T];H^{1,\alpha-3})\cap L^2([0,T];H^{1,\alpha-2})$ and moreover the following estimate holds:
\begin{align}\label{e_0e_0uest}
\notag&\big\|\frac{\partial^{(J)}\partial_0^2(u_\nu)_{ij}}{r^{\alpha-\frac{3}{2}|J|-\frac{1}{2}}}\big\|^2_{L^2[\tau]}\\
\lesssim&\; \mathcal{E}_0\big(E_{3,\alpha}[{ u}]+\|{ O-I}\|^2_{H^{3,\alpha+\frac{3}{2}}}\big)+E_{3,\alpha+1}[{ u}]
+\|{ O-I}\|^2_{H^{3,\alpha+\frac{5}{2}}}\\
\notag&+\alpha^3\mathcal{E}_0^2
+\mathcal{E}_0^3,
\end{align}
for $|J|\leq1$, $J\subset\{1,2,3\}$, $\tau\in(0,T)$.
\end{lemma}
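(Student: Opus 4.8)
The plan is to read the top time-derivative off the wave equation (\ref{modboxu,J}) and estimate the resulting algebraic expression in the weighted $L^2$-norm with weight $r^{-(\alpha-\frac{3}{2}|J|-\frac{1}{2})}$. Writing $\overline{h}^{ab}\partial_a\partial_b=\overline{h}^{00}\partial_0^2+2\overline{h}^{A0}\partial_A\partial_0+\overline{h}^{AB}\partial_A\partial_B$ modulo first-order terms of the schematic form $\overline{h}\,\Gamma_{\frac{3}{2}}\partial(u_\nu)_{ij,J}$ produced by the frame commutators $[\partial_a,\partial_b]=\Gamma_{\frac{3}{2}}\partial$ (cf. (\ref{overasym1})), and using that the hyperbolicity bound (\ref{hyp}) --- valid also for $\overline{h}$ by (\ref{Linftybarf}) --- gives $-\overline{h}^{00}\geq\frac{1}{2}$, one divides and obtains, schematically,
\[
\partial_0^2(u_\nu)_{ij,J}=(\overline{h}^{00})^{-1}\Big[\partial^{(J)}(\text{RHS})+[\overline{h}^{ab}\partial_a\partial_b,\partial^{(J)}](u_\nu)_{ij}-2\overline{h}^{A0}\partial_A\partial_0(u_\nu)_{ij,J}-\overline{h}^{AB}\partial_A\partial_B(u_\nu)_{ij,J}+\overline{h}\,\Gamma_{\frac{3}{2}}\partial(u_\nu)_{ij,J}\Big],
\]
where $\text{RHS}$ denotes the nonlinear right-hand side on the first line of (\ref{modboxu}) and the commutator is expanded as in (\ref{comm}).

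Next I would estimate each group after multiplying by $r^{-(\alpha-\frac{3}{2}|J|-\frac{1}{2})}$ and taking $\|\cdot\|^2_{L^2[\tau]}$, using $|\overline{h}|\lesssim1$ throughout. The purely spatial term $\overline{h}^{AB}\partial_A\partial_B(u_\nu)_{ij,J}$ is a spatial derivative of $(u_\nu)_{ij}$ of order $|J|+2\leq 3$, and the weight arithmetic gives exactly $\lesssim E_{3,\alpha+1}[u]$; the mixed term $\overline{h}^{A0}\partial_A\partial_0(u_\nu)_{ij,J}$ carries a single factor $\partial_0$, so it is controlled by $\|\partial_0(u_\nu)_{ij}\|^2_{H^{2,\alpha-\frac{1}{2}}}\lesssim E_{3,\alpha+1}[u]$ --- so no circular dependence on $\partial_0^2u$ arises. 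The first-order pieces $\overline{h}\,\Gamma_{\frac{3}{2}}\partial(u_\nu)_{ij,J}$, together with the $|J|=1$ commutator terms $\partial(\overline{h})\partial^2(u_\nu)_{ij}$, $\overline{h}\,\Gamma_{\frac{3}{2}}\partial^2(u_\nu)_{ij}$, $\overline{h}\,\Gamma_3\partial(u_\nu)_{ij}$ of (\ref{comm}), are absorbed into $\mathcal{E}_0E_{3,\alpha}[u]+E_{3,\alpha+1}[u]$ after extracting $\|\partial(\overline{O})\|_{L^\infty}\lesssim\mathcal{E}_0^{1/2}$ through the Sobolev embedding (\ref{Linfty}) (which is where $\alpha$ must already be taken large) and invoking (\ref{Gamma_q}), (\ref{venest}), (\ref{norme_0barf}), Lemma \ref{Sobprop}. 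Finally, the contribution of $\partial^{(J)}(\text{RHS})$ is precisely what the already-derived estimates (\ref{Pest8}), (\ref{Pest9}), (\ref{Pest10}) supply --- the $L^4$-interpolation (\ref{L4}) used there is the source of the $\alpha^3\mathcal{E}_0^2$ term --- and summing all contributions yields (\ref{e_0e_0uest}). The regularity statement follows from reading the same identity with $(u_\nu)_{ij}$ in the solution spaces (\ref{solsp}): by Lemma \ref{Sobprop}, losing two spatial derivatives (or one spatial and one $\partial_0$) and shifting the weight exponent by $3$ places $\partial_0^2(u_\nu)_{ij}$ in $C([0,T];H^{1,\alpha-3})\cap L^2([0,T];H^{1,\alpha-2})$, the nonlinear and commutator terms being of strictly lower order in this count.

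Since the heavy estimates (\ref{Pest8})--(\ref{Pest10}) are already in hand, I expect the only delicate point --- and a routine one here, given the structure of (\ref{modboxu}) --- to be this weight bookkeeping: one has to check that after applying the weight $r^{-(\alpha-\frac{3}{2}|J|-\frac{1}{2})}$, every term lands at a weight no worse than the two \emph{critical} quantities $E_{3,\alpha+1}[u]$ and $\|O-I\|^2_{H^{3,\alpha+\frac{5}{2}}}$ retained on the right-hand side of (\ref{e_0e_0uest}), which are not removed here but only later, at the level of Proposition \ref{mainenest}, by taking $\alpha$ sufficiently large. A secondary point is that the Sobolev and $L^4$ embeddings of Proposition \ref{adaptineq} force $\alpha$ large from the outset, which is harmless since this is imposed in any case.
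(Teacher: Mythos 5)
Your proposal is correct and follows essentially the same route as the paper, whose proof is exactly this: solve (\ref{modboxu}) for $\partial_0^2(u_\nu)_{ij}$ (using $-\overline{h}^{00}\geq\tfrac{1}{2}$ from (\ref{hyp}), (\ref{Linftybarf})) and sum the already-established bounds (\ref{Pest8})--(\ref{Pest10}), with the purely spatial and mixed second-derivative terms controlled by $E_{3,\alpha+1}[u]$ as you indicate. The one point to tighten is that for $|J|=1$ the commutator (\ref{comm}) contains $\partial^{(J)}(\overline{h}^{00})\,\partial_0^2(u_\nu)_{ij}$ and $\overline{h}\,\Gamma_{\frac{3}{2}}\partial_0^2(u_\nu)_{ij}$, which are not absorbed by $\mathcal{E}_0E_{3,\alpha}[u]+E_{3,\alpha+1}[u]$ alone; one should first establish the $|J|=0$ case (no commutator) and then feed that bound into the $|J|=1$ case --- precisely the device the paper uses later in (\ref{Pest11}) --- the weight arithmetic $r^{-\frac{3}{2}}\,r^{-(\alpha-2)}=r^{-(\alpha-\frac{1}{2})}$ and $\|\partial\overline{h}\|_{L^\infty}\lesssim\mathcal{E}_0^{1/2}$ making this immediate and leaving the stated right-hand side of (\ref{e_0e_0uest}) unchanged.
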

\begin{proof}
The proof follows by solving for $\partial_0^2(u_\nu)_{ij}$ in the equation (\ref{modboxu}) and summing up the above estimates (\ref{Pest8})-(\ref{Pest10}).
\end{proof}
To bound the commutator (\ref{comm}) we treat the cases $|J|=2$, $|J|=1$ separately. For $|J|=1$:
\begin{align}\label{Pest11}
&\big\|\frac{[\overline{h}^{ab}\partial_a\partial_b,\partial^{(J)}](u_\nu)_{ij}}{r^{\alpha-\frac{3}{2}-\frac{1}{2}}}\big\|^2_{L^2}\\
\notag=&\;\big\|\frac{\partial({ \overline{h}}){ \partial}^2(u_\nu)_{ij}
+{ \overline{h}\Gamma_\frac{3}{2}}{ \partial}^2(u_\nu)_{ij} 
+{ \overline{h}\Gamma_3}{ \partial}(u_\nu)_{ij}}{r^{\alpha-\frac{3}{2}-\frac{1}{2}}}\big\|^2_{L^2}\\
\notag\lesssim&\;\|\partial\overline{ h}\|^2_{L^\infty}\|\frac{\partial^2(u_\nu)_{ij}}{r^{\alpha-\frac{3}{2}-\frac{1}{2}}}\|^2_{L^2}+
\|\overline{ h}\|^2_{L^\infty}\|\frac{\partial^2(u_\nu)_{ij}}{r^{\alpha-\frac{1}{2}}}\|^2_{L^2}
+\|\overline{ h}\|^2_{L^\infty}\|\frac{\partial(u_\nu)_{ij}}{r^{\alpha+1}}\|^2_{L^2}\\
\tag{employing (\ref{e_0e_0uest}) in the case $\partial^2(u_\nu)_{ij}=\partial_0^2(u_\nu)_{ij}$}\lesssim&\;\mathcal{E}_0\big(E_{3,\alpha}[{ u}]+\|{ O-I}\|^2_{H^{3,\alpha+\frac{3}{2}}}\big)+E_{3,\alpha+1}[{ u}]
+\|{ O-I}\|^2_{H^{3,\alpha+\frac{5}{2}}}\\
&\notag+\alpha^3\mathcal{E}_0^2
+\mathcal{E}_0^3
\end{align}
When $|J|=2$ we have
\begin{align}\label{Pest12}
&\big\|\frac{[\overline{h}^{ab}\partial_a\partial_b,\partial^{(J)}](u_\nu)_{ij}}{r^{\alpha-3-\frac{1}{2}}}\big\|^2_{L^2}\\
\notag\lesssim&\;
\big\|\frac{{ \partial}^2({ \overline{h}}){ \partial}^2(u_\nu)_{ij}
+\big[\Gamma_\frac{3}{2}\partial(\overline{h})+\Gamma_3\overline{h}\big]{ \partial}^2(u_\nu)_{ij}}{r^{\alpha-3-\frac{1}{2}}}\big\|^2_{L^2}\\
\notag&+\big\|\frac{{ \partial(\overline{h})}{ \partial}^3(u_\nu)_{ij}
+{ \overline{h}\Gamma_\frac{3}{2}}{ \partial}^3(u_\nu)_{ij}
+\big[{ \partial(\overline{h})\Gamma_3}+\overline{h}\Gamma_\frac{9}{2}\big]{ \partial}(u_\nu)_{ij}}{r^{\alpha-3-\frac{1}{2}}}\big\|^2_{L^2}\\
\tag{note that term $\partial^3(u_\nu)_{ij}$ contains at most two $\partial_0$ derivatives}\lesssim&\;\|\frac{\partial^2 \overline{h}}{r^{\frac{\alpha}{2}-\frac{3}{2}-\frac{1}{4}}}\|^2_{L^4}
\|\frac{\partial^2(u_\nu)_{ij}}{r^{\frac{\alpha}{2}-\frac{3}{2}-\frac{1}{4}}}\|^2_{L^4}
+\|\partial \overline{h}\|^2_{L^2}\big(E_{3,\alpha}[{ u}]
+\|\partial_0^2(u_\nu)_{ij}\big\|^2_{H^{1,\alpha-3}}\big)\\
\notag&+\|\overline{ h}\|^2_{L^\infty}\big(E_{3,\alpha+1}[{ u}]+\big\|\partial_0^2(u_\nu)_{ij}\|^2_{H^{1,\alpha-2}}\big)\\
\tag{employing the $L^4$ estimate (\ref{L4}) and (\ref{e_0e_0uest})}\lesssim&\;\alpha^3\mathcal{E}_0E_{3,\alpha}[{ u}]+\mathcal{E}_0E_{3,\alpha}[{ u}]+E_{3,\alpha+1}[{ u}]
+\mathcal{E}_0\|{ O-I}\|^2_{H^{3,\alpha+\frac{3}{2}}}
+\|{ O-I}\|^2_{H^{3,\alpha+\frac{5}{2}}}\\
\notag&+\alpha^3\mathcal{E}_0^2
+\mathcal{E}_0^3
\end{align}
{\bf Summary}: Incorporating (\ref{intP/dtau})-(\ref{Pest12})
in (\ref{E/dtau}) we conclude that
\begin{align}\label{sumEest}
\notag&\partial_\tau\int_{\Sigma_\tau^{\overline{u}}}P_{J,\alpha}d\mu_{^S\overline{g}}
+8M^2e^{-1}(1-\tau)\alpha \int_{\Sigma_\tau^{\overline{u}}}P_{J,\alpha+1}d\mu_{^S\overline{g}}\\
\lesssim&\;
(\mathcal{E}_0^\frac{1}{2}+\mathcal{E}_0+\alpha^2+\alpha^3\mathcal{E}_0)E_{3,\alpha}[{ u}]+E_{3,\alpha+1}[{ u}]+\|{ O-I}\|^2_{H^{3,\alpha+\frac{5}{2}}}\\
&\notag+\mathcal{E}_0\|{ O-I}\|^2_{H^{3,\alpha+\frac{3}{2}}}
+\alpha^3\mathcal{E}_0^2+\mathcal{E}_0^3
\end{align}
Summing over the indices $\nu,i,j$ and $J$, $|J|\leq2$, we arrive at the desired estimate (\ref{E_sest}).
\end{proof}
\begin{proof}[Proof of (\ref{Oest})]
Let $J$, $|J|\leq3$, be a spatial multi-index. Like in the case of (\ref{E/dtau}), it follows from the coarea formula and the asymptotics (\ref{dr/dtau}),(\ref{dmu/dtau}) that 
\begin{align}\label{f-I/dtau}
&\frac{1}{2}\partial_\tau\|\frac{O^d_{c,J}-{{I_c}^d}_{,J} }{r^{\alpha+3-\frac{3}{2}|J|}}\|^2_{L^2(\Sigma_\tau^{\overline{u}})}
=
-\frac{1}{2}\int_{\partial\Sigma_\tau^{\overline{u}}}\frac{(O^d_{c,J}-{{I_c}^d}_{,J})^2}{|^S\overline{\nabla}\rho|r^{2\alpha+6-3|J|}}dS\\
&\notag-(\alpha+3-\frac{3|J|}{2})\int_{\Sigma_\tau^{\overline{u}}}\frac{(O^d_{c,J}-{{I_c}^d}_{,J})^2}{r^{2\alpha+7-3|J|}}\partial_\tau rd\mu_{^S\overline{g}}\\
&\notag+\int_{\Sigma_\tau^{\overline{u}}}\Omega\frac{(O^d_{c,J}-{{I_c}^d}_{,J})\partial_0(O^d_{c,J})}{r^{2\alpha+6-3|J|}}d\mu_{^S\overline{g}}
+\frac{1}{2}\int_{\Sigma_\tau^{\overline{u}}}\frac{(O^d_{c,J}-{{I_c}^d}_{,J})^2}{r^{2\alpha+6-3|J|}}\partial_\tau d\mu_{^S\overline{g}}\\
\notag\leq&\;-4M^2e^{-1}\alpha(1-\tau)\|\frac{O^d_{c,J}-{{I_c}^d}_{,J}}{r^{\alpha+4-\frac{3}{2}|J|}}\|^2_{L^2(\Sigma_\tau^{\overline{u}})}\\
&+\int_{\Sigma_\tau^{\overline{u}}}\Omega\frac{(O^d_{c,J}-{{I_c}^d}_{,J})\partial_0(O^d_{c,J})}{r^{2\alpha+6-3|J|}}d\mu_{^S\overline{g}}
\notag+C\|\frac{O^d_{c,J}-{{I_c}^d}_{,J}}{r^{\alpha+4-\frac{3}{2}|J|}}\|^2_{L^2(\Sigma_\tau^{\overline{u}})},
\end{align}
where $O_{c,J}^d-{{I_c}^d}_{,J}:=\partial^{(J)}(O^d_c-{I_c}^d)$. By Cauchy's inequality we have ($\Omega\lesssim\frac{1}{r^\frac{1}{2}}$)
\begin{align}\label{f-Iest}
\int_{\Sigma_\tau^{\overline{u}}}\Omega\frac{(O^d_{c,J}-{{I_c}^d}_{,J})\partial_0(O^d_{c,J})}{r^{2\alpha+6-3|J|}}d\mu_{^S\overline{g}}\lesssim\|\frac{O^d_{c,J}-{{I_c}^d}_{,J}}{r^{\alpha+4-\frac{3}{2}|J|}}\|^2_{L^2}+\|\frac{\partial_0(O^d_{c,J})}{r^{\alpha+\frac{5}{2}-\frac{3}{2}|J|}}\|^2_{L^2}
\end{align}
Taking the $\partial^{(J)}$ derivative of the ODE in (\ref{modboxu}) we obtain 
\begin{align}\label{mode_0f,J}
 \partial_0(O^d_{c,J}-{{I_c}^d}_{,J})=&\;\partial^{(J)}\bigg[\Gamma_{\frac{3}{2}}(O-I)+(\overline{O}-I)\overline{u}
+u\bigg]+[\partial_0,\partial^{(J)}](O^d_c-{{I_c}^d})
\end{align}
The commutator in the RHS of (\ref{mode_0f,J}) schematically reads 
\begin{align}\label{commf}
[\partial_0,\partial^{(J)}](O^d_c-{{I_c}^d})=&\;\Gamma_\frac{3}{2}\partial(O^d_c-{{I_c}^d})&&\text{if $|J|=1$}\\
\notag=&\;\Gamma_3\partial(O^d_c-{{I_c}^d})
+\Gamma_\frac{3}{2}\partial^2(O^d_c-{{I_c}^d})&&\text{if $|J|=2$}\\
\notag=&\;\Gamma_\frac{9}{2}\partial(O^d_c-{{I_c}^d})
+\Gamma_3\partial^2(O^d_c-{{I_c}^d})&&\text{if $|J|=3$}\\
&+\notag \Gamma_\frac{3}{2}\partial^3(O^d_c-{{I_c}^d}),
\end{align}
where we note that at most one $\partial_0$ derivative of $O^d_c-{I_c}^d$ appears in the preceding expressions. 
Hence, we deduce directly from (\ref{mode_0f,J}):
\begin{align}\label{f-Iest2}
&\|\frac{\partial_0(O^d_{c,J}-{{I_c}^d}_{,J})}{r^{\alpha+\frac{5}{2}-\frac{3}{2}|J|}}\|^2_{L^2}\\
\notag\lesssim&\;\|{ \Gamma_\frac{3}{2}(O-I)}\|^2_{H^{3,\alpha+1}}+\|({ \overline{O}- I}){ \overline{u}}\|^2_{H^{3,\alpha+1}}+
+\|u\|^2_{H^{3,\alpha+1}}\\
&\notag+\|\frac{[\partial_0,\partial^{(J)}](O^d_c-{I_c}^d)}{r^{\alpha+\frac{5}{2}-\frac{3}{2}|J|}}\|^2_{L^2}\\
\tag{exmploying Lemma \ref{Sobprop} and applying the $L^\infty$ bound 
on $({ \overline{O}- I}){ \overline{u}}$}
\lesssim&\;
\|{ O-I}\|^2_{H^{3,\alpha+\frac{5}{2}}}+E_{3,\alpha+1}[{ u}]
+\|{ \overline{O}- I}\|^2_{H^{3,\alpha+\frac{3}{2}}}\|{ \overline{u}}\|^2_{H^{3,\alpha}}\\
&\notag+\|\frac{{ \Gamma_\frac{3}{2}\partial}(O^d_c-{I_c}^d)}{r^{\alpha+\frac{5}{2}-\frac{3}{2}}}\|^2_{L^2}
+\|\frac{{ \Gamma_3\partial}(O^d_c-{I_c}^d)
+\Gamma_\frac{3}{2}\partial^2(O^d_c-{I_c}^d)}{r^{\alpha+\frac{5}{2}-3}}\|^2_{L^2}\\
\notag&+\|\frac{\Gamma_\frac{9}{2}\partial(O^d_c-{I_c}^d)
+\Gamma_3\partial^2(O^d_c-{I_c}^d)
+\Gamma_\frac{3}{2}\partial^3(O^d_c-{I_c}^d)}{r^{\alpha+\frac{5}{2}-\frac{9}{2}}}\|^2_{L^2}\\
\lesssim&\notag\;
\|{ O-I}\|^2_{H^{3,\alpha+\frac{5}{2}}}
+\mathcal{E}({ \overline{u},\overline{O}};\alpha,T)^2
+E_{3,\alpha+1}[{ u}]
\end{align}
Combining (\ref{f-I/dtau})-(\ref{f-Iest2}) we derive
\begin{align}\label{f-Iest3}
&\frac{1}{2}\partial_\tau\|\frac{O^d_{c,J}-{{I_c}^d}_{,J}}{r^{\alpha+3-\frac{3}{2}|J|}}\|^2_{L^2(\Sigma_\tau^{\overline{u}})}+4M^2e^{-1}\alpha(1-\tau)\|\frac{O^d_{c,J}-{{I_c}^d}_{,J}}{r^{\alpha+4-\frac{3}{2}|J|}}\|^2_{L^2(\Sigma_\tau^{\overline{u}})}\\
\notag\lesssim&\;\|{ O-I}\|^2_{H^{3,\alpha+\frac{5}{2}}}+E_{3,\alpha+1}[{ u}]+\mathcal{E}_0^2
\end{align}
Taking into account the set of indices $c,d$ and $J$, $|J|\leq3$, we complete the proof of (\ref{Oest}) and hence of Proposition \ref{mainenest}.
\end{proof}
\subsection{Contraction mapping in $H^{2,\alpha}$}\label{Contr}

We proceed to show that the mapping defined via (\ref{modboxu}) in the beginning of \S\ref{Enest} is a contraction.
Let us consider another solution set 
$(\tilde{u}_\nu)_{ij},\tilde{O}^d_c,{ \tilde{\overline{u}},\tilde{\overline{O}}}$ 
of the analogous coupled system (\ref{modboxu}). Setting 
\begin{align}\label{du,df}
(du_\nu)_{ij}=(u_\nu)_{ij}-(\tilde{u}_\nu)_{ij},\;\;d{ \overline{u}}={ \overline{u}}-{ \tilde{\overline{u}}},\;\;
dO^d_c=O^d_c-\tilde{O}^d_c,\;\;d{ \overline{O}}={ \overline{O}}-{ \tilde{\overline{O}}}
\end{align}
we obtain schematically the new system of equations (depicting only the types of terms in the RHS suppressing the particular indices) 
\begin{align}\label{boxdu}
&\notag 
\overline{h}^{ab}\partial_a\partial_b(du_\nu)_{ij}\\
=&\;
{ \overline{O}}\Gamma_\frac{3}{2}\partial(d{ u})
+{ \overline{O}}\Gamma_3d{ u}
+\overline{O}\Gamma_\frac{9}{2}d{ O}\\
&\notag+d{ \overline{O}}\big[\Gamma_\frac{3}{2}\partial\tilde{u}+\Gamma_3{ \tilde{u}}+\Gamma_\frac{9}{2}({ \tilde{O}-I})
+\Gamma_3\partial{ \tilde{O}}\big]\\
\notag&+{ \overline{O}}\Gamma_3\partial(d{ f})
+({ \overline{O}+\tilde{\overline{O}}})d{ \overline{O}}\partial^2(\tilde{u}_\nu)_{ij}+G(d{ \overline{u}},d{ \overline{O}}),
\end{align}
where
\begin{align}\label{G}
G(d{ \overline{u}},d{ \overline{O}})=&\;\notag \Gamma_\frac{3}{2}d{ \overline{u}}({ \overline{u}}+{ \tilde{\overline{u}}}+{ \overline{u}}^2+{\tilde{\overline{u}}}^2+{ \overline{u}\tilde{\overline{u}}})
+{ \overline{O}}{ \overline{u}}\partial(d{ \overline{u}})\\
&+{ \overline{O}}d{ \overline{u}}\partial{ \tilde{\overline{u}}}
+d{ \overline{O}}{ \tilde{\overline{u}}}\partial{ \tilde{\overline{u}}}
+{ \overline{O}}\partial({ \overline{O}}){ \partial}(d{ \overline{u}})\\
&\notag+{ \overline{O}}\partial(d{ \overline{O}})\partial{ \tilde{\overline{u}}}
+d{ \overline{O}}\partial({ \tilde{\overline{O}}})\partial{ \tilde{\overline{u}}}
\end{align}
and
\begin{align}\label{e_0df}
\partial_0(dO^d_c)=\Gamma_\frac{3}{2}dO
+(\overline{O}-I)d{ \overline{u}}
+\tilde{\overline{u}}d{ \overline{O}}+du
\end{align}
Further, we assume that both sets of variables we have introduced are consistent with the energy estimate (\ref{uenest}) we have established in the previous
 subsection:
\begin{align}\label{tildeenest}
\mathcal{E}({ u,O};\alpha,T),\;
\mathcal{E}({ \overline{u},\overline{O}};\alpha,T),\;
\mathcal{E}({ \tilde{u},\tilde{O}};\alpha,T),\;
\mathcal{E}({ \tilde{\overline{u}},\tilde{\overline{O}}};\alpha,T)\leq 2\mathcal{E}_0.
\end{align}
{\it Claim}: For large enough $\alpha>0$ and $T>0$ is sufficiently small the following contraction holds:
\begin{align}\label{contr}
E_{2,\alpha}[d{ u}]+\sum_{c,d}\|dO^d_c\|^2_{H^{2,\alpha+\frac{3}{2}}}\leq \kappa\big(E_{2,\alpha}[d{ \overline{u}}]+\sum\|d{ \overline{O}}\|^2_{H^{2,\alpha+\frac{3}{2}}}\big),
\end{align}
for some $0<\kappa<1$.
\begin{remark}\label{contrH2}
We are forced to close the contraction mapping argument in $H^{2,\alpha}$, having one derivative less than the space of the energy estimate (\ref{tildeenest}), see \S\ref{Enest}, as it is common in 2$^{\text{nd}}$-order quasilinear hyperbolic PDE \cite{Choq2}, because of the problematic term $(\overline{O}+\tilde{\overline{O}})d{ \overline{O}}\partial^2(\tilde{u}_\nu)_{ij}$ in (\ref{boxdu}), which is generated from the difference of the top order terms in the LHS.
\end{remark}
\begin{proposition}\label{maincontrest}
Under the above considerations, the following estimates hold:
\begin{align}\label{contrEest}
&\partial_\tau E_{2,\alpha}[d{ u}]+8M^2e^{-1}(1-\tau)\alpha E_{2,\alpha+1}[d{ u}]\\
\notag\lesssim&\; (\mathcal{E}_0^\frac{1}{2}+\mathcal{E}_0+\alpha^2)E_{2,\alpha}[d{ u}]+E_{2,\alpha+1}[d{ u}]
+\|d{ f}\|^2_{H^{2,\alpha+\frac{5}{2}}}\\
&\notag+(\mathcal{E}_0+\mathcal{E}_0^2+\alpha^3\mathcal{E}_0)\big(E_{2,\alpha}[d{ \overline{u}}]+\|d{ \overline{O}}\|^2_{H^{2,\alpha+\frac{3}{2}}}\big)
\end{align}
\begin{align}\label{contrfest}
&\frac{1}{2}\partial_\tau\sum_{c,d}\|dO^d_c\|^2_{H^{2,\alpha+\frac{3}{2}}}+4M^2e^{-1}(1-\tau)\alpha\sum_{c,d}\|dO^d_c\|^2_{H^{2,\alpha+\frac{5}{2}}}\\
\lesssim&\;\notag\|d{ O}\|^2_{H^{2,\alpha+\frac{5}{2}}}+E_{2,\alpha+1}[d{ u}]+\mathcal{E}_0\big(E_{2,\alpha}[d{ \overline{u}}]+\|d{ \overline{O}}\|^2_{H^{2,\alpha+\frac{3}{2}}}\big),
\end{align}
for all $\tau\in(0,T)$.
\end{proposition}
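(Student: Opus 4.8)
The plan is to run, one derivative lower, the very same weighted energy argument used in \S\ref{Enest} for Proposition \ref{mainenest}, now on the difference system (\ref{boxdu})--(\ref{e_0df}). First I would introduce the adapted energy $E_{2,\alpha}[du]$ by the formula (\ref{E_s,alpha}) with $s=1$, built from the coefficients $\overline h^{ab}=m^{cd}\overline O^a_c\overline O^b_d$; since $\overline u,\overline O$ and $\tilde{\overline u},\tilde{\overline O}$ all obey the bounds (\ref{Linftybarf}) and (\ref{tildeenest}), this $E_{2,\alpha}[du]$ is equivalent to the weighted $H^{2,\alpha}\times H^{1,\alpha-3/2}$ norm of $du$. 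Differentiating $E_{2,\alpha}[du]$ in $\tau$ via the coarea formula (\ref{E/dtau}) produces: (i) a boundary integral over $\partial\Sigma_\tau^{\overline u}$ which, because $\mathcal N^{\overline u}$ is $g_{\overline u}$-incoming-null and $\rho$ solves the eikonal equation, combines with the boundary terms from the subsequent integrations by parts into a sum of favourable sign, by exactly the computation (\ref{Pest4})--(\ref{Pest6}); (ii) the critical bulk term $-8M^2(1-\tau)\alpha\int e^{-r/2M}P_{J,\alpha+1}\,d\mu_{^S\overline g}$, of magnitude $\alpha$ and good sign, which is the only quantity we use to absorb the critical energies generated below; (iii) the remaining bulk terms, handled exactly as in (\ref{Pest1})--(\ref{Pest3}), contributing $\lesssim(\mathcal E_0^{1/2}+\alpha^2)E_{2,\alpha}[du]+E_{2,\alpha+1}[du]$.

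The only genuinely new input is the forcing. After the Cauchy--Schwarz step (the analogue of (\ref{Pest7})) one must bound $\big\|r^{-(\alpha-\frac32|J|-\frac12)}\,\partial^{(J)}\big(\overline h^{ab}\partial_a\partial_b(du)_{ij}\big)\big\|_{L^2}^2$ for $|J|\le1$; I would substitute the RHS of (\ref{boxdu}) and split into two families. The terms linear in the \emph{new} differences, $\overline O\Gamma_{3/2}\partial(du)+\overline O\Gamma_3 du+\overline O\Gamma_{9/2}dO+\overline O\Gamma_3\partial(dO)$, together with the commutators $[\overline h^{ab}\partial_a\partial_b,\partial^{(J)}]$, reproduce — by Lemma \ref{Sobprop} and the weight bookkeeping of (\ref{Pest8})--(\ref{Pest9}) — exactly the critical quantities $E_{2,\alpha+1}[du]$ and $\|dO\|^2_{H^{2,\alpha+5/2}}$ with $O(1)$ constants, plus harmless $\mathcal E_0$-weighted lower-order pieces; here one uses $\partial_0^2(du)_{ij}$ in a weighted $L^2$, obtained by solving (\ref{boxdu}) algebraically for $\partial_0^2(du)$ exactly as in Lemma \ref{e_0e_0u}. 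The terms carrying the \emph{old} differences — the bracket $d\overline O[\Gamma_{3/2}\partial\tilde u+\Gamma_3\tilde u+\Gamma_{9/2}(\tilde O-I)+\Gamma_3\partial\tilde O]$, the principal-part term $(\overline O+\tilde{\overline O})d\overline O\,\partial^2(\tilde u_\nu)_{ij}$, and all of $G(d\overline u,d\overline O)$ from (\ref{G}) — are estimated with (\ref{tildeenest}), the embeddings of Proposition \ref{adaptineq}, and the Gagliardo--Nirenberg $L^4$--$L^4$ device of (\ref{Pest10}) for the borderline quadratic products, yielding a bound by $(\mathcal E_0+\mathcal E_0^2+\alpha^3\mathcal E_0)\big(E_{2,\alpha}[d\overline u]+\|d\overline O\|^2_{H^{2,\alpha+3/2}}\big)$. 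Summing over $\nu,i,j$ and $|J|\le1$ and feeding back into (\ref{E/dtau}) gives (\ref{contrEest}).

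For the frame estimate (\ref{contrfest}) I would differentiate $\|r^{-(\alpha+3-\frac32|J|)}\partial^{(J)}(dO^d_c)\|^2_{L^2(\Sigma_\tau^{\overline u})}$, $|J|\le2$, in $\tau$ by the coarea formula as in (\ref{f-I/dtau}); this again gives a good-sign boundary term, the critical piece $-4M^2e^{-1}(1-\tau)\alpha\|r^{-(\alpha+4-\frac32|J|)}\partial^{(J)}(dO)\|^2$, and a lower-order remainder. Estimating $\partial_0(dO^d_c)$ directly from the ODE (\ref{e_0df}), with the commutator structure (\ref{commf}) and the $L^\infty$ bound on $(\overline O-I)d\overline u$, exactly as in (\ref{f-Iest})--(\ref{f-Iest3}), controls it by $\|dO\|^2_{H^{2,\alpha+5/2}}+E_{2,\alpha+1}[du]+\mathcal E_0\big(E_{2,\alpha}[d\overline u]+\|d\overline O\|^2_{H^{2,\alpha+3/2}}\big)$; summing over $c,d$ and $|J|\le2$ yields (\ref{contrfest}). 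Combining the two inequalities, choosing $\alpha$ large enough to absorb $E_{2,\alpha+1}[du]$ and $\|dO\|^2_{H^{2,\alpha+5/2}}$ into the LHS, then integrating in $\tau$ with $T$ small and vanishing initial differences gives the contraction (\ref{contr}).

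I expect the main obstacle to be the principal-part difference term $(\overline O+\tilde{\overline O})d\overline O\,\partial^2(\tilde u_\nu)_{ij}$, which is precisely what forces the argument down to $H^{2,\alpha}$ (cf. Remark \ref{contrH2}): after applying $\partial^{(J)}$ with $|J|=1$ one must bound $\partial(d\overline O)\,\partial^2\tilde u$ and $d\overline O\,\partial^3\tilde u$ in a weighted $L^2$, and the only way to do so without losing a derivative on $d\overline O$ is to spend the $H^{3,\alpha}$-regularity of $\tilde u$ from (\ref{tildeenest}) against a weighted $L^4$--$L^4$ split, verifying that the resulting constant is of the benign form $\alpha^3\mathcal E_0$ and attaches to $E_{2,\alpha}[d\overline u]+\|d\overline O\|^2_{H^{2,\alpha+3/2}}$ rather than to any critical energy — this last point is essential, since the factor $\alpha$ in the good bulk terms must still be free to absorb $E_{2,\alpha+1}[du]$ and $\|dO\|^2_{H^{2,\alpha+5/2}}$ once $\alpha$ is taken large.
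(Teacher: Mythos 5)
Your proposal is correct and follows essentially the same route as the paper: the paper's own proof simply repeats the weighted energy argument of Proposition \ref{mainenest} one derivative lower and concentrates, exactly as you do, on the analogue of (\ref{Pest7}), bounding the problematic principal-part difference term $(\overline{O}+\tilde{\overline{O}})d\overline{O}\,\partial^2(\tilde{u}_\nu)_{ij}$ and the nonlinearity $G(d\overline{u},d\overline{O})$ via the weighted $L^4$ estimate (\ref{L4}), with the resulting $\alpha^3\mathcal{E}_0$-type constants attached to $E_{2,\alpha}[d\overline{u}]+\|d\overline{O}\|^2_{H^{2,\alpha+\frac{3}{2}}}$. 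Your treatment of the frame equation and the absorption of the critical energies by taking $\alpha$ large likewise matches the paper's argument.
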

Assuming Proposition \ref{maincontrest} we prove the above claim (\ref{contr}).
After summing (\ref{contrEest}),(\ref{contrfest}), 
we absorb into the LHS the critical terms 
\begin{align*}
E_{2,\alpha+1}[d{ u}],\|dO\|^2_{H^{2,\alpha+\frac{5}{2}}},
\end{align*}
which appear in the RHS of the above inequalities. This is done by picking the parameter $\alpha$ sufficiently large (but finite). The contraction estimate (\ref{contr}) then follows from Gronwall's inequality for $T>0$ suitably small.
\begin{proof}[Proof of Proposition \ref{maincontrest}]
The proof follows exactly the lines of the proof of Proposition \ref{mainenest}. The only notable difference lies in the estimation of the analogous term to (\ref{Pest7}), derived in (\ref{Pest8})-(\ref{Pest12}). We sketch the argument in the present situation:\\
Let $J$ denote at most one spatial index, $|J|\leq1$, either $1,2$ or $3$. The main term to be estimated is
\begin{align}\label{mainest}
&-\int_{\Sigma_\tau^{\overline{u}}}\partial^{(J)}\bigg[\text{RHS of (\ref{boxdu})}\bigg]\Omega \frac{\partial_0(du_\nu)_{ij,J}}{r^{2\alpha-3}}d\mu_{^S\overline{g}}\\
\lesssim\tag{recall $\Omega\lesssim\frac{1}{r^\frac{1}{2}}$}&\;\|\frac{\partial_0(du_\nu)_{ij,J}}{r^{\alpha-\frac{1}{2}}}\|^2_{L^2}
+\big\|\frac{\partial^{(J)}\big[\text{RHS of (\ref{boxdu})}\big]}{r^{\alpha-2}}\big\|^2_{L^2},
\end{align}
where $(du_\nu)_{ij,J}:=\partial^{(J)}(du_\nu)_{ij}$. Plugging in (\ref{boxdu}) and using the basic estimates in Proposition \ref{adaptineq}, along with the assumption (\ref{tildeenest}) we obtain
\begin{align}\label{mainest2}
&\big\|\frac{\partial^{(J)}\big[\text{RHS of (\ref{boxdu})}\big]}{r^{\alpha-2}}\big\|^2_{L^2}\\
\lesssim&\;
\notag\big\|\frac{\partial^{(J)}\big({ \overline{O}}\Gamma_\frac{3}{2}\partial(d{ u})
+{ \overline{O}}\Gamma_3d{ u}
+\overline{O}\Gamma_\frac{9}{2}d{ O}\big)}{r^{\alpha-2}}\big\|^2_{L^2}\\
&\notag+\big\|\frac{\partial^{(J)}\big(d{ \overline{O}}\big[\Gamma_\frac{3}{2}\partial\tilde{u}+\Gamma_3{ \tilde{u}}+\Gamma_\frac{9}{2}({ \tilde{O}-I})
+\Gamma_3\partial{ \tilde{O}}\big]\big)}{r^{\alpha-2}}\big\|^2_{L^2}\\
\notag&+\big\|\frac{\partial^{(J)}\big({ \overline{O}}\Gamma_3\partial(dO)\big)}{r^{\alpha-2}}\big\|^2_{L^2}
+\big\|\frac{\partial^{(J)}\big[({ \overline{O}+\tilde{\overline{O}}})d{ \overline{O}}\partial^2(\tilde{u}_\nu)_{ij}\big]}{r^{\alpha-2}}\big\|^2_{L^2}\\
\notag&+\|\frac{\partial^{(J)}G(d{ \overline{u}},d{ \overline{O}})}{r^{\alpha-2}}\|^2_{L^2}\\
\tag{recall the asymptotics (\ref{Gamma_q})}
\lesssim&\;\|\frac{\partial^{(J)}\partial(d{ u})}{r^{\alpha-\frac{1}{2}}}\|^2_{L^2}
+\|\frac{\partial(d{ u})}{r^{\alpha+1}}\|^2_{L^2}
+\|\frac{d{ u}}{r^{\alpha+\frac{5}{2}}}\|^2_{L^2}
+\mathcal{E}_0\big(\|\frac{\partial(d{ u})}{r^{\alpha-\frac{1}{2}}}\|^2_{L^2}
+\|\frac{d{ u}}{r^{\alpha+1}}\|^2_{L^2}\big)
\\
\notag&+\|\frac{\partial^{(J)}(dO)}{r^{\alpha+\frac{5}{2}}}\|_{L^2}^2
+\|\frac{dO}{r^{\alpha+4}}\|_{L^2}^2
+\mathcal{E}_0\|\frac{dO}{r^{\alpha+\frac{5}{2}}}\|_{L^2}^2\\
\notag&+\mathcal{E}_0\|\frac{\partial^{(J)}(d{ \overline{O}})}{r^{\alpha+\frac{3}{2}}}\|^2_{L^2}+\|\frac{d{ \overline{O}}}{r^{\frac{3}{2}}}\|^2_{L^\infty}\big(
\|{ \tilde{u}}\|^2_{H^{2,\alpha}}+\|{ \tilde{O}-I}\|^2_{H^{2,\alpha+\frac{3}{2}}}\big)\\
\notag&+\|\frac{\partial^{(J)}\partial(dO)}{r^{\alpha+1}}\|^2_{L^2}
+\|\frac{\partial(dO)}{r^{\alpha+\frac{5}{2}}}\|^2_{L^2}
+\mathcal{E}_0\|\frac{\partial(dO)}{r^{\alpha+1}}\|^2_{L^2}\\
\notag&+\big\|\frac{\partial^{(J)}\big[({ \overline{O}+\tilde{\overline{O}}})d{ \overline{O}}\partial^2(\tilde{u}_\nu)_{ij}\big]}{r^{\alpha-2}}\big\|^2_{L^2}+\|\frac{\partial^{(J)}G(d{ \overline{u}},d{ \overline{O}})}{r^{\alpha-2}}\|^2_{L^2}\\
\notag\lesssim&\;E_{2,\alpha+1}[d{ u}]
+\mathcal{E}_0E_{2,\alpha}[d{ u}]
+\|dO\|^2_{H^{2,\alpha+\frac{5}{2}}}
+\mathcal{E}_0\|dO\|^2_{H^{2,\alpha+\frac{3}{2}}}\\
&\notag+\mathcal{E}_0\|d{ \overline{O}}\|^2_{H^{2,\alpha+\frac{3}{2}}}
+\big\|\frac{\partial^{(J)}\big[({ \overline{O}+\tilde{\overline{O}}})d{ \overline{O}}\partial^2(\tilde{u}_\nu)_{ij}\big]}{r^{\alpha-2}}\big\|^2_{L^2}+\|\frac{\partial^{(J)}G(d{ \overline{u}},d{ \overline{O}})}{r^{\alpha-2}}\|^2_{L^2}
\end{align}
We proceed to the problematic term $(\overline{O}+\tilde{\overline{O}})d{ \overline{O}}\partial^2(\tilde{u}_\nu)_{ij}$ which can be controlled only in $H^1$:
\begin{align}\label{mainest3}
&\big\|\frac{\partial^{(J)}\big[({ \overline{O}+\tilde{\overline{O}}})d{ \overline{O}}\partial^2(\tilde{u}_\nu)_{ij}\big]}{r^{\alpha-2}}\big\|^2_{L^2}\\
\lesssim\notag&\;\|\partial^{(J)}{(\overline{O}+\tilde{\overline{O}})}\|^2_{L^\infty}
\|d{ \overline{O}}\|^2_{L^\infty}\|\frac{\partial^2(\tilde{u}_\nu)_{ij}}{r^{\alpha-2}}\|^2_{L^2}
+\|{ \overline{O}+\tilde{\overline{O}}}\|^2_{L^\infty}\|\frac{\partial^{(J)}(d{ \overline{O}})}{r^{\frac{\alpha}{2}-1}}\|^2_{L^4}\\
\notag&\cdot\|\frac{\partial^2(\tilde{u}_\nu)_{ij}}{r^{\frac{\alpha}{2}-1}}\|^2_{L^4}
+\|{ \overline{O}+\tilde{\overline{O}}}\|^2_{L^\infty}
\|\frac{d{ \overline{O}}}{r}\|^2_{L^\infty}\|\frac{\partial^{(J)}\partial^2(\tilde{u}_\nu)_{ij}}{r^{\alpha-2}}\|^2_{L^2}\\
\lesssim&\;\tag{employing the $L^4$ estimate (\ref{L4})} (\mathcal{E}^2_0+\alpha^3\mathcal{E}^2_0)\|d{ \overline{O}}\|^2_{H^{2,\alpha+\frac{3}{2}}}
\end{align}
Finally, plugging in the nonlinearity (\ref{G}), we have the bound
\begin{align}\label{mainest4}
&\big\|\frac{\partial^{(J)}G(d{ \overline{u}},d{ \overline{O}})}{r^{\alpha-2}}\big\|^2_{L^2}\\
\notag\lesssim&\;\big\|\frac{\partial^{(J)}\big(\Gamma_\frac{3}{2}d{ \overline{u}}({ \overline{u}}+{ \tilde{\overline{u}}}+{ \overline{u}}^2+{\tilde{\overline{u}}}^2+{ \overline{u}\tilde{\overline{u}}})\big)}{r^{\alpha-2}}\big\|^2_{L^2}
+\big\|\frac{\partial^{(J)}\big({ \overline{O}}{ \overline{u}}\partial(d{ \overline{u}})\big)}{r^{\alpha-2}}\big\|^2_{L^2}\\
\notag&+\big\|\frac{\partial^{(J)}\big({ \overline{O}}d{ \overline{u}}\partial{ \tilde{\overline{u}}}
+d{ \overline{O}}{ \tilde{\overline{u}}}\partial{ \tilde{\overline{u}}}
+{ \overline{O}}\partial({ \overline{O}})\partial(d{ \overline{u}})\big)}{r^{\alpha-2}}\big\|^2_{L^2}\\
&\notag+\big\|\frac{\partial^{(J)}\big({ \overline{O}}\partial(d{ \overline{O}})\partial{ \tilde{\overline{u}}}
+d{ \overline{O}}\partial({ \tilde{\overline{O}}})\partial{ \tilde{\overline{u}}}\big)}{r^{\alpha-2}}\big\|^2_{L^2}\\
\lesssim&\;\notag(\mathcal{E}_0+\mathcal{E}_0^2)E_{2,\alpha}[d{ \overline{u}}]+(\mathcal{E}_0+1)\|d{ \overline{u}}\|^2_{L^\infty}\|\frac{\partial^{(J)}{ \partial\tilde{\overline{u}}}}{r^{\alpha-2}}\|^2_{L^2}+\|\frac{\partial^2{ (\overline{O})}\partial(d{ \overline{u}})}{r^{\alpha-2}}\|^2_{L^2}\\
\notag&+(\mathcal{E}_0+\mathcal{E}_0^2)\|d{ \overline{O}}\|^2_{H^{2,\alpha+\frac{3}{2}}}+\|\frac{\partial(d{ \overline{O}})\partial^2{ \tilde{\overline{u}}}}{r^{\alpha-2}}\|^2_{L^2}
+\|d{ \overline{O}}\|^2_{L^\infty}\mathcal{E}_0^2\\
\lesssim&\;\notag (\mathcal{E}_0+\mathcal{E}_0^2)\big(E_{2,\alpha}[d{ \overline{u}}]+\|d{ \overline{O}}\|^2_{H^{2,\alpha+\frac{3}{2}}}\big)+\|\frac{\partial^2{ \overline{O}}}{r^{\frac{\alpha}{2}-1}}\|^2_{L^4}\|\frac{\partial(d{ \overline{u}})}{r^{\frac{\alpha}{2}-1}}\|^2_{L^4}\\
&\notag+\|\frac{\partial(d{ \overline{O}})}{r^{\frac{\alpha}{2}-1}}\|^2_{L^4}\|\frac{\partial^2{ \tilde{\overline{u}}}}{r^{\frac{\alpha}{2}-1}}\|^2_{L^4}\\
\lesssim&\;\tag{by the $L^4$ estimate (\ref{L4})} (\mathcal{E}_0+\mathcal{E}_0^2+\alpha^3\mathcal{E}_0)\big(E_{2,\alpha}[d{ \overline{u}}]+\|d{ \overline{O}}\|^2_{H^{2,\alpha+\frac{3}{2}}}\big)
\end{align}
\end{proof}

\section{The constraint equations in a singular background of unbounded mean curvature}\label{singconst}

In this section we prove Theorem \ref{thmB}, our main stability result for the constraint equations (\ref{const}). The proof is based on mapping properties of the constraint map \cite{Mon1} \cite{CorSch} \cite{ChDel} and the implicit function theorem. Although similar results have been achieved in the smooth case and some rough backgrounds (see \cite{ChDel} for a general exposition), to our knowledge, the singular Schwarzschild background (\S\ref{Schwarz}) eludes the standard references in the literature.

It would be interesting to obtain perturbations $(\overline{g},K)$ of the Schwarzschild initial data set $(^S\overline{g},{^SK})$ on the singular hypersurface $\Sigma_0$, which satisfy the assumptions of Theorem \ref{thmA}, in a more constructive way. A general such construction would make use of the conformal method \cite{Choq2} that we do not employ here. However, one of the main obstructions to overcome in this approach is that the mean curvature $\text{tr}_{\overline{g}}{K}$ of the perturbation is unbounded. In fact, one can check (\S\ref{Schwarz}) that 
\begin{align*}
\text{tr}_{\overline{g}}{K}\not\in L^p(\Sigma_0),&&p\ge\frac{5}{3}.
\end{align*}
The results in the literature of the constraints using the conformal method were restricted in the past to the constant mean curvature (CMC) or `near CMC' regime \cite{Choq2}. Recently, there has been a number of advances to the case of large mean curvature, `far from CMC', \cite{HNT,Max,DIMM}. Yet these results contain certain regularity assumptions which in particular imply that the mean curvature of the initial data set is in $L^\infty$ and therefore do not apply to our case.

We consider below various tensors living in the weighted spaces $H^{s,\alpha}(\Sigma_0)$ we have introduced (\ref{Hsa}). It will always be implied that this is understood by assuming their components, evaluated w.r.t. the Schwarzschild frame (\ref{Sframe}), lie in the same spaces. 

\subsection{The constraint map; Linearization and stability}\label{linstab}

Let 
\begin{align}\label{Psi}
\Psi&:H^{4,\alpha+\frac{3}{2}}(\Sigma_0)\times H^{3,\alpha}(\Sigma_0)\to H^{2,\alpha-\frac{3}{2}}(\Sigma_0)\times H^{2,\alpha-\frac{3}{2}}(\Sigma_0),\\
\notag&\Psi(d g_{ij},d K_{ij}):=\big({\overline{\text{R}}}-|{K}|^2+(\text{tr}_{{\overline{g}}}{K})^2,\;{\overline{{\nabla}}}^j{K}_{ij}-{\overline{\nabla}}_i\text{tr}_{{\overline{g}}}{K}\big)
\end{align}
denote the constraint map of the perturbed initial data
\begin{align}\label{deltagK}
dg_{ij}:={\overline{g}}_{ij}-{^S\overline{g}_{ij}}&&dK_{ij}:={K}_{ij}-{^SK_{ij}},&&i,j=1,2,3,
\end{align}
on $\Sigma_0:=(-\epsilon,\epsilon)_x\times r(\tau=0)\mathbb{S}^2_{\theta,\phi}$.
\begin{lemma}\label{BdC1}
$\Psi$ is well-defined, bounded and $C^1$ (Fr\'echet).
\end{lemma}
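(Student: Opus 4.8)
The plan is to verify the three assertions—well-definedness, boundedness, and $C^1$-smoothness—by unwinding the structure of $\Psi$ as a composition of algebraic (pointwise) operations, the Schwarzschild background tensors, and the differential operators $\overline{\nabla}$, $\overline{\text{R}}$, all acting between the weighted spaces $H^{s,\alpha}$. The key point is that $\Psi(dg, dK)$ is a polynomial expression in $dg$, $dK$, the inverse metric $\overline{g}^{-1} = ({^S\overline{g}} + dg)^{-1}$, the background quantities ${^S\overline{g}}$, ${^SK}$, and up to two derivatives of these, where each background factor carries a definite singular weight governed by $\Gamma_q$ bounds as in \eqref{Gamma_q} and \eqref{overasym1}. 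So the proof is really a bookkeeping exercise in tracking weights through the nonlinear constraint expressions, exactly parallel to how the weights were tracked in \S\ref{fixedpt}.

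First I would expand $\Psi$ explicitly relative to the Schwarzschild frame \eqref{Sframe}: write $\overline{\nabla} = {^S\overline{\nabla}} + \text{(difference of Christoffel symbols)}$, where the Christoffel difference is a first-order expression in $dg$ with coefficients involving ${^S\overline{g}}^{-1}$ and ${^S\Gamma} = \Gamma_{3/2}$-type terms; similarly expand $\overline{\text{R}} = {^S\overline{\text{R}}} + \text{(linear in }\partial^2 dg, \partial dg\text{)} + \text{(quadratic)}$, and the divergence and trace terms analogously. Crucially, because $(dg, dK) \in H^{4,\alpha+3/2} \times H^{3,\alpha}$ with $\alpha$ large, the leading pieces that are merely \emph{linear} in $dg, dK$ land in the target $H^{2,\alpha-3/2} \times H^{2,\alpha-3/2}$: e.g. $\partial^2(dg) \in H^{2,\alpha+3/2-3} = H^{2,\alpha-3/2}$ by Lemma \ref{Sobprop}, and a background Christoffel factor $\Gamma_{3/2}$ times $\partial(dg) \in H^{3,\alpha+3/2-3/2} = H^{3,\alpha}$ produces, after Lemma \ref{Sobprop}'s rule $r^{-3l/2} u \in H^{s,\alpha - 3l/2}$, something in $H^{3, \alpha - 3/2} \subset H^{2,\alpha-3/2}$. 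The nonlinear pieces are handled by the algebra property of the weighted spaces combined with the $L^\infty$ bounds \eqref{Linfty}: one checks that a product of factors lies in the right weighted space by summing the weight deficits, and the smallness hypothesis $\|dg\|_{L^\infty} \ll 1$ (needed to invert $\overline{g}$ via a Neumann series) ensures $\overline{g}^{-1}$ is well-defined and its entries are controlled in $H^{4}$-type norms with the $dg$ weights. For well-definedness one also needs the Schwarzschild scalar curvature and $|{^SK}|^2$, $(\text{tr}_{^S\overline{g}}{^SK})^2$ themselves to sit in $H^{2,\alpha-3/2}$, which follows from \eqref{overasym2}: $|{^S\overline{\text{R}}}| \sim r^{-2}$ and $|{^SK}|^2 \sim r^{-3}$, so for $\alpha$ sufficiently large these weights are comfortably absorbed—note $\Psi({^S\overline{g}}-{^S\overline{g}}, {^SK}-{^SK}) = \Psi(0,0)$ is exactly the Schwarzschild constraint quantity, which vanishes identically, but the individual summands must still lie in the target space, and they do.

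Boundedness then follows because every estimate above is in fact a polynomial bound: $\|\Psi(dg,dK)\|_{H^{2,\alpha-3/2}} \lesssim P(\|dg\|_{H^{4,\alpha+3/2}}, \|dK\|_{H^{3,\alpha}})$ for an explicit polynomial $P$ (with coefficients depending on $M$ and $\alpha$), valid on the region $\|dg\|_{L^\infty} \ll 1$. For the $C^1$ (Fréchet) claim, I would compute the formal Gâteaux derivative $D\Psi|_{(dg,dK)}(\delta g, \delta K)$—it is again a polynomial expression, linear in $(\delta g, \delta K)$ and their first two derivatives, with coefficients that are polynomial in $(dg, dK)$ and the background—and then show (i) this candidate derivative is a bounded linear map $H^{4,\alpha+3/2}\times H^{3,\alpha} \to H^{2,\alpha-3/2}\times H^{2,\alpha-3/2}$ by the same weight-counting, and (ii) the remainder $\Psi(dg+\delta g, dK + \delta K) - \Psi(dg,dK) - D\Psi(\delta g,\delta K)$ is $o(\|(\delta g,\delta K)\|)$, which is automatic since $\Psi$ is a (locally) polynomial map between Banach spaces and each monomial's Taylor remainder is genuinely higher order in the weighted norms. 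Continuity of $(dg,dK) \mapsto D\Psi|_{(dg,dK)}$ in operator norm likewise follows from the polynomial structure.

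The main obstacle is purely the weight bookkeeping at top order: one must verify that the \emph{most singular} surviving terms in $\overline{\text{R}}$ and in the momentum constraint—those combining a background factor of worst order (a $\Gamma_3$-type curvature coefficient, or a product $\Gamma_{3/2}\cdot\Gamma_{3/2}$ of connection coefficients, cf. \eqref{K} and \eqref{overasym1}) with the lowest-regularity allowed derivative of $dg$ or $dK$—still land in $H^{2,\alpha-3/2}$ with room to spare. This is exactly the kind of borderline computation flagged throughout the paper, and it is what pins down how large $\alpha$ must be; here, since we only need membership (not a closed energy estimate), any sufficiently large $\alpha$ works, and one simply records the inequality $\alpha - 3/2 \ge (\text{worst background weight}) + (\text{weight of the paired } dg\text{-derivative after Lemma \ref{Sobprop}})$ that must hold, noting it is consistent with the choice of $\alpha$ already fixed in Theorem \ref{mainthm}. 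The rest is routine application of the algebra and embedding properties of $H^{s,\alpha}$ established in Lemma \ref{Sobprop} and Proposition \ref{adaptineq}.
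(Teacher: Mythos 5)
There is a genuine gap, and it sits exactly at the point your argument leans on for well-definedness. You claim that the pure background summands ${^S\overline{\text{R}}}$, $|{^SK}|^2_{{^S\overline{g}}}$, $(\text{tr}_{{^S\overline{g}}}{^SK})^2$ individually lie in the target space $H^{2,\alpha-\frac{3}{2}}(\Sigma_0)$, ``comfortably absorbed for $\alpha$ sufficiently large.'' This is backwards: by Definition \ref{wsp} the $k=0$ piece of the $H^{2,\alpha-\frac{3}{2}}$ norm is $\int_{\Sigma_0} u^2\, r^{-2\alpha}\, d\mu_{{^S\overline{g}}}$, so membership for large $\alpha$ forces $u$ to \emph{vanish} to high order at the pinch (this is precisely why the admissible perturbations have the form $r^\alpha\mathcal{O}$). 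A term like ${^S\overline{\text{R}}}\sim r^{-2}$ gives $\int_0^\epsilon x^{-4-2\alpha}x^{3/2}\,dx=+\infty$ for every relevant $\alpha$, and increasing $\alpha$ only makes the divergence worse; cf.\ (\ref{notL2}). So your decomposition ``background $+$ linear $+$ quadratic,'' with each background piece estimated separately, cannot establish that $\Psi$ maps into $H^{2,\alpha-\frac{3}{2}}\times H^{2,\alpha-\frac{3}{2}}$.

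The missing idea — and the one the paper's proof uses — is to exploit the cancellation you mention in passing but then discard: since $({^S\overline{g}},{^SK})$ satisfies the constraints (\ref{const}) exactly, one subtracts the Schwarzschild constraint identity before estimating, so that $\Psi(dg,dK)$ is rewritten schematically as in (\ref{Psidiff}), with \emph{every} term carrying at least one difference factor $dA$, $dK$ or $d\partial$ (and the purely Schwarzschild terms never appear individually). Each such difference factor lies in a strongly weighted space ($dg\in H^{4,\alpha+\frac{3}{2}}$, $dK\in H^{3,\alpha}$, hence $dA$ controlled in a weighted space), and the singular background coefficients ${^SA}$, ${^SK}$ of order $\Gamma_{3/2}$ are then absorbed via Lemma \ref{Sobprop} together with the asymptotics (\ref{overasym1}), (\ref{overasym2}), while the quadratic terms $(dA)^2$, $(dK)^2$, $dA\,dK$ are handled with the $L^\infty$ bound (\ref{Linfty}). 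With that correction, the rest of your outline (boundedness from polynomial structure, $C^1$ by noting $dA$ is $C^1$ in $dg$ and computing the formal derivative with the same weight bookkeeping) goes through and matches the paper's argument.
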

\begin{proof}
First, we express $\Psi$ in terms of differences of perturbed and Schwarzschild components. Since the Schwarzschild pair $(^S\overline{g},{^SK}$) satisfies the constraints (\ref{const}), we write schematically:
\begin{align}\label{Psidiff}
\Psi(d g, d K)=&\;\big( \partial(d { A})+ (d \partial)({ ^SA})+{^S A}d{ A}+({d A})^2+{^S K}d{ K}+({d K})^2,\\
\notag &\;\;\; \partial(d { K})+ (d \partial)({ ^SK})+{^S K}d { A}
+{^SA}{ d K}+{ d A}{ d K}\big),
\end{align}
where $d{ A}:={ {A}}-{^SA}$. The boundedness of $\Psi$ now follows in part from Lemma \ref{Sobprop}, using the asymptotics
(\ref{overasym1}),(\ref{overasym2}), and by applying the $L^\infty$ bound (\ref{Linfty}) on the quadratic terms $({d K})^2$, $({d A})^2$, ${d A}{d K}$. Realizing that ${ d A}$ is $C^1$ in the variable $d g$ (by expressing ${ d A}$ in terms of ${d g}$), we conclude that $\Psi$ is also $C^1$ in the above weighted spaces (taking into account the previous asymptotics and properties used).
\end{proof}
\begin{remark}\label{Hsa(x)}
In view of the asymptotics (\ref{rasym}) of the radius function $r$, at $\tau=0$, we may replace the $H^{s,\alpha}$ norm (\ref{Hsa}), at our convenience, with the equivalent norm
\begin{align}\label{Hsaequiv}
\|v\|^2_{H^{s,\alpha}(\Sigma_0)}\overset{\text{equiv}}\approx\sum_{|J|=k\leq s}
\int_{\Sigma_0}\frac{(^S\overline{\nabla}^{(J)}v)^2}{x^{2\alpha-3(k-1)}}d\mu_{^S\overline{g}},
\end{align}
where $^S\overline{\nabla}$ is the induced Schwarzschild connection on $(\Sigma_0,^S\overline{g})$; we replace $\partial^{(J)}$ with $^S\overline{\nabla}^{(J)}$ in (\ref{Hsa}) as well, since the weights in the norm can exactly tolerate the most singular coefficients (\ref{overasym1}) of any additional generated terms. 
\end{remark}
Our strategy is to fix the boundary components (close to Schwarzschild) and then solve for the variables in the interior via a perturbation. For this purpose 
we introduce the functions 
\begin{align}\label{split}
d g^{(0)}\in H^{4,\alpha+\frac{3}{2}}\cap H^1_0,\;\;d K^{(0)}\in H^{3,\alpha}\cap H^1_0,\;\;d g^{(1)}\in H^{4,\alpha+\frac{3}{2}},\;\;d K^{(1)}\in H^{3,\alpha}\\
\notag d g=d g^{(0)}+d g^{(1)},\qquad d K=d K^{(0)}+d K^{(1)},
\end{align}
where by $H^1_0$ we denote the set of functions in $H^1(\Sigma_0)$ having zero trace on the boundary of $\Sigma_0$.\footnote{The boundary $\partial\Sigma_0$ has two components diffeomorphic to the sphere.}
We fix $d g^{(1)},d K^{(1)}$ and study $\Psi$ under variations of the variables $d g^{(0)},d K^{(0)}$.  Whence, we can view $\Psi$ as a map
\begin{align}\label{Psi2}
\notag\Psi:\big((H^{4,\alpha+\frac{3}{2}}&\cap H^1_0)\times (H^{3,\alpha}\cap H^1_0)\big)\times(H^{4,\alpha+\frac{3}{2}}\times H^{3,\alpha})\to H^{2,\alpha-\frac{3}{2}}\times H^{2,\alpha-\frac{3}{2}},\\
&\Psi(d g^{(0)},d K^{(0)},d g^{(1)},d K^{(1)}):=\Psi(d g,d K).
\end{align}

Let
\begin{align}\label{DPsi}
D\Psi:=D_{(d g^{(0)},d K^{(0)})}\Psi_{(0,0,0,0)}:H^{4,\alpha+\frac{3}{2}}\times H^{3,\alpha}\to H^{2,\alpha-\frac{3}{2}}\times H^{2,\alpha-\frac{3}{2}}
\end{align}
be the Fr\'echet derivative of $\Psi$ w.r.t. the variables $(d g^{(0)},d K^{(0)})$ about zero, that is, the Schwarzschild initial data; $d g^{(0)}=d K^{(0)}=d g^{(1)}=d K^{(1)}=0$. The following theorem is our stability result for the constraint equations.
\begin{theorem}\label{conststab}
Let $\alpha>0$ be sufficiently large, consistent with Theorems \ref{mainthm}, \ref{mainthm2}. Then $D\Psi$ is surjective, and the level set 
\begin{align}\label{levset}
\bigg(\Psi\bigg|_{(\cdot,\cdot,d g^{(1)},d K^{(1)})}\bigg)^{-1}(\{0\})
\end{align}
is non-empty, for every pair $(d g^{(1)},d K^{(1)})$ in a sufficiently small ball of $H^{4,\alpha+\frac{3}{2}}\times H^{3,\alpha}$.
\end{theorem}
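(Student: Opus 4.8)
The plan is to prove Theorem \ref{conststab} via the implicit function theorem, once we know that $D\Psi$ is surjective with (bounded) right inverse. Granting surjectivity of $D\Psi$, the standard implicit function theorem in Banach spaces applies to the $C^1$ map $\Psi$ of Lemma \ref{BdC1}, viewed as in (\ref{Psi2}): the partial derivative $D\Psi = D_{(dg^{(0)},dK^{(0)})}\Psi$ at the Schwarzschild point is onto and splits (its kernel is complemented since we may use a bounded right inverse to write $H^{4,\alpha+\frac32}\times H^{3,\alpha}=\ker D\Psi \oplus \mathrm{Im}(D\Psi^{-1})$), so for $(dg^{(1)},dK^{(1)})$ small enough there is a solution $(dg^{(0)},dK^{(0)})\in (H^{4,\alpha+\frac32}\cap H^1_0)\times (H^{3,\alpha}\cap H^1_0)$ with $\Psi(dg^{(0)},dK^{(0)},dg^{(1)},dK^{(1)})=0$, depending continuously (indeed $C^1$) on the fixed boundary data. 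This gives the non-emptiness of the level set (\ref{levset}).

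So the real work is the surjectivity of $D\Psi$ on the weighted spaces, and this is where I would spend essentially all the effort. First I would compute the linearization explicitly from (\ref{Psidiff}): differentiating at Schwarzschild, $D\Psi(h,k)$ has the schematic form $\big(D\overline{\mathrm{R}}\!\cdot\! h - 2\langle {}^SK,k\rangle + 2\,\mathrm{tr}_{{}^S\overline g}{}^SK\,\mathrm{tr}_{{}^S\overline g}k + (\text{lower order in }h),\ \mathrm{div}_{{}^S\overline g}k - d\,\mathrm{tr}_{{}^S\overline g}k + (\text{terms with }{}^SK, h)\big)$, i.e. the usual linearized constraint operator but now with the singular Schwarzschild coefficients of \S\ref{Schwarz}. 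As announced in \S\ref{Meth}, rather than attack surjectivity head-on I would adopt the indirect argument of Corvino--Schoen \cite{CorSch}: dualize. The formal $L^2$-adjoint $D\Psi^*$ (with respect to the Schwarzschild measure, or the equivalent weighted pairing) acts on a pair $(N, X)$ — a lapse-type function and a shift-type vector field — and surjectivity of $D\Psi$ between the weighted spaces follows from a coercive a priori estimate for $D\Psi^*$, namely $\|(N,X)\|_{\text{weighted}} \lesssim \|D\Psi^*(N,X)\|_{\text{weighted}}$, together with the fact that $D\Psi^*$ has trivial kernel (no nontrivial KIDs on this singular slice with the imposed weights/boundary behavior). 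The coercive estimate is exactly a weighted elliptic estimate, and this is the content flagged as Proposition \ref{Fred} / \S\ref{confFred}: one shows $D\Psi^*$, or rather the associated second-order operator $D\Psi\circ D\Psi^*$, is Fredholm of index zero between the appropriate weighted Sobolev spaces, by deriving weighted elliptic (Schauder/energy) estimates up to the singular sphere $x=0$ — the weights $r^{2\alpha-3(k-1)}$ being chosen precisely so they dominate the singular Schwarzschild connection coefficients (\ref{overasym1}) appearing in the lower-order terms of $D\Psi^*$.

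Concretely the steps are: (i) write $D\Psi$ and $D\Psi^*$ explicitly and identify the principal part as (a weighted conjugate of) the standard linearized-constraint operator; (ii) prove the weighted elliptic estimate for $D\Psi^*$ on $\Sigma_0$ up to $\{x=0\}$, using integration by parts with the weights and absorbing the singular Schwarzschild coefficients — here the largeness of $\alpha$ again provides favourable-sign critical terms, just as in \S\ref{fixedpt}; (iii) deduce the Fredholm property (Proposition \ref{Fred}) and that the index is zero, e.g. by deformation to a model operator; (iv) show $\ker D\Psi^* = \{0\}$ in these spaces (the Schwarzschild slice carries no KIDs compatible with the prescribed interior/boundary data), hence $D\Psi$ is onto with bounded right inverse on the $H^1_0$-constrained variables; (v) feed this into the implicit function theorem as above. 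The main obstacle is unquestionably step (ii): the second fundamental form and scalar curvature of $\Sigma_0$ blow up like $r^{-3/2}$ and $r^{-2}$ (see (\ref{overasym2})), and the mean curvature is not even in $L^p$ for $p\ge 5/3$, so the lower-order terms of $D\Psi^*$ are genuinely singular; one must verify that the weighted norms of Definition \ref{wsp} are sharply matched to these blow-up rates so that every singular term is controlled, with the borderline cases closed only because $\alpha$ is taken large. Closing this weighted elliptic estimate uniformly up to the collapsed sphere — including the boundary contributions from $\partial\Sigma_0$ — is the crux of the argument.
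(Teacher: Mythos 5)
Your outer architecture (surjectivity of $D\Psi$ via closed range plus triviality of $\ker D\Psi^*$, then the implicit function theorem) matches the paper, and your step (iv) is essentially the paper's: by Moncrief's lemma the kernel of $D\Psi^*$ corresponds to Killing fields of the vacuum development, and for $\alpha$ large the Schwarzschild Killing fields simply fail to lie in the weighted domain $H^{2,\alpha-\frac{3}{2}}(\Sigma_0)$, so the cokernel is trivial. But your central analytic step (ii) has a genuine gap. You propose to prove a direct coercive weighted estimate $\|(N,X)\|\lesssim\|D\Psi^*(N,X)\|$ and claim that ``the largeness of $\alpha$ again provides favourable-sign critical terms, just as in \S\ref{fixedpt}.'' That mechanism does not exist in the elliptic problem on the fixed slice: in \S\ref{fixedpt} the good critical term of magnitude $\alpha$ comes from $\partial_\tau$ hitting the weight $r^{-2\alpha+\dots}$ via $\partial_\tau r>0$, i.e.\ from the opening-up of the singularity in time. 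On $\Sigma_0$ there is no such term; integrating by parts against the weights $x^{-2\alpha-\dots}$ produces \emph{unfavourable} terms of size $\alpha$ (and $\alpha^2$ after Cauchy--Schwarz), at the same weighted level as the quantity being estimated (multiplication by the singular coefficients $\Gamma_{3/2},\Gamma_3$ shifts the weight exactly, so these are not compact lower-order errors). In the paper the role of large $\alpha$ in the constraint section is \emph{only} to exclude the Killing fields; it does not close any elliptic estimate.

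The paper's way around this is precisely what your proposal omits: rather than estimating the overdetermined operator $D\Psi^*$, it restricts $\Psi$ to the conformal-type subfamily (\ref{confdata}) to get the determined operator $D\hat{\Psi}$ in $(\eta,Y)$ — this, not an estimate for $D\Psi^*$ or $D\Psi\circ D\Psi^*$, is the content of Proposition \ref{Fred} — and then adds the artificial zeroth-order term $\frac{\sigma}{x^3}(I_\eta,-I_Y)$ with opposite signs on the two components. For $\sigma\gtrsim\alpha^2$ this added term absorbs all the $\alpha$-sized bad terms, Lax--Milgram gives an isomorphism $S$, weighted elliptic regularity upgrades the weak solution, and Fredholmness of $D\hat{\Psi}$ follows because $D\hat{\Psi}\circ S^{-1}-I$ is compact (weighted Rellich). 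This yields only finite-dimensional cokernel (hence closed range) of $D\Psi$, and surjectivity is then obtained from the Moncrief/Killing-field step — no coercive estimate for $D\Psi^*$ is ever proved, and the paper explicitly avoids the direct route because the formula for $D\Psi$ is too complex. As written, your step (ii) is unsupported (and your step (iii), ``index zero by deformation to a model operator,'' is both vague in this singular-weight setting and unnecessary: finite codimension of the range plus trivial kernel of the adjoint suffices). To repair your argument you would need either to supply a genuinely new mechanism for the critical terms in the $D\Psi^*$ estimate, or to reintroduce the conformal restriction and the large-$\sigma$ perturbation as in the paper.
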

\begin{remark}\label{constexist}
The preceding theorem implies the existence of non-spherically symmetric perturbations $({\overline{g}},{K}_{ij})$\footnote{This is easily seen by considering non-spherically symmetric $(d g^{(1)},d K^{(1)})$.} of the Schwarzschild initial data set $(^S\overline{g},{^SK}_{ij})$ on $\Sigma_0$, compatible with the constraint equations (\ref{const}), which lie inside the weighted spaces $H^{4,\alpha+\frac{3}{2}}(\Sigma_0)\times H^{3,\alpha}(\Sigma_0)$, satisfying the assumptions of the local-in time-existence Theorems \ref{mainthm}, \ref{mainthm2}.
\end{remark}
\begin{remark}\label{halfS}
We can construct perturbed initial data sets $({\overline{g}},{K})$ verifying the constraints (\ref{const}) and the assumptions of Theorems \ref{mainthm}, \ref{mainthm2}, which are identical to the Schwarzschild initial data set $(^S\overline{g},{^SK})$ on $\Sigma_0\cap\{x<0\}$ and not spherically symmetric on the other `half' $\Sigma_0\cap\{x>0\}$ (and vice versa) by applying the arguments in this section only to the $\{x>0\}$ `piece' of the initial data.
\end{remark}
Unfortunately, the formula of $D\Psi$ is quite complex \cite{Mon1} \cite{CorSch} and it is not easy to prove directly the surjectivity of $D\Psi$. We do not present the formula of $D\Psi$ here, since we are not going to use it. Instead, we follow an indirect argument of \cite{CorSch}, which we adapt in our context. We consider the restriction of $\Psi$ (\ref{Psi}) in the subdomain 
\begin{align}\label{confdata}
{\overline{g}}=\varphi^4{^S\overline{g}}&&{K}_{ij}=\varphi^2({^SK}+\mathcal{L}_X{^S\overline{g}})_{ij},
\end{align}
where $(\mathcal{L}_X{^S\overline{g}})_{ij}:={^S\overline{\nabla}}_iX_j+{^S\overline{\nabla}}_jX_i-{^S\overline{g}}_{ij}{^S\overline{\nabla}}^kX_k$. Then the constraint map (\ref{Psi}) becomes (\cite{CorSch})
\begin{align}\label{Psihat}
\notag\hat{\Psi}:(H^{4,\alpha+\frac{3}{2}}\cap H^1_0)&\times(H^{4,\alpha+\frac{3}{2}}\cap H^1_0)\to H^{2,\alpha-\frac{3}{2}}\times H^{2,\alpha-\frac{3}{2}},\\
\hat{\Psi}(\varphi-1,X)=&\;
\big(\varphi^{-5}\big[-8\Delta_{{^S\overline{g}}}\varphi
+\varphi^S\overline{\text{R}}-\varphi\,\big|{^SK}
+\mathcal{L}_X{^S\overline{g}}\,\big|_{{^S\overline{g}}}^2\\
\notag&+\frac{1}{2}\varphi(\text{tr}_{{^S\overline{g}}}{^SK}+\mathcal{L}_X{^S\overline{g}})^2\big],\\
\notag&\varphi^{-2}[{^S\overline{\nabla}}^j({^SK}+\mathcal{L}_X{^S\overline{g}})_{ij}+4\varphi^{-1}{(^SK+\mathcal{L}_X{^S\overline{g}})_i}^j{^S\overline{\nabla}}_j\varphi\\
\notag&-2\varphi^{-1}{^S\overline{\nabla}}_i\varphi(\text{tr}_{{^S\overline{g}}}{^SK}+\mathcal{L}_X{^S\overline{g}})
]\big)
\end{align}
Further, the linearization of $\hat{\Psi}$ at $(0,0)$ reads
\begin{align}\label{DPsihat}
\notag D\hat{\Psi}(\eta,Y)=\big(&-8\Delta_{{^S\overline{g}}}\eta+2(\text{tr}_{{^S\overline{g}}}{^SK})^2\eta-2K^{ij}(\mathcal{L}_Y{^S\overline{g}})_{ij}+\text{tr}_{{^S\overline{g}}}{^SK}({^S\overline{\nabla}}^iY)_i,\\
&\;{^S\overline{\nabla}}^j(\mathcal{L}_Y{^S\overline{g}})_{ij}+4{{^SK}_i}^j{^S\overline{\nabla}}_j\eta-2\text{tr}_{{^S\overline{g}}}{^SK}{^S\overline{\nabla}}_i\eta-2{^S\overline{\nabla}}^j{^SK}_{ij}\eta\big)
\end{align}
Recall (\ref{overasym1}), (\ref{rasym}) to find the leading asymptotics, as $x\rightarrow0^+$, $\tau=0$, of the singular Schwarzschild components of $D\hat{\Psi}(\eta,Y)$:
\begin{align}\label{constcoeff}
|{^SK}_{ij}|,\;|\text{tr}_{{^S\overline{g}}}{^SK}|\lesssim\frac{1}{x^\frac{3}{2}}&&|^S\overline{\nabla}^{(J)}{^SK}_{ij}|\lesssim\frac{1}{x^{\frac{3}{2}(|J|+1)}}
\end{align}

A key ingredient in the proof of Theorem \ref{conststab} is the following proposition.
\begin{proposition}\label{Fred}
The operator 
\begin{align}\label{DPsihatsp}
D\hat{\Psi}:H^{4,\alpha+\frac{3}{2}}\cap H^1_0\to H^{2,\alpha-\frac{3}{2}}
\end{align}
is Fredholm. In particular, it has finite dim kernel and cokernel.
\end{proposition}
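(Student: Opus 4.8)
The operator $D\hat{\Psi}$ is, up to lower-order terms, the direct sum of the conformal Laplacian $-8\Delta_{{^S\overline g}}$ acting on $\eta$ and the vector-Laplacian $Y\mapsto {^S\overline{\nabla}}^j(\mathcal L_Y{^S\overline g})_{ij}$ acting on $X$. Both are elliptic (uniformly so in $x>0$), but the singular hypersurface $\{x=0\}$ is a boundary-type singularity where the coefficients (\ref{constcoeff}) blow up. The strategy is the usual one for proving the Fredholm property: establish a two-sided \emph{weighted elliptic estimate} of the form
\begin{align}\label{elliptest}
\|(\eta,Y)\|_{H^{4,\alpha+\frac32}\cap H^1_0}\lesssim \|D\hat{\Psi}(\eta,Y)\|_{H^{2,\alpha-\frac32}}+\|(\eta,Y)\|_{H^{3,\alpha+\frac32}\cap H^1_0},
\end{align}
together with the compactness of the inclusion $H^{4,\alpha+\frac32}\hookrightarrow H^{3,\alpha+\frac32}$ (on the truncated cylinder $\Sigma_0$), and then conclude by the standard abstract argument that an operator admitting such a semi-Fredholm estimate with compact error has finite-dimensional kernel and closed range; the finite-dimensionality of the cokernel follows by applying the same reasoning to the formal adjoint $D\hat{\Psi}^*$, which is an operator of the same type (the principal part is self-adjoint, the lower-order terms have the same singular orders by (\ref{constcoeff})).

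First I would derive the basic weighted $H^1$--$H^2$ estimate for the conformal Laplacian: multiply $-8\Delta_{{^S\overline g}}\eta$ by $\eta/x^{2\alpha+\text{const}}$, integrate by parts over $\Sigma_0$ against $d\mu_{{^S\overline g}}$, using the $H^1_0$ condition to kill the boundary term at $\partial\Sigma_0$ and the asymptotics (\ref{rasym}), (\ref{overasym1}) of $r,\Omega$ and the frame to compute the weight derivatives; this is exactly the same bookkeeping as in \S\ref{Enest}, only elliptic and one dimension lower in time. The point is that differentiating the weight $x^{-(2\alpha-3(k-1))}$ produces a favourable coercive term of size $\sim\alpha\|\eta\|^2$ in a higher weight, which---since $\alpha$ is chosen large---dominates the indefinite cross terms and the contributions of the singular zeroth- and first-order coefficients in (\ref{DPsihat}), whose worst blow-up is $x^{-3}$ (from $(\text{tr}\,{^SK})^2$) and $x^{-3/2}$ (from ${^SK}_i{}^j\nabla_j\eta$), precisely the orders the weighted spaces in Definition \ref{wsp} are designed to absorb. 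The vector-Laplacian piece is handled identically, using that $\mathcal L_X{^S\overline g}$ controls all first derivatives of $X$ up to a zeroth-order (Ricci) term by the Schwarzschild Bochner identity on $(\Sigma_0,{^S\overline g})$. Then I would bootstrap: commute $D\hat{\Psi}$ with spatial frame derivatives $\partial^{(J)}$, $|J|\le 2$, pick up commutator terms with the singular connection coefficients (\ref{overasym1})---again within the tolerance of the weight shift $\alpha-\tfrac32|J|$, cf.\ Lemma \ref{Sobprop}---and feed back the previous estimate, arriving at (\ref{elliptest}).

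\textbf{The main obstacle} is exactly the same one that drives the whole paper: the borderline singular orders. The coercive gain from the weight is of magnitude $\alpha$, while several of the error terms generated by the singular coefficients and by the frame commutators come multiplied by powers of $\alpha$ as well (e.g.\ $\alpha^2$ from two weight-differentiations in the top-order commutator). One must check that all such $\alpha$-dependent error terms are \emph{subcritical}---i.e.\ appear with a strictly smaller weight or a strictly smaller power of $\alpha$ than the coercive term---so that choosing $\alpha$ large genuinely closes the estimate rather than merely shifting the problem; this is the content of the asymptotic analysis of \S\ref{Schwarz}, and it is where the "threshold" nature of the Schwarzschild singularity is felt. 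A secondary technical point is the boundary behaviour at $\partial\Sigma_0$: the $H^1_0$ condition disposes of the leading boundary integral, but the higher-order ($H^2$, bootstrapped) estimates require either elliptic boundary regularity for the Dirichlet problem on the smooth boundary spheres, or working with the equivalent norm (\ref{Hsaequiv}) and a cutoff separating a neighbourhood of $\{x=0\}$ (where the weighted interior estimate above applies) from a neighbourhood of $\partial\Sigma_0$ (where standard elliptic theory applies); I would adopt the latter, patching the two regions with a partition of unity.
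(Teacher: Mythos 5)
Your overall strategy (weighted semi-Fredholm estimate plus compact error term, then the adjoint for the cokernel) is a legitimate alternative in principle, but as written it has a genuine gap at its pivot point: the inclusion $H^{4,\alpha+\frac32}\hookrightarrow H^{3,\alpha+\frac32}$ on $\Sigma_0$ is \emph{not} compact, so Peetre's lemma cannot be applied to your estimate (\ref{elliptest}). In these spaces a gain of derivatives alone buys nothing at the singular sphere: taking radial bumps $u_n(x)=c_n\phi(2^nx)$ supported in $\{x\sim 2^{-n}\}$ with $c_n^2=2^{-n(2\alpha+3+\frac12)}$, one checks from (\ref{Hsa}), (\ref{dmu}), (\ref{rasym}) that the $k\geq1$ terms of the $H^{4,\alpha+\frac32}$ norm tend to $0$ while the $k=0$ term (which is common to both norms) stays of order one; the sequence is bounded in $H^{4,\alpha+\frac32}\cap H^1_0$, converges weakly to $0$, and has no subsequence converging in $H^{3,\alpha+\frac32}$. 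Compactness in this scale requires a strict gain in the \emph{weight}, which is exactly what the paper exploits: there the compact map is $H^{4,\alpha+\frac32}\cap H^1_0\hookrightarrow H^{2,\alpha-\frac32}$ (a weighted Rellich theorem), entering through the composition $\tfrac{\sigma}{x^3}(I_\eta,-I_Y)\circ S^{-1}$.

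The second, related problem is that an estimate with a genuinely weight-gaining error term is not a routine consequence of interior ellipticity plus ``$\alpha$ large''. In the elliptic setting on $\Sigma_0$ the favourable term of size $\alpha$ that you invoke does not exist: differentiating the weight produces \emph{indefinite} cross terms of size $\alpha$ carrying an extra factor $x^{-\frac12}$, and after Cauchy--Schwarz these demand a zeroth-order term of strength $\sim\alpha^2$ in a more singular weight. (The monotone mechanism of \S\ref{Enest} relies on $\partial_\tau r>0$, i.e.\ on the time direction, and has no analogue here.) This is precisely why the paper does not estimate $D\hat\Psi$ directly: it introduces the shifted operator $S=D\hat\Psi+\frac{\sigma}{x^3}(I_\eta,-I_Y)$ with $\sigma\gtrsim\alpha^2$, proves $S$ is an isomorphism of $H^{4,\alpha+\frac32}\cap H^1_0$ onto $H^{2,\alpha-\frac32}$ by Lax--Milgram (Proposition \ref{1-1}) followed by weighted elliptic regularity (Proposition \ref{ellreg}), and then obtains Fredholmness of $D\hat\Psi$ as a compact perturbation of $S$, using the weight-gaining compact embedding above. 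Your argument could be repaired by proving an a priori estimate whose error term lies in a strictly more favourable weight (e.g.\ $H^{2,\alpha-\frac32}$), but establishing that is essentially equivalent to the paper's isomorphism step, not a shortcut around it; likewise your cokernel step via the ``formal adjoint'' needs care, since the adjoint relevant to range-closedness is taken with respect to the weighted inner products and so carries conjugated weights.
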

Assuming Proposition \ref{Fred}, we proceed to the proof of Theorem \ref{conststab}.
\begin{proof}[Proof of Theorem \ref{conststab}]
Evidently, $\text{range}D\Psi\supset\text{range} D\hat{\Psi}$ and hence by Proposition \ref{Fred} $D\Psi$ has finite dimensional cokernel. It follows that range$D\Psi$ is closed. 

Since range$D\Psi$ is closed, range$D\Psi=(\text{kernel}D\Psi^*)^\perp$; $D\Psi^*$ being the adjoint of $D\Psi$. Therefore, in order to prove surjectivity, it suffices to show that $\text{kernel}D\Psi^*=\{0\}$. We now recall a lemma from \cite{Mon1}:
\begin{lemma}\label{Monc}
The kernel of $D\Psi^*$ is in one-to-one correspondence with the set of Killing vector fields of the Einsteinian vacuum development of $(\overline{g},K)$, i.e., the Schwarzschild region (\ref{fol}) foliated by $\{\Sigma_\tau\}_{\tau\in[0,T]}$.
\end{lemma}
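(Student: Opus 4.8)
The plan is to identify $\ker D\Psi^*$ with a space of solutions of the adjoint linearized constraint operator, and then to recognize the latter via the standard Moncrief correspondence as the space of "Killing initial data" (KIDs) on $\Sigma_0$, which in turn corresponds bijectively to spacetime Killing fields of the development. Concretely, write $D\Psi^*$ as the formal $L^2$-adjoint (with respect to the Schwarzschild volume form $d\mu_{^S\overline{g}}$) of the linearized constraint map $D\Psi$ around the Schwarzschild data $(^S\overline{g},{^SK})$. An element of $\ker D\Psi^*$ is a pair $(N,Y)$ — a scalar (lapse) and a vector field (shift) on $\Sigma_0$ — satisfying the overdetermined system $D\Psi^*(N,Y)=0$. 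The first step is to recall, from \cite{Mon1}, the explicit form of $D\Psi^*$: after integrations by parts one finds that $D\Psi^*(N,Y)=0$ is precisely the system
\begin{align*}
\overline{\nabla}_i\overline{\nabla}_j N - N(\overline{R}_{ij} - \dots) + (\text{terms in }Y,K) &= 0,\\
\mathcal{L}_Y \overline{g} + (\text{terms in }N, K) &= 0,
\end{align*}
i.e.\ the KID equations on $(\Sigma_0,{^S\overline{g}},{^SK})$, whose solutions are exactly the data $(N,Y)=(-\langle Z,n\rangle, Z^{\top})$ induced by a spacetime Killing field $Z$ of the vacuum development (here $n$ is the unit normal to $\Sigma_0$ and $Z^{\top}$ its tangential part). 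The correspondence $Z \mapsto (N,Y)$ is injective because a Killing field vanishing to first order on a spacelike hypersurface vanishes identically; surjectivity is the content of Moncrief's theorem, which propagates a KID on $\Sigma_0$ to a genuine Killing field on the development by showing the obstruction tensor satisfies a homogeneous wave equation with trivial Cauchy data.

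The key steps, in order, are: (1) identify the function spaces so that "adjoint" makes sense — here $D\Psi: H^{4,\alpha+\frac{3}{2}}\times H^{3,\alpha} \to H^{2,\alpha-\frac{3}{2}}\times H^{2,\alpha-\frac{3}{2}}$, and since range $D\Psi$ is closed (already established via Proposition~\ref{Fred}), the annihilator $(\mathrm{range}\,D\Psi)^\perp$ is computed in the appropriate dual pairing, and elements of $\ker D\Psi^*$ are distributional pairs $(N,Y)$ that are automatically smooth away from the singular sphere by elliptic regularity for the overdetermined KID system; (2) quote the explicit computation of $D\Psi^*$ from \cite{Mon1} and recognize the KID equations; (3) invoke Moncrief's propagation argument on the development $\{\Sigma_\tau\}_{\tau\in[0,T]}$ — which is a \emph{smooth} vacuum spacetime, so the classical argument applies verbatim in the region $\tau>0$, with the singular sphere on $\Sigma_0$ playing no role since KIDs are determined by their values on any $\Sigma_\tau$ with $\tau>0$ by the domain-of-dependence property; (4) conclude the bijection $\ker D\Psi^* \cong \{\text{Killing fields of the development}\}$.

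The main obstacle I anticipate is a regularity/integrability subtlety peculiar to this singular setting: one must check that a KID $(N,Y)$ lying a priori only in the dual of the weighted target space is regular enough on $\Sigma_0$ near the collapsed sphere $x=0$ for the Moncrief propagation argument to be applied, and conversely that a genuine spacetime Killing field restricts to a pair $(N,Y)$ which actually lies in (the dual of) $H^{2,\alpha-\frac{3}{2}}\times H^{2,\alpha-\frac{3}{2}}$ — i.e.\ that the correspondence respects the weighted spaces rather than some larger space. Since the spacetimes under consideration agree with Schwarzschild to high order $\alpha$ at the singularity, and Schwarzschild itself has its rotational and translational Killing fields ($\partial_\phi$, the $SO(3)$ generators, and the $r=\text{const}$-generating field) with known mild behavior at $r=0$ relative to the frame \eqref{Sframe}, one expects the weighted norms — whose weights are calibrated exactly to the Schwarzschild connection asymptotics \eqref{overasym1} — to be wide enough to accommodate these; nonetheless this matching is the point that genuinely uses the structure of the weighted spaces introduced in \S\ref{wHs} and should be verified carefully rather than asserted. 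Away from this issue, the lemma is the standard Moncrief correspondence and the proof is essentially a citation of \cite{Mon1} combined with the closed-range fact already in hand.
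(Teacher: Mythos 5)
Your proposal is correct and is essentially the paper's own treatment: the paper proves nothing here but simply recalls the lemma from \cite{Mon1}, i.e.\ exactly the Moncrief KID correspondence (adjoint linearized constraints $\leftrightarrow$ lapse--shift data of spacetime Killing fields, propagated via the wave-equation argument) that you reconstruct. The weighted-space subtlety you flag is the only point where the singular setting enters, and the paper deals with it not inside this lemma but immediately afterwards, by choosing $\alpha$ so large that the Schwarzschild Killing fields fail to lie in $H^{2,\alpha-\frac{3}{2}}(\Sigma_0)$, forcing $\ker D\Psi^*=\{0\}$.
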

Picking $\alpha$ large enough, we can guarantee that none of the Schwarzschild Killing vector fields ($\partial_t,\partial_\phi,\sin\phi\,\partial_\theta+\cot\theta\cos\phi\,\partial_\phi,\cos\phi\,\partial_\theta-\cot\theta\sin\phi\,\partial_\phi$) lie in $H^{2,\alpha-\frac{3}{2}}(\Sigma_0)$, the domain of $D\Psi^*$. Therefore, for this $\alpha$ the condition kernel$D\Psi^*=0$ is verified and hence 
$D\Psi$ is surjective. The second part of Theorem \ref{conststab} follows by the implicit function theorem.
\end{proof}

\subsection{Proof of Proposition \ref{Fred}}\label{confFred}

We realize the following plan (which is an adaptation of a usual argument from the non-singular case): Given $\sigma>0$ we define the bounded operator (see Lemma \ref{BdC1})
\begin{align}\label{S}
S:=D\hat{\Psi}+\frac{\sigma}{x^3}(I_\eta,-I_Y):H^{4,\alpha+\frac{3}{2}}\cap H^1_0\to H^{2,\alpha-\frac{3}{2}},
\end{align}
where $I_\eta,I_Y$ are the identity maps for the variables $\eta,Y$ respectively. Then we show:
\begin{proposition}\label{Siso}
$S$ is an isomorphism for $\sigma>0$ sufficiently large.
\end{proposition}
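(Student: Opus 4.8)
The strategy is the standard ``large shift'' argument adapted to the singular weighted setting: show that the modified operator $S$ is invertible by deriving an a priori weighted elliptic estimate with a gain that beats the bad Schwarzschild coefficients, and then deduce that the original operator $D\hat\Psi$, differing from $S$ by a compact (in fact lower-order, relatively compact) perturbation, is Fredholm. First I would set up the relevant weighted energy/elliptic estimate. The principal part of $D\hat\Psi$ is the (vector) operator $(\eta,Y)\mapsto(-8\Delta_{{^S\overline{g}}}\eta,\ {^S\overline{\nabla}}^j(\mathcal{L}_Y{^S\overline{g}})_{ij})$, which is diagonal and elliptic (the operator $Y\mapsto{^S\overline{\nabla}}^j(\mathcal{L}_Y{^S\overline{g}})_{ij}$ is, up to lower order, the conformal Killing Laplacian and is elliptic). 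Testing the equation $S(\eta,Y)=(f,b)$ against $(\eta,Y)$ with the weight $x^{-(2\alpha)}$ and integrating by parts against the Schwarzschild volume form, the Laplacian produces the positive term $\int x^{-2\alpha}|{^S\overline{\nabla}}\eta|^2$ together with commutator terms in which the derivative of the weight $|\partial_1(x^{-\alpha})|\lesssim \alpha x^{-\alpha-1}$ appears; the zeroth order shift $\sigma x^{-3}$ contributes the crucial coercive term $\sigma\int x^{-2\alpha-3}(\eta^2+|Y|^2)$, i.e. precisely at the weight level $H^{0,\alpha+\frac32}$. The dangerous terms in $D\hat\Psi$ — the ones involving ${^SK}$, $\text{tr}{^SK}$ and ${^S\overline{\nabla}}{^SK}$ — are by \eqref{constcoeff} bounded by $x^{-3/2}$, $x^{-3/2}$ and $x^{-3}$ respectively, so after Cauchy--Schwarz they are absorbed either into the positive gradient term (losing one derivative of $\eta$ or $Y$ against a factor $x^{-3/2}$, hence landing at weight $x^{-2\alpha-3}$) or into the coercive $\sigma$-term, provided $\sigma$ is taken large enough relative to the fixed geometric constants. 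Iterating this up to three derivatives (commuting $S$ with spatial derivatives $\partial^{(J)}$, $|J|\le3$, and controlling the commutators exactly as in Section~\ref{fixedpt} using \eqref{Gamma_q} and the $L^\infty$/$L^4$ bounds of Proposition~\ref{adaptineq}) yields the two-sided estimate
\begin{align}\label{Sest}
\|(\eta,Y)\|_{H^{4,\alpha+\frac32}}\lesssim \|S(\eta,Y)\|_{H^{2,\alpha-\frac32}}
\end{align}
for $\sigma$ sufficiently large, which gives injectivity and closed range.

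For surjectivity of $S$ I would combine the a priori estimate \eqref{Sest} with a continuity (method of continuity) argument in $\sigma$, or equivalently with the fact that the formal adjoint $S^*$ satisfies the analogous estimate (the shift $\sigma x^{-3}$ is symmetric up to lower order and the sign is arranged by the $(I_\eta,-I_Y)$ choice so that the adjoint picks up the same coercive term), so $\ker S^*=\{0\}$ and range $S=(\ker S^*)^\perp$ is all of $H^{2,\alpha-\frac32}$. Concretely: for $t\in[0,1]$ put $S_t:=tD\hat\Psi+(1-t)\,\Lambda + \sigma x^{-3}(I_\eta,-I_Y)$ where $\Lambda$ is the (invertible, by the same estimate) model operator $(-8\Delta_{{^S\overline{g}}},{^S\overline{\nabla}}^j\mathcal L_{(\cdot)}{^S\overline{g}})$; each $S_t$ satisfies \eqref{Sest} with a uniform constant once $\sigma$ is large, so the set of $t$ for which $S_t$ is onto is open and closed and contains $t=0$. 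Hence $S=S_1$ is an isomorphism, which is Proposition~\ref{Siso}, and then $D\hat\Psi = S - \sigma x^{-3}(I_\eta,-I_Y)$ is Fredholm because multiplication by $\sigma x^{-3}$ maps $H^{4,\alpha+\frac32}\cap H^1_0$ into $H^{2,\alpha-\frac32}$ compactly (it gains weight $x^{-3}$ but the embedding $H^{4,\alpha+\frac32}\hookrightarrow H^{2,\alpha-\frac32+ \text{(weight gain)}}$ together with Rellich on each $\{x>\delta\}$ slice and the weight control near $x=0$ gives compactness); this proves Proposition~\ref{Fred}.

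\textbf{Main obstacle.} The delicate point is exactly the interplay of the weight exponent $\alpha$, the weight-derivative losses (the $\alpha x^{-\alpha-1}$ factors) and the genuinely singular coefficients $x^{-3/2}$, $x^{-3}$ of $D\hat\Psi$ from \eqref{constcoeff}: one must check that after all integrations by parts and all three commutations the ``critical'' terms — those sitting at the top weight $x^{-2\alpha-3}$, one worse than the energy $x^{-2\alpha}$ — are either absorbed by the $\sigma$-coercive term (choosing $\sigma$ large) or carry a favourable sign, in complete analogy with the role played by large $\alpha$ in Section~\ref{fixedpt}. In particular one needs that the coefficient $x^{-3}$ of $\eta$ coming from the ${^S\overline{\nabla}}^j{^SK}_{ij}\eta$ term in $D\hat\Psi$ does not overwhelm $\sigma x^{-3}$ of fixed sign; this is why $\sigma$ must be chosen large \emph{after} $\alpha$ and the geometry are fixed. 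Handling the boundary terms at $\partial\Sigma_0$ (where the weights are smooth and bounded) and verifying that the $H^1_0$ condition is preserved under the continuity argument are routine by comparison.
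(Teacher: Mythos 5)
Your coercivity computation (testing \eqref{Spde} against $(\eta,Y)$ with the weight $x^{-(2\alpha+3)}$, extracting the gradient terms and the $\sigma$-coercive term, and absorbing the $x^{-3/2}$ and $x^{-3}$ Schwarzschild coefficients of \eqref{constcoeff} by Cauchy--Schwarz once $\sigma\gtrsim\alpha^2$) is exactly the heart of the paper's argument. The genuine gap is in how you convert estimates into solvability. An a priori estimate of the form $\|(\eta,Y)\|_{H^{4,\alpha+\frac32}}\lesssim\|S(\eta,Y)\|_{H^{2,\alpha-\frac32}}$ yields injectivity and closed range, never surjectivity; your method of continuity therefore needs a base case, and the claim that the model operator $\Lambda+\sigma x^{-3}(I_\eta,-I_Y)$ is ``invertible, by the same estimate'' is precisely the unproved statement: on these degenerate weighted spaces over the pinched $\Sigma_0$ its surjectivity is no more classical than that of $S$ itself. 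The adjoint variant has the same problem: the operator relevant to $\mathrm{range}\,S=(\ker S^*)^\perp$ is the Hilbert-space adjoint with respect to the weighted Sobolev inner products, not the formal $L^2$-adjoint, and no coercivity estimate for it is derived --- ``symmetric up to lower order'' does not produce one, since integrating by parts against the dual (growing) weights generates a different family of critical terms. As written, the surjectivity step does not go through.

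The paper closes exactly this hole with an existence mechanism rather than a deformation: the weighted bilinear form $\mathcal{B}$ built from \eqref{Spde} is shown to be bounded and coercive on $H^{1,\alpha+\frac32}_0\times H^{1,\alpha+\frac32}_0$ for $\sigma\gtrsim\alpha^2$, and Lax--Milgram gives a unique weak solution (Proposition \ref{1-1}); bijectivity between the strong spaces then follows from a separate weighted elliptic regularity step (Proposition \ref{ellreg}) upgrading $\eta,Y$ to $H^{4,\alpha+\frac32}$. Note also that this regularity step does not commute the equation with all spatial derivatives as you propose: it commutes only with the angular directions $\partial_2,\partial_3$, which annihilate the weights and the radius function, and recovers the radial second derivatives ${^S\overline{\nabla}}_{11}\eta$, ${^S\overline{\nabla}}_{11}Y_i$ algebraically from the equation itself. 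Your blanket commutation with $\partial^{(J)}$, $|J|\le3$, including the $\partial_1$ direction, would at minimum require a separate accounting of how the derivatives of the weights, of the singular coefficients, and of the large shift $\sigma x^{-3}$ are absorbed. If you replace the continuity/adjoint step by this Lax--Milgram-plus-regularity scheme (or genuinely carry out an adjoint estimate in the correct dual weighted spaces), the rest of your outline aligns with the paper.
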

Using the preceding proposition, we obtain the Fredholm property claimed in Proposition \ref{Fred} as follows:\\
Since $S$ is an isomorphism (for large $\sigma>0$), it has a bounded inverse. In fact,
\begin{align}\label{S-1}
S^{-1}:H^{2,\alpha-\frac{3}{2}}\to H^{4,\alpha+\frac{3}{2}}\cap H^1_0\overset{\text{compact}}\hookrightarrow H^{2,\alpha-\frac{3}{2}}
\end{align}
is a compact operator by a weighted version of Rellich's theorem. By definition (\ref{S}) the operators
\begin{align}\label{DPsihatS}
D\hat{\Psi} \circ S^{-1}-I=&-\frac{\sigma}{x^3}(I_\eta,-I_Y)\circ S^{-1}: H^{2,\alpha-\frac{3}{2}}\to  H^{2,\alpha-\frac{3}{2}}\\
\notag S^{-1}\circ D\hat{\Psi}-I=&-S^{-1}\circ\frac{\sigma}{x^3}(I_\eta,-I_Y):H^{4,\alpha+\frac{3}{2}}\cap H^1_0\to H^{4,\alpha+\frac{3}{2}}\cap H^1_0
\end{align}
are also compact, since the RHSs of (\ref{DPsihatS}) consist of compositions of bounded with compact operators. Thus, $D\hat{\Psi}$ is Fredholm.\\
A note is in order here. The key ingredient in the previous proof is the invertibility of the operator $S$ (\ref{S}). As we shall see below we are able to prove Proposition \ref{Siso} thanks to the precise analogy in the leading blow up orders (\ref{constcoeff}) of the Schwarzschild components appearing in the expression of the linearized map $D\hat{\Psi}$ (\ref{DPsihat}).

We divide the proof of Proposition \ref{Siso} in two steps. Recall the formula (\ref{DPsihat}).
\begin{proposition}\label{1-1}
For every $h,z_i\in H^{2,\alpha-\frac{3}{2}}$, $i=1,2,3$, the system
\begin{align}\label{Spde}
-8\Delta_{{^S\overline{g}}}\eta+2(\text{tr}_{{^S\overline{g}}}{^SK})^2\eta-2{^SK}^{ij}(\mathcal{L}_Y{^S\overline{g}})_{ij}+\text{tr}_{{^S\overline{g}}}{^SK}({^S\overline{\nabla}}^iY)_i+\frac{\sigma}{x^3}\eta=h\\
\notag{^S\overline{\nabla}}^j(\mathcal{L}_Y{^S\overline{g}})_{ij}+4{{^SK}_i}^j{^S\overline{\nabla}}_j\eta-2\text{tr}_{{^S\overline{g}}}{^SK}{^S\overline{\nabla}}_i\eta-2{^S\overline{\nabla}}^j{^SK}_{ij}\eta-\frac{\sigma}{x^3}Y_i=z_i
\end{align}
has a unique weak solution $\eta,Y\in H^{1,\alpha+\frac{3}{2}}_0$, provided $\sigma>0$ is sufficiently large.
\end{proposition}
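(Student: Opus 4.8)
\textbf{Proof proposal for Proposition \ref{1-1}.}

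The plan is to obtain the weak solution $(\eta,Y)$ via the Lax--Milgram theorem applied to a suitable bilinear form on the weighted space $H^{1,\alpha+\frac{3}{2}}_0\times H^{1,\alpha+\frac{3}{2}}_0$, where the largeness of $\sigma$ is used precisely to beat the most singular zeroth-order coefficients of $D\hat{\Psi}$. First I would rewrite the system (\ref{Spde}) in weak form: multiply the first equation by a test function $\eta'$, the second by $Y'$, integrate against the Schwarzschild volume form $d\mu_{{^S\overline{g}}}$ with the weight $x^{-(2\alpha+3)}$ (so that the natural energy space is exactly $H^{1,\alpha+\frac{3}{2}}_0$, cf. Remark \ref{Hsa(x)} and Definition \ref{wsp}), and integrate by parts in the Laplacian and divergence terms. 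This produces a bilinear form $B((\eta,Y),(\eta',Y'))$ whose principal part consists of $\int 8\, {^S\overline{\nabla}}\eta\cdot{^S\overline{\nabla}}\eta'\, x^{-(2\alpha+3)}\,d\mu$ plus the analogous coercive term ${^S\overline{\nabla}}^j(\mathcal{L}_Y{^S\overline{g}})_{ij}$ paired with $Y'$ (which, after IBP, is coercive in $\|{^S\overline{\nabla}}Y\|$ up to a zeroth-order remainder — here one invokes a weighted Korn-type inequality for the conformal Killing operator $\mathcal{L}_X{^S\overline{g}}$ on the two-ended manifold with boundary), together with the crucial good term $\sigma\int x^{-3}(\eta^2 + |Y|^2)\, x^{-(2\alpha+3)}\,d\mu$ coming from the $\frac{\sigma}{x^3}$ summands. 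Note the opposite signs in $(I_\eta,-I_Y)$ in (\ref{S}) are arranged exactly so that, after the integration by parts that flips the sign of the $-\frac{\sigma}{x^3}Y_i$ term when paired with $Y'$, both contributions enter with the same favorable sign.

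Next I would verify the hypotheses of Lax--Milgram. Boundedness of $B$ on $H^{1,\alpha+\frac{3}{2}}_0\times H^{1,\alpha+\frac{3}{2}}_0$ follows from the asymptotics (\ref{constcoeff}): every cross term in (\ref{DPsihat}) pairs ${^SK}$ or $\text{tr}\,{^SK}$ (size $x^{-3/2}$) or ${^S\overline{\nabla}}{^SK}$ (size $x^{-3}$) with one derivative of $\eta$ or $Y$ and the other undifferentiated factor, so by Cauchy--Schwarz and the weighted Sobolev/Hardy inequalities implicit in Proposition \ref{adaptineq} and Lemma \ref{Sobprop} each such term is controlled by $\|(\eta,Y)\|_{H^{1,\alpha+\frac{3}{2}}_0}\|(\eta',Y')\|_{H^{1,\alpha+\frac{3}{2}}_0}$; the $H^1_0$ boundary condition is what lets us absorb the endpoint factors. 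For coercivity, the gradient terms give $\gtrsim \|{^S\overline{\nabla}}\eta\|^2 + \|{^S\overline{\nabla}}Y\|^2$ in the weighted $L^2$; the dangerous lower-order terms — namely $2(\text{tr}\,{^SK})^2\eta$ (size $x^{-3}$), the two mixed terms $\mathrm{tr}\,{^SK}\,{^S\overline\nabla}Y$ and ${^SK}\,{^S\overline\nabla}\eta$ (size $x^{-3/2}$ against one derivative), and $2{^S\overline{\nabla}}^j{^SK}_{ij}\eta$ (size $x^{-3}$) — are each bounded, after a Cauchy inequality with a small parameter, by $\delta(\|{^S\overline\nabla}\eta\|^2+\|{^S\overline\nabla}Y\|^2) + C_\delta\|x^{-3/2}(\eta,Y)\|^2$ in the weighted norm; choosing $\delta$ small absorbs the gradient part into the principal term, and then choosing $\sigma > C_\delta$ makes the $\sigma\int x^{-3}(\eta^2+|Y|^2)x^{-(2\alpha+3)}d\mu$ term dominate the leftover $C_\delta\|x^{-3/2}(\eta,Y)\|^2$. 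This yields $B((\eta,Y),(\eta,Y)) \gtrsim \|(\eta,Y)\|^2_{H^{1,\alpha+\frac{3}{2}}_0}$. The right-hand side $(h,z)\in H^{2,\alpha-\frac{3}{2}}\hookrightarrow$ (weighted $L^2$ dual of $H^{1,\alpha+\frac32}_0$) defines a bounded linear functional on the energy space, again by the weight bookkeeping of Definition \ref{wsp}. Lax--Milgram then gives a unique $(\eta,Y)\in H^{1,\alpha+\frac{3}{2}}_0$.

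The main obstacle I expect is the coercivity of the $\mathcal{L}_Y{^S\overline{g}}$ block: unlike the scalar Laplacian, ${^S\overline{\nabla}}^j(\mathcal{L}_Y{^S\overline{g}})_{ij}$ paired with $Y$ is a priori only $\int |\mathcal{L}_Y{^S\overline{g}}|^2$, which controls the full gradient of $Y$ only modulo the conformal Killing fields and modulo a zeroth-order term — and one must check that this weighted Korn/conformal-Killing estimate genuinely holds on $\Sigma_0$ (a cylinder with two spherical boundary components) with the $x^{-(2\alpha+3)}$ weight and the $H^1_0$ condition, and that any conformal Killing fields of $({^S\overline{g}})$ are killed off either by the boundary condition or by the $\sigma x^{-3}|Y|^2$ penalty. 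A secondary delicate point is the systematic tracking of weights in all the cross terms to confirm that none of them is \emph{supercritical} relative to what $\sigma\int x^{-3}(\cdot)$ can absorb — this is exactly the "precise analogy in the leading blow up orders (\ref{constcoeff})" remarked on before the statement of Proposition \ref{1-1}, and it is the reason the factor $x^{-3}$ (and not a larger power) is the right choice in (\ref{S}). Once coercivity and boundedness are in hand, existence and uniqueness are immediate.
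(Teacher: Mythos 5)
Your proposal is correct and follows essentially the same route as the paper: the same Lax--Milgram argument on $H^{1,\alpha+\frac{3}{2}}_0$ with the bilinear form obtained by pairing the system against $(\xi\,x^{-(2\alpha+3)},-X\,x^{-(2\alpha+3)})$, boundedness from the asymptotics (\ref{constcoeff}), and coercivity by integration by parts, Cauchy's inequality with a small parameter, and absorption of all subcritical terms by taking $\sigma\gtrsim\alpha^2$. The one obstacle you flag --- a weighted Korn-type inequality for the conformal Killing operator --- is not needed in the paper's proof: integrating the term ${^S\overline{\nabla}}^j(\mathcal{L}_Y{^S\overline{g}})_{ij}$ by parts (and commuting derivatives) directly yields $\int x^{-(2\alpha+3)}|{^S\overline{\nabla}}Y|^2$ plus a Ricci curvature term with $|{^S\overline{\text{R}}}_{ki}|\lesssim x^{-2}$ and weight-derivative terms of size $\alpha$, all of which are absorbed by the $\sigma\,x^{-3}$ penalty, exactly as you anticipated in your closing remark.
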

\begin{proof}
Define the bilinear form
\begin{align}\label{B[eta,Y]}
\mathcal{B}:&\;H^{1,\alpha+\frac{3}{2}}_0\times H^{1,\alpha+\frac{3}{2}}_0\to\mathbb{R}\\
\notag&\mathcal{B}[(\eta,Y),(\xi,X)]=\big(\text{LHS of (\ref{Spde})},(\frac{\xi}{x^{2\alpha+3}},-\frac{X}{x^{2\alpha+3}})\big)_{L^2}
\end{align}
We want to apply Lax-Milgram. For that we must establish the two estimates:
\begin{align}\label{Lmest1}
\big|\mathcal{B}[(\eta,Y),(\xi,X)]\big|\lesssim \big(\|\eta\|_{H^{1,\alpha+\frac{3}{2}}}+\|Y\|^2_{H^{1,\alpha+\frac{3}{2}}}\big)\big(\|\xi\|_{H^{1,\alpha+\frac{3}{2}}}+\|X\|^2_{H^{1,\alpha+\frac{3}{2}}}\big)
\end{align}
and
\begin{align}\label{LMest2}
\mathcal{B}[(\eta,Y),(\eta,Y)]\gtrsim \|\eta\|^2_{H^{1,\alpha+\frac{3}{2}}}+\sum_i\|Y_i\|^2_{H^{1,\alpha+\frac{3}{2}}}
\end{align}
The bound (\ref{Lmest1}) follows easily by Cauchy-Schwartz, taking into account the asymptotics (\ref{constcoeff}) of the coefficients of (\ref{Spde}). Hence, it suffices to show (\ref{LMest2}). We proceed by integrating by parts, 
employing the asymptotics (\ref{constcoeff}):
\begin{align*}
\mathcal{B}[(\eta,Y),(\eta,Y)]=\big(\text{LHS of (\ref{Spde})},(\frac{\eta}{x^{2\alpha+3}},-\frac{Y}{x^{2\alpha+3}})\big)_{L^2}\ge\\
8\int_{\Sigma_0}\frac{|{^S\overline{\nabla}} \eta|^2}{x^{2\alpha+3}}d\mu_{^S\overline{g}}-C\alpha\int_{\Sigma_0}\frac{|\eta{^S\overline{\nabla}}_j\eta|}{x^{2\alpha+3+\frac{1}{2}}}d\mu_{^S\overline{g}}+\sigma\int_{\Sigma_0}\frac{\eta^2}{x^{2\alpha+6}}d\mu_{^S\overline{g}}\\
-\;C\int_{\Sigma_0}\frac{|\eta{^S\overline{\nabla}}_jY_k|}{x^{2\alpha+3+\frac{3}{2}}}d\mu_{^S\overline{g}}
+\int_{\Sigma_0}\frac{|{^S\overline{\nabla}} Y|^2}{x^{2\alpha+3}}d\mu_{^S\overline{g}}-\int_{\Sigma_0}|\overline{R}_{ki}|\frac{|Y^kY^i|}{x^{2\alpha+3}}d\mu_{^S\overline{g}}\\
-\;C\alpha\int_{\Sigma_0}\frac{|Y_k{^S\overline{\nabla}}_aY_b|}{x^{2\alpha+3+\frac{1}{2}}}d\mu_{^S\overline{g}}
+\sigma\int_{\Sigma_0}\frac{Y^2}{x^{2\alpha+6}}d\mu_{^S\overline{g}}\\
-\;C\int_{\Sigma_0}\frac{|Y_i{^S\overline{\nabla}}_j\eta|}{x^{2\alpha+3+\frac{3}{2}}}d\mu_{^S\overline{g}}- C\int_{\Sigma_0}\frac{|\eta Y_i|}{x^{2\alpha+5}}d\mu_{^S\overline{g}}\ge\\
\tag{utilizing the estimate $|^S\overline{\text{R}}_{ki}|\lesssim x^{-2}$ and applying C-S}
\int_{\Sigma_0}\frac{|{^S\overline{\nabla}} \eta|^2}{x^{2\alpha+3}}d\mu_{^S\overline{g}}
+\frac{1}{2}\int_{\Sigma_0}\frac{|{^S\overline{\nabla}}Y|^2}{x^{2\alpha+3}}d\mu_{^S\overline{g}}
+(\sigma-C\alpha^2)\bigg[\int_{\Sigma_0}\frac{\eta^2}{x^{2\alpha+6}}d\mu_{^S\overline{g}}\\
+\int_{\Sigma_0}\frac{Y^2}{x^{2\alpha+6}}d\mu_{^S\overline{g}}\bigg]
\end{align*}
Taking $\sigma$ large enough, $\sigma\gtrsim\alpha^2$, we arrive at (\ref{LMest2}). Thus, the Lax-Milgram theorem can be applied for the system (\ref{Spde}), from which the conclusion of the present proposition follows.
\end{proof}
\begin{remark}
Away from $x=0$ the coefficients of (\ref{Spde}) are smooth and bounded (locally). By standard interior elliptic regularity theory we derive that the weak solution in Proposition \ref{1-1} $\eta,Y$ lies in 
$H^4_{loc}(\Sigma_0\smallsetminus \{x=0\})$. 
\end{remark}
Next, we show that the solution of (\ref{Spde}) is in fact $\eta,Y\in H^{4,\alpha+\frac{3}{2}}$. This implies that $S$ (\ref{S})  is one to one and onto, proving Proposition \ref{Siso}.
\begin{proposition}\label{ellreg}
The weak solution $\eta,Y\in H^{1,\alpha+\frac{3}{2}}_0$ of (\ref{Spde}) furnished by Proposition \ref{1-1} satisfies $\eta,Y\in H^{4,\alpha+\frac{3}{2}}$.
\end{proposition}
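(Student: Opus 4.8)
The plan is to upgrade the weak solution $(\eta,Y)\in H^{1,\alpha+\frac{3}{2}}_0$ of the system (\ref{Spde}) to $H^{4,\alpha+\frac{3}{2}}$ by a weighted elliptic bootstrap, treating the two equations in (\ref{Spde}) as an elliptic system for $(\eta,Y)$ whose principal parts are the scalar Laplacian $-8\Delta_{{^S\overline{g}}}$ acting on $\eta$ and the (formally self-adjoint, elliptic) operator $Y\mapsto {^S\overline{\nabla}}^j(\mathcal{L}_Y{^S\overline{g}})_{ij}$ acting on $Y$ (this is a Lam\'e-type operator; its ellipticity is standard). Away from $\{x=0\}$ the coefficients are smooth, so by interior regularity $\eta,Y\in H^4_{loc}(\Sigma_0\smallsetminus\{x=0\})$; the only issue is the uniform weighted control as $x\to 0^+$ and up to the boundary $\partial\Sigma_0$. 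Near $\partial\Sigma_0$ (away from $x=0$) one uses the usual boundary elliptic estimates for the Dirichlet problem, which are unweighted there since $x$ is bounded below. So the entire content is the a priori weighted estimate near the collapsed sphere $\{x=0\}$.

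First I would establish the basic weighted second-order estimate: I differentiate the system, multiply by the appropriate weight, and integrate by parts in the manner of Proposition \ref{1-1}, to get
\begin{align*}
\|\eta\|^2_{H^{2,\alpha+\frac{3}{2}}}+\sum_i\|Y_i\|^2_{H^{2,\alpha+\frac{3}{2}}}\lesssim \|h\|^2_{H^{0,\alpha-\frac{3}{2}}}+\sum_i\|z_i\|^2_{H^{0,\alpha-\frac{3}{2}}}+\big(\|\eta\|^2_{H^{1,\alpha+\frac{3}{2}}}+\textstyle\sum_i\|Y_i\|^2_{H^{1,\alpha+\frac{3}{2}}}\big),
\end{align*}
where the key point, exactly as in the proof of Proposition \ref{1-1}, is that every lower-order coefficient in (\ref{DPsihat}) is controlled by the asymptotics (\ref{constcoeff}), i.e. $|{^SK}_{ij}|,|\text{tr}_{{^S\overline{g}}}{^SK}|\lesssim x^{-\frac{3}{2}}$ and $|{^S\overline{\nabla}}^{(J)}{^SK}_{ij}|\lesssim x^{-\frac{3}{2}(|J|+1)}$, so that after differentiating $k$ times the newly generated terms carry precisely the powers of $x$ that the weight $x^{2\alpha+3-3(k-1)}$ in Definition \ref{wsp}/(\ref{Hsaequiv}) can absorb — this is the ``precise analogy in the leading blow up orders'' highlighted in the note following (\ref{DPsihatS}). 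Here one uses the equivalent form (\ref{Hsaequiv}) of the weighted norm with $^S\overline{\nabla}^{(J)}$ in place of $\partial^{(J)}$, as justified in Remark \ref{Hsa(x)}, so that commutators with $^S\overline{\nabla}$ produce only admissible terms. Then I iterate: differentiating once more and repeating the same weighted integration by parts yields the $H^{3,\alpha+\frac{3}{2}}$ bound in terms of the $H^{2}$ weighted norm of $(\eta,Y)$ plus the $H^{1,\alpha-\frac{3}{2}}$ norms of $(h,z)$, and once more for $H^{4,\alpha+\frac{3}{2}}$ in terms of $H^{2,\alpha-\frac{3}{2}}$ of $(h,z)$, which is exactly the regularity of the right-hand side. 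At each stage the top-order term is handled by testing the differentiated equation against the differentiated unknown with the corresponding weight, the positivity of $-\Delta$ and of the Lam\'e operator giving the good sign, and the $\sigma/x^3$ terms only helping.

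To make the bootstrap rigorous (rather than merely a priori) I would run it on difference quotients or on a Galerkin/elliptic-regularization scheme, or more simply invoke interior $H^4_{loc}$ regularity (already noted in the Remark after Proposition \ref{1-1}) together with the uniform weighted bounds just derived to conclude $\eta,Y\in H^{4,\alpha+\frac{3}{2}}$; a weighted partition of unity separating a neighbourhood of $\{x=0\}$, the boundary region, and the interior lets one patch the three regimes. I expect the main obstacle to be the top-order estimate near $\{x=0\}$ at the highest differentiation level: one must check that after taking two (and then three) covariant derivatives of both equations, all the commutator terms $[{^S\overline{\nabla}}^{(J)},-8\Delta_{{^S\overline{g}}}]\eta$ and $[{^S\overline{\nabla}}^{(J)},\,{^S\overline{\nabla}}^j(\mathcal{L}_\cdot{^S\overline{g}})_{ij}]Y$, as well as the derivatives of the $x^{-3/2}$-type coefficients hitting lower-order derivatives of $(\eta,Y)$, land with weights no worse than $x^{-(2\alpha+3)+3(k-1)+\text{(curvature loss)}}$ — i.e. that there is genuinely no loss, which is the borderline phenomenon that forces $\alpha$ to be chosen large elsewhere but here only requires the same bookkeeping as in (\ref{overasym1}), (\ref{constcoeff}) and is consistent because the weights in Definition \ref{wsp} were designed precisely to match these orders. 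The cross terms coupling $\eta$ and $Y$ (the $4{{^SK}_i}^j{^S\overline{\nabla}}_j\eta$, $-2\text{tr}{^SK}\,{^S\overline{\nabla}}_i\eta$ and $-2{^S\overline{\nabla}}^j{^SK}_{ij}\eta$ terms, and the $-2{^SK}^{ij}(\mathcal{L}_Y{^S\overline{g}})_{ij}$, $\text{tr}{^SK}({^S\overline{\nabla}}^iY)_i$ terms) are first order relative to the respective principal parts and are absorbed by Cauchy–Schwarz into the good $H^{k+1,\ldots}$ terms on the left with constants controlled by $\sigma$, exactly as in (\ref{LMest2}).
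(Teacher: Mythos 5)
Your overall strategy --- a weighted elliptic bootstrap near $\{x=0\}$, with the weights of Definition \ref{wsp} (in the equivalent form (\ref{Hsaequiv})) designed to absorb the $x^{-\frac{3}{2}}$-per-derivative blow up (\ref{constcoeff}), interior and boundary regularity away from $x=0$, and $\sigma\gtrsim\alpha^2$ absorbing the $\alpha$-losses coming from differentiating the weight --- is the same as the paper's. But the paper's proof differs at exactly the point your outline leaves unresolved. It does not commute the system with derivatives in all directions and test against them. It exploits the anisotropy of the degeneration: the weight and the singular coefficients depend only on $x$, so the rotational derivatives annihilate the weight (Lemma \ref{Sobprop}), and the integration-by-parts argument of Proposition \ref{1-1} is repeated only for the angular second derivatives, testing against ${^S\overline{\nabla}}_{22}\eta/x^{2\alpha}$, ${^S\overline{\nabla}}_{22}Y_i/x^{2\alpha}$ (and likewise for $\partial_3$), with the curvature commutators bounded by $|{^S\overline{\text{Riem}}}|\lesssim x^{-2}$ and the $C\alpha$-terms absorbed by $\sigma\gtrsim\alpha^2$. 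The derivatives involving the singular direction, ${^S\overline{\nabla}}_{11}\eta$ and ${^S\overline{\nabla}}_{11}Y_i$, are never estimated by testing at all: they are recovered algebraically from the equation (\ref{Spde}), i.e.\ solved for in terms of the already-controlled angular second derivatives, lower-order terms and $(h,z_i)$. The $H^{4,\alpha+\frac{3}{2}}$ bound is then obtained by first differentiating the system in $\partial_2,\partial_3$, re-running the same two-step scheme with the shifted weights, and only afterwards differentiating (\ref{Spde}) in $\partial_1$ to read off the remaining derivatives.

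Your symmetric scheme, which also differentiates and tests in the $\partial_1$ direction, is precisely where the losses concentrate: $\partial_1$ hits the weight ($|\partial_1(x^{-m})|\lesssim m\,x^{-m-\frac{1}{2}}$), the coefficients ${^SK}\sim x^{-\frac{3}{2}}$, and the term $\sigma x^{-3}$, and you explicitly defer the verification that ``there is genuinely no loss''. Since that verification is the entire content of the proposition, the proposal as written has a gap at the decisive step; the paper's tangential-derivatives-by-energy / $\partial_1$-derivatives-from-the-equation split is the device that makes the bookkeeping close without ever having to perform it. If you wish to keep your all-directions scheme, you must carry out the $\partial_1$-commutator and weight-differentiation estimates explicitly (they appear absorbable, using the $\sigma$-terms and the fact that weaker weights are dominated by stronger ones on $\{|x|\leq\epsilon\}$, but this has to be shown), and you should also justify the integrations by parts across the singular locus for the $\partial_1$-weighted test functions --- an issue that simply does not arise in the paper's argument.
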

\begin{proof}
$\eta,Y\in H^{2,\alpha+\frac{3}{2}}$: Multiplying (\ref{Spde}) with $(\frac{{^S\overline{\nabla}}_{22}\eta}{x^{2\alpha}},\frac{{^S\overline{\nabla}}_{22}Y_i}{x^{2\alpha}})$, 
commuting covariant derivatives and integrating by parts twice we arrive at
\begin{align*}
\int_{\Sigma_0}\frac{|{^S\overline{\nabla}} {^S\overline{\nabla}}_2\eta|^2}{x^{2\alpha}}d\mu_{^S\overline{g}}-C\alpha\int_{\Sigma_0}\frac{|\nabla_{a2}\eta{^S\overline{\nabla}}_2\eta|}{x^{2\alpha+\frac{1}{2}}}d\mu_{^S\overline{g}}+\sigma\int_{\Sigma_0}\frac{|{^S\overline{\nabla}}_2\eta|^2}{x^{2\alpha+3}}d\mu_{^S\overline{g}}\\
+\int_{\Sigma_0}{ ^S\overline{\text{Ric}}}\frac{{^S\overline{\nabla}}_a\eta{^S\overline{\nabla}}_2\eta}{x^{2\alpha}}d\mu_{^S\overline{g}}-C\int_{\Sigma_0}\frac{|{^S\overline{\nabla}}_{22}\eta{^S\overline{\nabla}}_jY_k|}{x^{2\alpha+\frac{3}{2}}}\leq \int_{\Sigma_0}\frac{|h{^S\overline{\nabla}}_{22}\eta|}{x^{2\alpha}}d\mu_{^S\overline{g}}\,, \\
\int_{\Sigma_0}\frac{|{^S\overline{\nabla}}_j{^S\overline{\nabla}}_2Y_i|^2}{x^{2\alpha}}d\mu_{^S\overline{g}}
-C\alpha\int_{\Sigma_0}\frac{|{^S\overline{\nabla}}_{a2}Y_b{^S\overline{\nabla}}_2Y_i|}{x^{2\alpha+\frac{1}{2}}}d\mu_{^S\overline{g}}
+\sigma\int_{\Sigma_0}\frac{|{^S\overline{\nabla}}_2Y_i|^2}{x^{2\alpha+3}}d\mu_{^S\overline{g}}\\
+\int_{\Sigma_0}{^S\overline{\text{Riem}}}\frac{{^S\overline{\nabla}}_aY_j{^S\overline{\nabla}}_2Y_i}{x^{2\alpha}}d\mu_{^S\overline{g}}+\int_{\Sigma_0}
{^S\overline{\text{Riem}}}\frac{{^S\overline{\nabla}}_{2b}Y_jY_k}{x^{2\alpha}}d\mu_{^S\overline{g}}\\
-\; C\int_{\Sigma_0}\frac{|{^S\overline{\nabla}}_a\eta{^S\overline{\nabla}}_{22}Y_i|}{x^{2\alpha+\frac{3}{2}}}d\mu_{^S\overline{g}}
-C\int_{\Sigma_0}\frac{|\eta{^S\overline{\nabla}}_{22}Y_i|}{x^{2\alpha+2}}d\mu_{^S\overline{g}}
\leq\int_{\Sigma_0}\frac{|z_i{^S\overline{\nabla}}_{22}Y_i|}{x^{2\alpha}}d\mu_{^S\overline{g}}
\end{align*}
Using the bound ${^S\overline{\text{Riem}}}\lesssim\frac{1}{x^2}$ and applying Cauchy's inequality we deduce for $\sigma$ large ($\sigma\gtrsim\alpha^2$):
\begin{align*}
\int_{\Sigma_0}\frac{|{^S\overline{\nabla}} {^S\overline{\nabla}}_2\eta|^2}{x^{2\alpha}}d\mu_{^S\overline{g}}
+\sum_i\int_{\Sigma_0}\frac{|{^S\overline{\nabla}}{^S\overline{\nabla}}_2Y_i|^2}{x^{2\alpha}}d\mu_{^S\overline{g}}\\
\lesssim\|\eta\|^2_{H^{1,\alpha+\frac{3}{2}}}+\|Y\|^2_{H^{1,\alpha+\frac{3}{2}}}
+\int_{\Sigma_0}\frac{h^2}{x^{2\alpha}}d\mu_{^S\overline{g}}
+\sum_i\int_{\Sigma_0}\frac{z_i^2}{x^{2\alpha}}d\mu_{^S\overline{g}}<+\infty,
\end{align*}
since $h,z_i\in H^{2,\alpha-\frac{3}{2}}$. Likewise for the $\partial_3$ (i.e., the other rotational direction). 
The $^S\overline{\nabla}_{11}$ derivative of the functions $\eta,Y_i$ is estimated separately: From (\ref{Spde}) we have 
\begin{align*}
\sum_i\int_{\Sigma_0}\frac{|{^S\overline{\nabla}}_{11}Y_i|^2}{x^{2\alpha}}d\mu_{^S\overline{g}}+\int_{\Sigma_0}\frac{|{^S\overline{\nabla}}_{11}\eta|^2}{x^{2\alpha}}d\mu_{^S\overline{g}}\\
\lesssim \sum_{a=2,3}\bigg[\int_{\Sigma_0}\frac{|{^S\overline{\nabla}}{^S\overline{\nabla}}_aY_b|^2}{x^{2\alpha}}d\mu_{^S\overline{g}}+\int_{\Sigma_0}\frac{|{^S\overline{\nabla}}_{aa}\eta|^2}{x^{2\alpha}}d\mu_{^S\overline{g}}\bigg]
+\sigma\|Y\|^2_{H^{1,\alpha+\frac{3}{2}}}\\
+\;\sigma\|\eta\|^2_{H^{1,\alpha+\frac{3}{2}}}
+\|\frac{h}{x^\alpha}\|^2_{L^2}+\sum_i\|\frac{z_i}{x^\alpha}\|^2_{L^2}<+\infty.
\end{align*}
Thus, $\eta,Y_i\in H^{2,\alpha+\frac{3}{2}}$. The full $H^{4,\alpha+\frac{3}{2}}$ estimate is obtained by differentiating the system (\ref{Spde}) in the rotational directions $\partial_2,\partial_3$, applying the same argument with the analogous weights and then/ differentiating (\ref{Spde}) in the $\partial_1$ direction.
\end{proof}

\appendix

\section{Changing frames freedom; Propagating identities; Retrieving the EVE from the reduced equations}\label{app1}

Given a spacetime $(\mathcal{M}^{1+3},g)$ and an orthonormal frame $\{e_i\}^3_0$, one may change to a Lorentz gauge frame $\{\tilde{e}_i\}^3_0$ by solving the following semilinear system of equations, which is derived by taking the divergence of (\ref{X(O)}):
\begin{align*}
\square_g(O^l_a)=&\;(\text{div}{\tilde{A}_X)_a}^dO^l_d+\tilde{{A}}{ \partial}{ O}
+{ A}{ \partial}{ O}+O^k_a\text{div}{(A_X)_k}^l\\
\tag{by (\ref{gauge}) for $\tilde{{ A}}$}
=&\;\tilde{{ A}}^2{ O}+\tilde{{A}}{ \partial}{ O}
+{ A}{ \partial}{ O}+{ O}\text{div}{ A}\\
\tag{from (\ref{trans})}
=&\;{ O}^5{ A}^2+{ O}^3({ \partial}{ O})^2+{ A}{ O}^4{ \partial}{ O}
+{ A}{ O}^2{ \partial}{ O}\\
&+{ O}({ \partial}{ O})^2
+{ A}{ \partial}{ O}+{ O}\text{div}{ A_X},
\end{align*}
where the terms without indices in the RHS stand for an algebraic expression of  a finite number terms of the depicted type.
\begin{lemma}\label{gaugemap}
If the above system (which we write schematically as)
\begin{align}\label{boxO2}
\square_g(O^b_i)=&\;{ O}^5{ A}^2+{ O}^3({ \partial}{ O})^2+{ A}{ O}^4{ \partial}{ O}
+{ A}{ O}^2{ \partial}{ O}\\
\notag&+{ O}({ \partial}{ O})^2
+{ A}{ \partial}{ O}+{ O}\text{div}{ A}.
\end{align}
is well-posed in a certain solution space, then there exists a unique orthonormal frame 
\begin{align}\label{O2}
\tilde{e}_i=O^b_ie_b
\end{align}
with $O^b_i$ lying in that particular space, which is identical to $\{e_i\}^3_0$ on the initial hypersurface $\Sigma_0$, verifies the Lorentz gauge condition (\ref{gauge})
and such that the connection coefficients
$(\tilde{A}_0)_{ij}:=g(\nabla_{\tilde{e}_0}\tilde{e}_i,\tilde{e}_j)$, $i<j$,
are equal to a priori assigned functions on $\Sigma_0$; within the corresponding space of one order of regularity less than $O^b_i$.
\end{lemma}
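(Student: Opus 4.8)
The plan is to combine a standard hyperbolic well-posedness hypothesis with the algebraic machinery of Cartan formalism already developed in \S\ref{Cart}, and then check that the resulting frame indeed satisfies the Lorentz gauge and the prescribed boundary normalization. First I would fix the ambient spacetime $(\mathcal{M}^{1+3},g)$ and the given orthonormal frame $\{e_i\}^3_0$, together with the connection coefficients $(A_\nu)_{ij}$ and the operator $\mathrm{div}A_X$ computed with respect to it; all of these are known data. I would then regard (\ref{boxO2}) as a quasilinear wave equation for the matrix-valued unknown $O^b_i$ with source terms that are smooth functions of $O$, $\partial O$, $A$ and $\mathrm{div}A$; by the hypothesis of the lemma this system is well-posed in the stated solution space, so a unique solution $O^b_i$ exists once Cauchy data on $\Sigma_0$ are prescribed. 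The Cauchy data are chosen as follows: $O^b_i\big|_{\Sigma_0}=\delta^b_i$ (so that $\tilde e_i=e_i$ initially), and the transversal derivative $\partial_0(O^b_i)\big|_{\Sigma_0}$ is fixed by using the relation (\ref{X(O)}) (equivalently (\ref{trans})) evaluated at $\tau=0$, with the a priori assigned functions $(\tilde A_0)_{ij}$, $i<j$, plugged in for the corresponding components of $\tilde A_0$ (the diagonal and lower-triangular entries being determined by antisymmetry, $(\tilde A_0)_{ii}=0$ and $(\tilde A_0)_{ji}=-(\tilde A_0)_{ij}$). This determines the initial data completely and hence, by the assumed well-posedness, the frame $\tilde e_i=O^b_i e_b$ is uniquely determined in the solution space, with $O^b_i$ having one more degree of regularity than the $(\tilde A_0)_{ij}$'s.

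Next I would verify that $\{\tilde e_i\}^3_0$ is actually orthonormal. Since $O^b_i\big|_{\Sigma_0}=\delta^b_i$, the matrix $g(\tilde e_i,\tilde e_j)=m_{bc}O^b_iO^c_j$ equals $m_{ij}$ on $\Sigma_0$. The standard argument is to derive a closed linear homogeneous wave (or transport) equation satisfied by the deviation $g(\tilde e_i,\tilde e_j)-m_{ij}$ as a consequence of (\ref{boxO2}): differentiating $g(\tilde e_i,\tilde e_j)$ twice and using the equation, together with the fact that $\{e_i\}$ is a genuine orthonormal frame, one finds that the deviation solves a homogeneous system with vanishing Cauchy data, hence vanishes identically on the domain of dependence. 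This is the step that makes essential use of the precise structure of the right-hand side of (\ref{boxO2}) — it is built precisely so that orthonormality propagates — and in the literature (cf. \cite{KRS}, \cite{Mon2}) this is a routine but slightly delicate computation. The same kind of propagation-of-constraints argument then shows the Lorentz gauge condition $\mathrm{div}\tilde A_X=0$ holds: by construction of the initial data (through (\ref{trans})) one arranges that $\mathrm{div}\tilde A_X-0$ vanishes together with its $\partial_0$-derivative on $\Sigma_0$, and (\ref{boxO2}), which was obtained exactly by taking the divergence of (\ref{X(O)}), forces this quantity to satisfy a homogeneous wave equation, so it vanishes everywhere. Finally, $(\tilde A_0)_{ij}\big|_{\Sigma_0}$ equals the assigned functions by the very choice of $\partial_0(O^b_i)\big|_{\Sigma_0}$ via (\ref{X(O)}), and uniqueness of $\tilde e_i$ follows from uniqueness of $O^b_i$ in the solution space.

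The main obstacle is the propagation-of-constraints bookkeeping: one must check that the algebraic form of the right-hand side of (\ref{boxO2}) — which was derived by taking $\square_g$ of (\ref{X(O)}) and substituting the Lorentz gauge for $\tilde A$ and the transformation law (\ref{trans}) — is exactly compatible with both $g(\tilde e_i,\tilde e_j)=m_{ij}$ and $\mathrm{div}\tilde A_X=0$ propagating from trivial Cauchy data, i.e.\ that no inhomogeneous terms survive. This requires care because $O$ enters nonlinearly and the orthonormality is itself used in simplifying the source. I expect the verification to follow the template in \cite{KRS} and to be carried out component-wise; since the lemma's conclusion is conditional on well-posedness of (\ref{boxO2}) in "a certain solution space", I would not dwell on the existence theory itself but only on these algebraic identities and the domain-of-dependence argument.
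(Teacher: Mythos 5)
Your proof coincides with the paper's: the paper's entire argument is exactly your first paragraph --- set $O^b_i|_{\Sigma_0}={I_i}^b$, choose antisymmetric transversal data $h^b_i:=\partial_0(O^b_i)|_{\Sigma_0}$ via the transformation law (\ref{trans}) with $X=\tilde e_0$, so that $(\tilde A_0)_{ij}(\tau=0)=(A_0)_{ij}(\tau=0)+h^b_i m_{bj}$ reproduces the prescribed functions, and then invoke the assumed well-posedness of (\ref{boxO2}) to get existence and uniqueness of $O^b_i$. The constraint-propagation checks you sketch (orthonormality of $\tilde e_i$ and the Lorentz gauge --- note the gauge is $\mathrm{div}A_X=A^2$, not $\mathrm{div}A_X=0$) are not carried out in the paper's proof at all, so on those points your proposal is, if anything, more careful than the paper, which stops at the determination of the Cauchy data.
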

\begin{proof}
It suffices to show that the initial data for (\ref{boxO2}) is uniquely determined by the assertions. We set
\begin{align}\label{initO}
O^b_i(\tau=0):={I_i}^b&&\text{(i.e., $\tilde{e}_i=e_i$ on $\Sigma_0$)}.
\end{align}
Let 
\begin{align}\label{initO2}
\tilde{e}_0(O^b_i)(\tau=0)=e_0(O^b_i)(\tau=0)=:h^b_i,&&h^b_im_{bj}=-h^b_jm_{bi}.
\end{align}
Then the transition formula (\ref{trans}) for $X=\tilde{e}_0$ reads
\begin{align}\label{transA_0}
(\tilde{A}_0)_{ij}(\tau=0)=(A_0)_{ij}(\tau=0)+h^b_im_{bj}.
\end{align}
Thus, the components
$(\tilde{A}_0)_{ij}$ can be freely prescribed initially by choosing $h^b_i$ in (\ref{initO2}) accordingly.
\end{proof}
\subsection{Proof of proposition \ref{IDprop}}\label{app1.1}

We will leave the reader to fill in the details for the fact that the solution $(A_\nu)_{ij},O^a_i$ of (\ref{boxA3}),(\ref{partial_0frame}) corresponds to a spacetime $(\mathcal{M}^{1+3},g)$. This is a consequence of the necessary initial assumption (\ref{initAO}). One such immediate consequence follows from (\ref{partial_0frame}) for $i=0$:
\begin{align}
\partial_0(O^a_0)=-O^b_0\Gamma^a_{[0b]},&&O^a_0(\tau=0)-{I_0}^a=0,
\end{align}
which implies $O^a_0={I_0}^a$ and hence $e_0=\partial_0$ everywhere, since $\Gamma^a_{[00]}=0$. The set of functions $O^a_i$ defines the orthonormal frame $\{e_i\}^3_0$ in $\mathcal{M}^{1+3}$ through (\ref{epartial}) and hence completely determines the metric $g$. What remains to be verified is that the connection coefficients of $\{e_i\}^3_0$ are indeed the $(A_\nu)_{ij}$'s of the given solution. In other words, we have to show that the connection $D$ induced by the solution set $(A_\nu)_{ij}$,
\begin{align}
D_{e_\nu}e_i:={(A_\nu)_i}^ke_k,
\end{align}
is the Levi-Civita connection $\nabla$ of the metric $g$. Formally, one cannot take this for granted. It has to be retrieved from the equations (\ref{boxA3}),(\ref{partial_0frame}) and the initial assumption (\ref{initAO}). For example, the compatibility of $D$ with respect to $g$ is encoded in the skew-symmetry of the $(A_\nu)_{ij}$'s
\begin{align}\label{Dg}
D(g)=0,&&\text{iff}&&(A_\nu)_{ij}+(A_\nu)_{ji}=0,
\end{align}
which also has to be verified, since it is a priori valid only initially (\ref{initAO}). The way to do this is by deriving the following new system of equations from (\ref{boxA3}) for the symmetric sums:
\begin{align}\label{sumA}
\notag\boxdot\big((A_\nu)_{ij}+(A_\nu)_{ji}\big)=&\;{(A^{[\mu})_{\nu]}}^ke_k\big((A_\mu)_{ij}+(A_\mu)_{ji}\big)
+e^\mu\big({(A_{[\mu})_{\nu]}}^k\big[(A_k)_{ij}+(A_k)_{ji}\big]\big)\\
&+e_\nu\big(A\big[(A)_{ij}+(A)_{ji}\big]\big)
+e_\nu\big({(A^\mu)_\mu}^k\big[(A_k)_{ij}+(A_k)_{ij}\big]\big),
\end{align}
where we have assumed that the sum ($A^2)_{ij}$+($A^2)_{ji}$ corresponding
to the term $A^2$ in the gauge condition (\ref{gauge}) can be expressed as
$A\big[(A)_{ij}+(A)_{ji}\big]$. Since (\ref{sumA}) has zero initial data (\ref{initAO}), the symmetric sums are zero everywhere and hence the skew-symmetry (\ref{Dg}) propagates.\footnote{This follows by a basic a priori energy estimate for linear systems like (\ref{sumA}), which in the singular Schwarzschild background is derived in \S\ref{Enest} for the more involved quasilinear system (\ref{system1}).} 
\begin{proof}[Proof of proposition \ref{IDprop}; EVE and Lorentz gauge]
Recall (\ref{redeq}) and the reduced equations $H_{\nu ij}=0$.
By assumption $(A_\nu)_{ij}$ is a solution of (\ref{boxA3}), i.e., the RHS of (\ref{redeq}) vanishes. Taking the divergence of (\ref{redeq}) with respect to the index $\nu$, the first part 
of the LHS of (\ref{redeq}), corresponding to the curl of the Ricci tensor, vanishes and we are left with
\begin{align}\label{boxgauge}
\square_g\big(\text{div}A_X-&\;A^2\big)_{ij}\\
\notag=&\;{(A^\nu)_i}^ce_\nu\big(\text{div}A_X-A^2\big)_{cj}+{(A^\nu)_j}^ce_\nu\big(\text{div}A_X-A^2\big)_{ic}.
\end{align}
The Lorentz gauge condition is valid initially (\ref{initA}). If the $e_0$ derivative of $\big(\text{div}A_X-A^2\big)_{ij}$ is zero as well on $\Sigma_0$, then the Lorentz gauge is valid in all of $\mathcal{M}^{1+3}=\Sigma\times[0,T]$.\footnote{Note however that the term $e_0\big(\text{div}A_X- A^2\big)_{ij}$ is of second order in $A$ and hence not at the level of initial data for (\ref{boxA3}), which we are allowed to prescribe. If zero initially, this should be a consequence of the geometric nature of the equations.} This is in fact implied by (\ref{redeq}), putting $\nu=0$ we have
\begin{align}\label{e_0gauge}
e_0\big(\text{div}A_X-A^2\big)_{ij}= \nabla_j\text{R}_{0i}- \nabla_i\text{R}_{0j}=0&&\text{on $\Sigma_0$}
\end{align}
by virtue of the vanishing of $\text{R}_{ab}(\tau=0)$ (\ref{initA}) and the (twice contracted) second Bianchi identity, $\nabla^a\text{R}_{ab}=\frac{1}{2}\text{R}$,  to replace if necessary a transversal derivative with tangential ones to $\Sigma_0$.

On the other hand, taking the $\nabla^i$ divergence of (\ref{redeq}) and commuting derivatives we obtain
\begin{align}\label{boxRic}
\notag\square_g\text{R}_{\nu j}=&\;\nabla^i\nabla_j\text{R}_{\nu i}=\frac{1}{2}\nabla_j\nabla_\nu \text{R}
+{{{\text{R}^i}_j}^c}_\nu \text{R}_{ci}+{{{\text{R}^i}_j}^c}_i\text{R}_{\nu c}\\
=&\;{{{\text{R}^i}_j}^c}_\nu \text{R}_{ci}+{\text{R}_j}^c\text{R}_{\nu c},
\end{align}
where we employed again the twice contracted second Bianchi identity and the fact that the scalar curvature $\text{R}$ vanishes everywhere: [contracting $\{\nu j\}$ in (\ref{redeq})]
\begin{align}\label{R=0}
0=\nabla_i\text{R}-\frac{1}{2}\nabla_i\text{R}=\frac{1}{2}\nabla_i\text{R},&&\text{R}\big|_{\Sigma_0}=0.
\end{align}
Now that we know the Lorentz gauge is valid, the identities (\ref{initA}) and (\ref{redeq}) $i=0$ imply
\begin{align}\label{e_0Ric}
\text{R}_{\nu j}=0,&&\nabla_0\text{R}_{\nu j}=\nabla_j \text{R}_{\nu 0},&&\text{on $\Sigma_0$}.
\end{align}
Utilizing the second Binachi identity $\nabla^a\text{R}_{ab}=\frac{1}{2}\text{R}=0$ once more, we conclude that $\nabla_0\text{R}_{\nu j}$ vanishes and hence the initial data set of (\ref{boxRic}) is the trivial one.
Thus, the initial condition $\text{R}_{\nu j}(\tau=0)=0$ (\ref{initA}) propagates and the spacetime $(\mathcal{M}^{1+3},g)$ obtained from the solution of (\ref{boxA3}) verifies the EVE (\ref{EVE}).
\end{proof}
\begin{remark}\label{initfixed}
Given the frame $\{e_i\}^3_0$ initially on $\Sigma_0$, and once the components $(A_0)_{ij}(\tau=0)$ have been chosen,\footnote{The $(A_0)_{ij}$'s are not fixed by the Lorentz gauge condition; cf. Lemma \ref{gaugemap}. They correspond to the $\partial_0$ derivative of the frame components $e_i$, which we can freely assign initially.} then the initial data set of (\ref{boxA}) is fixed by condition (\ref{initA}), i.e., the EVE and Lorentz gauge on $\Sigma_0$. Indeed, the components $(A_\nu)_{ij}(\tau=0)$, $\nu,i,j=1,2,3$, are determined uniquely by the orthonormal frame $\{e_i\}^3_1$ on $(\Sigma_0,\overline{g})$. The $(A_i)_{0j}(\tau=0)$'s correspond to the components of second fundamental form $K_{ij}$ of $\Sigma_0$, which is given by the solution to the constraints (\ref{const}), included in (\ref{initA}). Moreover, the expression of (\ref{initA}) in terms of $A$, for $\nu,i=1,2,3$, reads (schematically)
\begin{align}\label{initfixed1}
e_0(A_\nu)_{0i}=e_\nu(A)+A^2\\
\notag e_0(A_0)_{ij}=\sum_{\mu=1}^3e_\mu(A_\mu)_{ij}+A^2&&\text{on $\Sigma_0$}.
\end{align}
Hence, the LHS functions are expressed in terms of already determined components.
Finally, the rest components $e_0(A_\nu)_{ij}(\tau=0)$, $\nu,i,j=1,2,3$, are fixed by the algebraic property of the Riemann tensor 
\begin{align*}
\text{R}_{0\nu ij}=\text{R}_{ij0\nu}\\
e_0(A_\nu)_{ij}-e_\nu(A_0)_{ij}-([A_\mu,A_\nu])_{ij}-{(A_{[\mu})_{\nu]}}^k(A_k)_{ij}=\\
e_i(A_j)_{0\nu}-e_j(A_i)_{0\nu}-([A_i,A_j])_{0\nu}-{(A_{[i})_{j]}}^k(A_k)_{0\nu}
\end{align*}
or 
\begin{align}\label{initfixed2}
e_0(A_\nu)_{ij}=e_\nu(A_0)_{ij}+e_i(A_j)_{0j}-e_j(A_i)_{0\nu}+A^2,\qquad\text{on $\Sigma_0$},
\end{align}
since all the terms in the RHS have been accounted for. 
Notice that the definition of Riemann curvature was implicitly used in deriving (\ref{boxRic}) upon commuting covariant derivatives.
\end{remark}

\end{document}